\let\originalleft\left
\let\originalright\right
\renewcommand{\left}{\mathopen{}\mathclose\bgroup\originalleft}
\renewcommand{\right}{\aftergroup\egroup\originalright}
\newcommand{\ra}{\rightarrow}
\renewcommand{\Pr}{\operatorname*{\textbf{\textup{Pr}}}}
\DeclareMathOperator*{\II}{\textbf{\textup{I}}}
\DeclareMathOperator*{\HH}{\textbf{\textup{H}}}
\DeclareMathOperator*{\EE}{\textbf{\textup{E}}}
\renewcommand{\geq}{\geqslant}
\renewcommand{\ge}{\geqslant}
\renewcommand{\le}{\leqslant}
\newcommand{\set}[1]{\{ #1 \}}
\newcommand{\lt}{\left}
\newcommand{\rt}{\right}
\newcommand{\congest}{\ensuremath{\mathsf{CONGEST}}}
\newtheorem*{rep@theorem}{\rep@title}
\newcommand{\newreptheorem}[2]{%
\newenvironment{rep#1}[1]{%
\def\rep@title{#2 \ref{##1}}%
\begin{rep@theorem}[restated]}%
{\end{rep@theorem}}}
\newcommand{\onlyShort}[1]{\ifthenelse{\boolean{short}}{#1}{}}
\newcommand{\onlyLong}[1]{\ifthenelse{\boolean{short}}{}{#1}}
\newtheorem{invariant}{Invariant}
\theoremstyle{plain}
\theoremstyle{plain}
\newtheorem{fact}{Fact}
\newcommand{\md}{\middle}
\newcommand{\ann}[1]{%
\text{\footnotesize(#1)}\quad}
\newcommand{\free}{f} 
\newcommand{\msg}[1]{\langle \textsf{#1} \rangle} 
\newcommand{\spreading}{\textsf{Spreading}} 
\newcommand{\Good}{\mathsf{Lrg}}
\title{Dynamic Maximal Matching in Clique Networks} %
\author{Minming Li}
\affiliation{
  \department{Department of Computer Science}
  \institution{City University of Hong Kong}
  \country{Hong Kong SAR}
}
\author{Peter Robinson}
\affiliation{
  \department{School of Computer \& Cyber Sciences}
  \institution{Augusta University}
  \country{Georgia, USA}
}
\author{Xianbin Zhu}
\affiliation{
  \department{Department of Computer Science}
  \institution{City University of Hong Kong}
  \country{Hong Kong SAR}
}
\keywords{distributed graph algorithm, dynamic network, maximal matching, randomized algorithm, lower bound, communication complexity} %
\begin{document}

\begin{abstract}
We consider the problem of computing a maximal matching with a distributed algorithm in the presence of batch-dynamic changes to the graph topology. 
We assume that a graph of $n$ nodes is vertex-partitioned among $k$ players that communicate via message passing. 
Our goal is to provide an efficient algorithm that quickly updates the matching even if an adversary determines batches of $\ell$ edge insertions or deletions. 

Assuming a link bandwidth of $O(\beta\log n)$ bits per round, for a parameter $\beta \ge 1$, we first show a lower bound of $\Omega\lt( \frac{\ell\,\log k}{\beta\,k^2\log n}\rt)$ rounds for recomputing a matching assuming an oblivious adversary who is unaware of the initial (random) vertex partition as well as the current state of the players, and a stronger lower bound of $\Omega(\frac{\ell}{\beta\,k\log n})$ rounds against an adaptive adversary, who may choose any balanced (but not necessarily random) vertex partition initially and who knows the current state of the players. 

We also present a randomized algorithm that has an initialization time of $O \lt( \lceil\frac{n}{\beta\,k}\rceil\log n \rt)$ rounds, while achieving an update time that that is independent of $n$:
In more detail, the update time is $O\lt( \lceil \frac{\ell}{\beta\,k} \rceil \log(\beta\,k) \rt)$ against an oblivious adversary, who must fix all updates in advance.
If we consider the stronger adaptive adversary, the update time becomes $O\lt( \lceil \frac{\ell}{\sqrt{\beta\,k}}\rceil \log(\beta\,k) \rt)$ rounds.
\end{abstract}
\maketitle

\addtocontents{toc}{\protect\setcounter{tocdepth}{2}}

\section{Introduction} \label{sec:intro}
Designing efficient algorithms for large graphs is crucial for many applications, ranging from transportation networks to protein-interaction networks in biology.
Real-world graph data sets are inherently dynamic in the sense that the nodes and edges between them may change over time. 
While much of the previous literature on dynamic updates in the centralized setting consider only a single update at a time, in a distributed network, each node runs an instance of a distributed algorithm and thus, if multiple updates are happening (almost) simultaneously perhaps affecting different parts of the network, it is desirable to process all of these updates in the same batch to reduce the overall time complexity. 
In this work, we study the fundamental problem of computing a maximal matching with a distributed algorithm under batch-dynamic edge updates. 
A \emph{matching} $M$ in a graph $G$ is a set of edges without common vertices, and we say that $M$ is a \emph{maximal matching} if no matching in $G$ is a proper superset of $M$.
We point out that, recently, batch-dynamic parallel algorithms have received significant attention, e.g., see \cite{acar2019parallel, acar2011parallelism, anderson2023parallel, dhulipala2020parallel}.

\medskip\noindent\textbf{The $k$-Clique Message Passing Model.}
We assume a communication network that is a clique of $k$ players $P_1,\dots,P_k$, each of whom is executing an instance of a distributed algorithm. 
The input is an $n$-vertex graph, where each vertex is labeled with a unique integer ID from $[n]$. 
For technical reasons that will become clear in Section~\ref{sec:dyn-mm-ran}, we assume that $k=\Omega\lt( \log^4n \rt)$, and we are mostly interested in the case where $k\ll n$, which captures the realistic setting that the amount of data exceeds the number of available processing units by far.

The players communicate via message passing over the bidirectional links of the clique network, and we assume that the computation proceeds in \emph{synchronous rounds}: 
In every round, a player performs some local computation, which may include private random coin flips, and then sends a message of at most $O(\beta\log n)$ bits over each one of its $k-1$ incident communication links, where the integer $\beta \ge 1$ is the \emph{bandwidth parameter} of the model; note that $\beta$ is not necessarily a constant.
All messages are guaranteed to be received by the end of the same round, and, in each round, every player can receive up to $O(\beta\,k\log n)$ bits in total from the other players. 
As we consider distributed algorithms, each player initially only knows part of the input.  
Here we consider \textbf{vertex-partitioning}, which means that each vertex $u$ is assigned to some player $P$.
For each vertex $v$ assigned to $P$, we say that $P$ \emph{hosts} vertex $v$. We use the notation $P(v)$ to denote the player hosting $v$, and use $V(P)$ to denote the set of vertices hosted by $P$. Note that vertices cannot migrate between players.

\medskip
\noindent\textbf{Adversary and Input Assignment.} 
The input of a player $P$ consists of the IDs of its hosted vertices $V(P)$ and, for each $v \in V(P)$, player $P$ knows the IDs of $v$'s neighbors and the players to which they were assigned. 
In this work, we consider two adversarial models:
The \emph{adaptive adversary} may fix any \emph{balanced vertex partitioning}, which requires each player to hold at least $\Omega\lt( \frac{n}{k} \rt)$ and at most $O\lt( \frac{n}{k}\log n \rt)$ vertices in total. 
On the other hand, when considering the \emph{oblivious adversary}, we assume \emph{random vertex partitioning}, which means that each vertex is uniformly at random assigned to any player.
We point out that randomly partitioning the vertices is a common assumption in real-world graph processing systems, e.g.,  see~\cite{facebookgiraph}.
In expectation, each player $P$ obtains a set of $\Theta(n/k)$ vertices, and thus random vertex partitioning yields a balanced partitioning with high probability.

\medskip\noindent\textbf{Dynamic Changes, Initialization, and Output.}
We model dynamical changes in this setting by considering a sequence of updates to the edges of the graph. 
Each update consists of a \emph{batch} of at most $\ell$ edge deletions and  insertions, chosen by the adversary as follows:
\begin{compactitem}
\item The adaptive adversary may observe the local memory of the players, which includes the currently computed matching, before choosing the next batch of $\ell$ edge updates. 
\item The oblivious adversary must choose the entire sequence of batch updates in advance. %
\end{compactitem}
Since we are assuming a distributed algorithm, each player only needs to output the part of the solution relevant to its hosted vertices. 
In more detail, each player $P$ needs to output every edge $\set{u,v} \in M$, for which it hosts an endpoint, e.g., $u \in V(P)$.

We allow the algorithm to perform some initial computation right after the vertex partitioning and before the very first batch of updates arrives. %
The number of rounds required for this part defines the \emph{initialization time} of the algorithm.
Then, the first batch of updates takes place instantaneously, and every player that hosts a vertex $v$ incident to an added or deleted edge learns about $v$'s new neighborhood. 
These changes yield a modified graph $G'$. 
The algorithm must react by updating $M$ to yield a maximal matching for $G'$; no further changes to the graph topologies occur until the algorithm has completed its update.
Subsequently, the next batch of updates arrives, and so forth.

\medskip
\noindent\textbf{Update Time.}
As we are considering randomized algorithms, we give probabilistic guarantees on the complexity measures. 
When saying that an event $\mathcal{E}$ holds \emph{with high probability (in $N$)}, for some variable $N$, we mean that $\mathcal{E}$ occurs with probability at least $1 - 1/N^{c}$, where $c$ is a positive constant.
If $N=n$, we simply omit $N$ and say that $\mathcal{E}$ happens with high probability.
In particular, an \emph{algorithm has an update time of $\ T$ with high probability in $N$}, if the algorithm outputs a maximal matching following a batch of $\ell$ updates in at most $T$ rounds w.h.p.\ in $N$, assuming that the local states of the players correspond to a maximal matching on the graph prior to these updates. 
Note that, for the oblivious adversary, the probability is taken over the random coin flips of the players, as well as the random vertex partitioning.
Since a random partition is very likely going to be balanced, it is sufficient to show a bound on the update time assuming a (roughly) balanced partition.

\medskip\noindent\textbf{Relationship to $k$-Machine Model and Congested Clique.}
When assuming the oblivious adversary and $\beta = 1$, our model corresponds to the \emph{$k$-machine model}~\cite{klauck2014distributed,hourani2013distributed}, which is motivated by vertex-centric graph processing frameworks such as Google Pregel~\cite{pregel}, Apache Giraph~\cite{facebookgiraph}, and GraphX~\cite{graphx}.
Thus, our results for the oblivious case directly extend to this setting.

For $k=n$ players and $\beta=1$, our model is equivalent to the congested clique~\cite{lotker} if each player obtains exactly one vertex. 
Thus our approach also yields a dynamic algorithm that are communication-efficient when updating a matching for the latter model.
We elaborate this point in more detail in Section~\ref{sec:dmm-cc}.

\medskip\noindent\textbf{Local Memory.} Analogously to the congested clique and the $k$-machine model, we do not impose a restriction on the local memory of the players.
This stands in contrast to the popular Massively Parallel Computation (MPC) model of \cite{karloff2010model}, where the memory of each machine (i.e., player) is limited, which in turn limits the amount of information any machine can receive in a single round.

We point out that our algorithm is nevertheless space-efficient, in the sense that each player $P_i$ uses at most $O\lt(\max\set{n,\beta k,|G[V(P_i)]|}\cdot\log n\rt)$ bits of local memory, where $|G[V(P_i)]|$ is the size of the subgraph induced by $P_i$'s hosted vertices and their incident edges.

\subsection{Our Contributions}

We present the first bounds for maximal matching in the $k$-clique message passing model under batch-dynamic updates, where each batch consists of at most $\ell$ edge additions or deletions.
We start by determining lower bounds on the necessary update time of any algorithm:

\newcommand{\firstlowerbound}{
Consider any randomized algorithm for maximal matching in the $k$-clique message passing model that initially constructs a maximal matching (w.h.p.), and is guaranteed to recompute a maximal matching with an update time of $T$ (w.h.p.), assuming $\ell$ edge-updates per batch, for any $\ell\le n/2k$ and $k\le \sqrt{n}/\log n$. Then, the following hold:
	\begin{compactenum}
 \item $T = \Omega\lt( \frac{\ell\log k}{\beta\,k^2\log n} \rt)$ rounds, against an oblivious adversary;
 \item $T = \Omega\lt( \frac{\ell}{\beta\,k\log n} \rt)$ rounds, against an adaptive adversary.
  \end{compactenum}
These results hold for any number of initialization rounds, and even if the players have access to shared randomness.   
}
\begin{theorem} \label{thm:mm_lb}
	\firstlowerbound
\end{theorem}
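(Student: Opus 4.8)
The plan is to reduce from a two-party communication complexity problem, routing the hard instance through a pair of players (or sets of players) and using an information-theoretic bottleneck argument on the total bandwidth crossing the cut. For both bounds I would use a variant of \textsf{Set-Disjointness} (or a direct ``pointer-chasing''/indexing-style lower bound), encoded into the edge updates: Alice holds a string that determines which of $\ell$ candidate edges are actually present in the batch, and Bob must, from the updated matching output by the algorithm, recover enough information about Alice's input to solve the communication problem. The key observation is that a maximal matching is ``sensitive'' to local changes --- deleting a matched edge forces the endpoints to either rematch or certify that all their neighbors are matched --- so the output of the algorithm at Bob's vertices leaks $\Omega(\ell)$ bits about which edges were inserted/deleted near them, provided the gadget is designed so that distinct inputs of Alice yield distinguishable maximal matchings.

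The key steps, in order: \textbf{(1)} Fix a gadget graph on $O(\ell)$ vertices (e.g.\ a disjoint union of paths of length $2$ or small ``augmenting-path'' gadgets), one per bit of Alice's input, such that the maximal-matching constraint pins down, bit by bit, whether the corresponding edge was present. \textbf{(2)} Argue that any $T$-round algorithm induces a communication protocol between the two sides of the vertex cut whose total communication is $O(T\beta k \log n)$ bits (each round, at most $O(\beta k \log n)$ bits cross any cut in the $k$-clique). \textbf{(3)} Lower-bound the required communication by $\Omega(\ell)$ (adaptive case) or $\Omega(\ell \log k / k)$ (oblivious case, where the weaker bound comes from the loss incurred because Alice cannot control the random vertex partition, so only a $\Theta(1/k)$ fraction of her gadget edges land ``usefully,'' and the $\log k$ factor reflects the entropy of where a gadget's endpoints were placed). \textbf{(4)} Divide through: $T = \Omega(\ell/(\beta k \log n))$ and $T = \Omega(\ell \log k/(\beta k^2 \log n))$ respectively. \textbf{(5)} Check the parameter regime $\ell \le n/2k$, $k \le \sqrt{n}/\log n$ ensures the gadgets fit inside a balanced partition and that the initialization phase cannot have pre-encoded the answer (the adversary picks Alice's input after initialization, and for the oblivious case the initial matching is independent of the future batch).

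For the \textbf{adaptive} bound the argument is cleaner: the adversary picks a balanced partition placing all of one ``column'' of gadget endpoints on players $P_1,\dots,P_{k/2}$ and the partners on the rest, so essentially all $\ell$ gadgets straddle the cut, and since the adversary sees the players' state it can always choose the batch that maximizes the information Bob must extract; this directly gives $\Omega(\ell)$ bits across a cut of bandwidth $O(\beta k \log n)$ per round. For the \textbf{oblivious} bound, the subtlety --- and the main obstacle --- is that the adversary must commit to the batch before seeing the random partition, so a constant fraction of each gadget's two endpoints may land on the same side of any fixed cut, contributing nothing; one must show that, in expectation over the partition, $\Omega(\ell/k)$ gadgets are ``split'' across a particular pair of players and, crucially, that the algorithm cannot have learned which ones during initialization without itself paying the communication (hence the $\log k$ gain from the entropy of the partition is what survives, yielding $\Omega(\ell \log k/k)$ total bits). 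Making this averaging argument rigorous --- pinning down exactly which cut to use, handling the union bound over all $\binom{k}{2}$ player pairs, and ruling out that shared randomness or a long initialization lets the players ``pre-sort'' the gadgets for free --- is the technically delicate part, and I expect most of the proof's length to go there.
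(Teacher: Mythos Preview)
Your gadget choice (disjoint length-$2$ paths) and the overall information-bottleneck strategy match the paper, but two concrete steps in your plan do not yield the stated bounds.

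\medskip
\textbf{Adaptive case.} Your step~(2) asserts that ``at most $O(\beta k\log n)$ bits cross any cut in the $k$-clique.'' This is false for the balanced cut you describe: a bipartition into two halves of $k/2$ players each has $\Theta(k^2)$ links crossing it, hence $\Theta(\beta k^2\log n)$ bits per round. Dividing $\Omega(\ell)$ by that bandwidth only gives $\Omega(\ell/(\beta k^2\log n))$, i.e.\ the oblivious bound. The paper instead isolates a \emph{single} player $P$. Using an explicit balanced ``tiling'' partition (each player hosts exactly one vertex of every path), some player $P$ hosts $\ge n/3k$ unmatched endpoints after initialization; since the adversary is adaptive it knows $P$ and the current matching, and deletes the matched edge of $\ell$ paths whose unmatched vertex sits at $P$ but whose matched edge does not touch $P$. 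Now $P$ must learn which $\ell$ of its $\ge n/3k$ paths were hit, which is $\Omega(\ell)$ bits, and the incoming bandwidth to a single player really is $O(\beta k\log n)$ per round.

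\medskip
\textbf{Oblivious case.} With only $\ell$ gadgets and one Alice-bit per gadget, a fixed player $P_1$ hosts $\Theta(\ell/k)$ of them in expectation and must learn $\Theta(\ell/k)$ bits---this misses the $\log k$ factor. Your attribution of the $\log k$ to ``the entropy of where a gadget's endpoints were placed'' cannot work, because every player knows the partition. The paper instead lays down \emph{all} $n/3$ paths as the initial graph. The (oblivious) adversary then samples each path independently with probability $\Theta(\ell/n)$ and, for each sampled path, deletes a uniformly random one of its two edges. Conditioned on $P_1$'s initial state, $P_1$ hosts $\Theta(n/k)$ unmatched endpoints whose other two path-vertices lie elsewhere, and the set $A_1$ of those paths whose \emph{matched} edge was deleted has entropy $\Theta\!\big(\tfrac{\ell}{k}\log\tfrac{n}{\ell}\big)$. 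The constraint $\ell\le n/2k$ then gives $\log(n/\ell)\ge\log k$, which is where the $\log k$ actually comes from: it is the per-index entropy of a sparse random subset of a large ground set, not entropy of the partition.

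\medskip
A smaller stylistic point: the paper does not route through a named two-party problem at all. It argues directly that $\HH[A_1\mid P_1]$ is large, that correct output forces $\HH[A_1\mid \mathsf{Out}_1,P_1]=O(1)$, and then bounds $\II[\Pi:A_1\mid P_1]$ by the bit-length of the transcript $\Pi$ received by $P_1$. Your Alice/Bob framing can be made equivalent to this once you fix the cut to isolate one player, but the reduction layer is unnecessary.
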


We point out that there is a na\"ive way to obtain an update time of $O(\lceil \ell/\beta k\rceil)$ rounds that applies to any problem in this setting, if we allow a prohibitively large initialization time of $O(m/\beta k) = O(n^2/\beta k)$ rounds, for an input graph with $m$ edges.
To see why this is the case, observe that every player can learn the entire input graph within $O(m/\beta k)$ rounds, considering that an edge fits into a message of $O(\log n)$ bits and a player can receive $O(\beta k)$ such messages per round.
Then, upon any batch of $\ell$ updates, each player simply sends all its updates to every other player, which, by using a simple information dissemination technique (see Lemma~\ref{lem:spreading}), takes $O(\ell/\beta k)$ rounds.
Since every player knew the entire graph from the initialization phase and has learned about all edge-changes, all players also know the updated graph and thus can locally compute the new solution without further communication. 
Apart from the slow initialization time, we emphasize that this na\"ive approach requires each player to have $\Omega(m)$ bits of local memory, which is unrealistic when considering large data sets. 

We present a randomized algorithm that has a significantly faster initialization time of $O \lt( \frac{n}{\beta k}\log n \rt)$ rounds, nearly matching the bounds of Theorem~\ref{thm:mm_lb} up to a factor of $(k\cdot \log(\beta n))$ in the case of an oblivious adversary.
For an adaptive adversary, the gap between lower and upper bound is at most a $(\sqrt{\beta\,k}\cdot\log n)$-factor.

\newcommand{\algthm}{
Suppose that the adversary may add or delete up to $\ell$ edges per batch.
There exists a randomized dynamic algorithm for maximal matching that has an initialization time of $O\lt( \lceil\frac{n}{\beta\,k}\rceil \log n \rt)$ rounds with high probability (in $n$), %
and each player $P_i$ uses at most $O\lt(\max\set{n,\beta k,|G[V(P_i)]|}\cdot\log n\rt)$ bits of local memory. %
The update time $T$ is bounded as follows:
\begin{compactenum}
\item For the oblivious adversary: $T = O\lt( \lceil \frac{\ell}{\beta\,k} \rceil \log(\beta\, k) \rt)$ rounds w.h.p.\ (in $k$), and $T = O\lt( \lceil \frac{\ell}{\beta\,k} \rceil \log(\beta\,n) \rt)$ rounds w.h.p.\ (in $n$). %
\item For the adaptive adversary: $T = O\lt( \lceil \frac{\ell}{\sqrt{\beta\,k}} \rceil \log(\beta\,k) \rt)$ rounds w.h.p.\ (in $n$). %
\end{compactenum}
}
\begin{theorem} \label{thm:mm_random}
\algthm
\end{theorem}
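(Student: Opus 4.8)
The plan is to treat the initialization and the per-batch update separately, with the update procedure built around one common idea: a batch of $\ell$ updates can only ``damage'' a maximal matching in a region of size $O(\ell)$, so repairing it reduces to solving maximal matching on an instance with $O(\ell)$ truly active vertices. For the initialization I would simulate the parallel random-greedy maximal matching procedure: every vertex draws a random priority, and in each phase every still-free edge that is a local minimum among its incident free edges is added to $M$, after which the matched endpoints and their incident edges are deleted; a standard argument shows $O(\log n)$ phases exhaust the graph w.h.p. In a single phase a vertex of degree $d$ needs only the minimum priority over its free neighbours, which is an aggregation of $O(1)$ words per vertex, so each phase is implementable in $O(\lceil n/\beta k\rceil)$ rounds via the load-balancing primitive of Lemma~\ref{lem:spreading}, even for an adversarially balanced partition; summing gives $O(\lceil n/\beta k\rceil\log n)$ rounds, and the space bound holds because no player stores more than its induced subgraph plus $O(n)$ bookkeeping words and $O(\beta k)$ words in transit.

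For the update, let $M'$ be the old matching with deleted edges removed (a non-maximal matching of the new graph $G'$), let $F'$ be the set of currently free vertices, and call a vertex a \emph{seed} if it is an endpoint of a deleted matching edge or of a newly inserted edge, so there are $O(\ell)$ seeds. The structural key is that every edge of $G'[F']$ is incident to a seed: an edge with two free non-seed endpoints would have been a free-free edge already before the batch, contradicting maximality. Hence a maximal matching of $G'[F']$ has $O(\ell)$ edges and can be produced by parallel random-greedy in which only the $O(\ell)$ seeds are ever active, each active seed repeatedly trying to grab a free neighbour. Against the oblivious adversary, random partitioning spreads the seeds so that at most $O(\lceil\ell/k\rceil)$ of them (w.h.p.\ in $k$; an extra $\log n$ factor w.h.p.\ in $n$) lie on any single player, so all seed IDs and priorities can be disseminated to all players in $O(\lceil\ell/\beta k\rceil)$ rounds by Lemma~\ref{lem:spreading}. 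A grab step is then cheap: a seed $v$ broadcasts its ID, each player inspects the free vertices it hosts -- it already knows their adjacency lists, hence whether any of them is a free neighbour of $v$ -- and reports a reservoir-sampled uniformly random free neighbour via aggregation, with conflicts among seeds aiming at the same vertex broken by priority. Each phase costs $O(\lceil\ell/\beta k\rceil)$ rounds, and since for each seed we may keep only $\poly(\beta k)$ candidate partners (lossless because only $O(\ell)$ grabs ever succeed), the random-greedy analysis gives $O(\log(\beta k))$ phases w.h.p.\ in $k$, respectively $O(\log(\beta n))$ w.h.p.\ in $n$, yielding the stated oblivious bound.

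Against the adaptive adversary the oblivious argument breaks in two ways: an adversarially balanced partition may concentrate the seeds, or all free neighbours of one high-degree seed, on a few players, and an adversary who sees $M$ may delete exactly the edges just added. The second issue is harmless for the per-batch analysis, since each grab uses fresh randomness the adversary cannot predict before committing the batch and the repair always touches only $O(\ell)$ seeds; the cost is entirely in congestion. For the first issue I would re-assign every seed and every relevant free vertex to a hash-determined player, rebalancing the $O(\ell)$-vertex instance to $O(\lceil\ell/k\rceil)$ per player, and then handle high-degree seeds with a threshold: a seed with at most $\sqrt{\beta k}$ relevant free neighbours is processed directly, while a seed with more can be matched immediately against any one of them and removed. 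Balancing the incident cut-edge load under this threshold is what replaces the $\beta k$ denominator by $\sqrt{\beta k}$, giving update time $O(\lceil\ell/\sqrt{\beta k}\rceil\log(\beta k))$ w.h.p.\ in $n$.

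The main obstacle is the congestion analysis of the repair step when a seed may have up to $n$ free neighbours: naively informing all neighbours of a newly free vertex is far too expensive, so the argument must exploit that every player can locally test adjacency to a broadcast seed ID and that only $O(\ell)$ grabs ever succeed, in order to keep each phase within the $O(\lceil\ell/\beta k\rceil)$ (resp.\ $O(\lceil\ell/\sqrt{\beta k}\rceil)$) budget. Pinning down the $\poly(\beta k)$ bound on the number of random-greedy phases, and making the $\sqrt{\beta k}$ balancing trade-off against the adaptive adversary precise, are the technically delicate parts.
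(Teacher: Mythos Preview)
Your structural lemma---every free--free edge after the update is incident to a seed---is exactly the paper's Lemma~\ref{lem:mm_vertices}(c), and the grab-a-random-free-neighbour idea for high-degree seeds is the paper's Phase~1 match-up process. The substantive difference, and the source of the gap you yourself flag, is that the paper does \emph{not} process all $\ell$ seeds at once: it splits the batch into mini-batches of size $\Gamma$ (with $\Gamma=\beta k$ for the oblivious adversary and $\Gamma=\lfloor\sqrt{\beta k}\rfloor$ for the adaptive one) and handles each mini-batch separately in $O(\log(\beta k))$ rounds. Mini-batching is precisely what pins down the $O(\log(\beta k))$ phase count: within a mini-batch there are at most $2\Gamma$ seeds, so a seed with $\ge 2\Gamma+1$ free neighbours in $V'$ avoids collision with constant probability, and a negative-correlation Chernoff bound (Lemma~\ref{lem:mm_phase2_1}) gives $O(\log\Gamma)$ iterations. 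Your proposal of keeping only $\poly(\beta k)$ candidates per seed does not close this: with $\ell$ simultaneously competing seeds a collision-free grab needs $\Omega(\ell)$ neighbours, so the natural bound is $O(\log\ell)$ phases---which for $\ell\gg\beta k$ overshoots the theorem---and truncating the candidate set may also break maximality for a seed whose full free neighbourhood gets consumed by other seeds' grabs.

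For the adaptive adversary the paper again relies on mini-batching rather than rehashing (recall that the model forbids vertex migration). The choice $\Gamma=\lfloor\sqrt{\beta k}\rfloor$ is made so that, after Phase~1, the remaining low-free-degree seeds induce a graph $G_L$ with at most $O(\Gamma)\cdot O(\Gamma)=O(\beta k)$ edges; this entire subgraph is shipped to a single player via $\spreading$ in $O(1)$ rounds and solved locally (Lemma~\ref{lem:mm_phase3_2}). The $\sqrt{\beta k}$ denominator then comes purely from the number $\lceil\ell/\Gamma\rceil$ of mini-batches. Your rehashing-plus-threshold outline targets the same trade-off, but even a virtual reassignment must forward each low-degree seed's up to $\sqrt{\beta k}$ incident free edges from their original hosts, and with $O(\ell)$ seeds this is $O(\ell\sqrt{\beta k})$ tokens---exactly the load the paper absorbs by keeping the per-mini-batch instance small enough to centralize in one shot.
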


In Section~\ref{sec:dmm-cc}, we show that our techniques lead to communication-efficient algorithms in the congested clique, in the sense that the message complexity is proportional to the number of updates.

\subsection{Related Work}
We start by describing prior work that considers dynamic updates in the $k$-machine model and the MPC model~\cite{karloff2010model}, as these are most closely related to our setting. 
While the $k$-machine model has been considered for a wide variety of problems such as PageRank approximation~\cite{pandurangan2021distributed} and graph clustering~\cite{bandyapadhyay2018near}, to the best of our knowledge, the only work studying dynamically-changing graphs in the $k$-machine model is the result of Gilbert and Lu~\cite{gilbert2020fast} on minimum spanning trees (MSTs). The primary technique used in their work is applying Euler Tours, which facilitates updating an MST in response to edge updates. 
The (static) maximal matching problem has not been studied in the $k$-machine model. 
While the approach of \cite{augustine2021efficient} provides a way to translate existing PRAM maximal matching algorithms to the $k$-machine model, their work crucially relies on a balanced \emph{edge partitioning}, and thus is not applicable to the $k$-clique message passing model, where we assume \emph{vertex partitioning}.  

The paper \cite{italiano2019dynamic} initiated dynamic problems in the MPC model and presented dynamic MPC algorithms for connectivity, minimum spanning tree and several matching problems. They also explored the relationship between classical dynamic algorithms and dynamic algorithms in the MPC model. Following the dynamic connectivity algorithms in the MPC model under a batch of updates \cite{dhulipala2020parallel},  Nowicki and Onak~\cite{nowicki2021dynamic} further studied dynamic MPC algorithms for minimum spanning forest, 2-edge connected components, and maximal matching with batch updates.

One may find that $k$-machine model and the MPC model share some similarities. We should notice that the major difference between the $k$-machine model and the MPC model is that, in the $k$-machine model, the total bandwidth is {$O(k^2 \log n)$} while the MPC model has a total bandwidth {$\tilde O(m)$} where $m$ is the number of edges of the input graph. In the MPC model, each machine can load all its input to other machines in one round, but in the $k$-machine model, this takes at least $\Omega(S/k)$ rounds where $S$ is an upper bound on the number of edges incident to the vertices hosted by each machine. In~\cite{nowicki2021dynamic}, their idea to deal with a batch of updates happening on a maximal matching is to reduce it to a maximal matching problem on a graph with vertex cover size at most $O(k)$, which can be solved by static algorithms of maximal matching in \cite{behnezhad2019exponentially}. 
Note that the total bandwidth in the $k$-clique message passing model is $O(k^2\log n)$ bits per round, and thus it is unclear how to efficiently run dynamic algorithms designed for the MPC model in our setting.

Since the pioneering work of
\cite{censor2016optimal}, several advancements have been made in the field of dynamic distributed models, e.g.,~\cite{bamberger2019local, censor2021fast, censor2021finding}.  Notably, \cite{censor2021fast} gave  dynamic algorithms for maximal matching in the {\congest} model using $O(1)$ amortized time complexity but the worst case time complexity is $O(n)$. 
Recently, \cite{foerster2021input} conducted  a study on dynamic problems in the {\congest} model and Congested Clique model. They demonstrated that for any problem, there exists a batch-dynamic congested clique algorithm using $O(\lceil \alpha/n \rceil)$ rounds and $O(m\log n)$ bits of auxiliary state when there are $\alpha$ edge label changes in a batch. Also~\cite{antaki2022near} gave distributed dynamic algorithms for some classical symmetry breaking problems and their techniques are majorly based on dynamic algorithms in the centralized settings.

\section{A Lower Bound for Batch-Dynamic Maximal Matching}\label{sec:dyn-mm}

In this section, we prove Theorem~\ref{thm:mm_lb}. 
We first present the proof for the oblivious adversary, which is technically more involved, due to the fact that the adversary neither controls the input partition nor has any knowledge of the current maximal matching when choosing the updates. 

For both cases, oblivious and adaptive adversaries, we consider the following lower bound graph $G$, formed by $q=n/3$ disjoint line segments $L_1,\dots,L_q$, and each $L_i$ consists of three nodes $x_{i,1}$, $x_{i,2}$, and $x_{i-3}$ that are connected by a path of length $2$; for simplicity, we assume that $n/3$ and $n/k$ are integers.
Figure~\ref{fig:lb1} gives an example of this construction.
If a vertex $u$ is part of the line segment $L_j$, we say that $j\in [q]$ is the \emph{index of $u$}.

\subsection{Oblivious Adversary} \label{sec:oblivious}

Initially, the algorithm computes some maximal matching on $G$. 
Thus, there exists a set $S$ of $q=n/3$ unmatched vertices in total, and $S$ contains exactly one unmatched vertex per line segment. %
For a player $P_i$, we define $U_i\subseteq [q]$ to be the set of corresponding line segment indices of the vertices in $S$ that are hosted by $P_i$.
Let $U_i' \subseteq U_i$ denote the indices of the unmatched vertices at $P_i$ such that, for every $j \in U_i'$, player $P_i$ hosts only the (single) unmatched vertex of line segment $L_j$, and neither of the other two matched vertices in $L_j$. 
We define the event {$\Good$ (``large'')} as
\begin{align}
	\Good = \bigwedge_{i=1}^{k} \lt( |U_i'| \ge |U_i| - \frac{12n}{k^{2}} \rt), \label{eq:good}
\end{align}
which captures the case when every set $U_i'$ is of sufficiently large size.
We now show that $\Good$ is very likely to occur.

\newcommand{\eventgood}{Event $\Good$ occurs with probability at least $1 - n^{-\Omega(1)}$.}
\begin{lemma} \label{lem:good}
 \eventgood
\end{lemma}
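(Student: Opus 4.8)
The plan is to bound, for each player $P_i$, the expected number of "bad" unmatched vertices — those unmatched vertices $v$ hosted by $P_i$ for which $P_i$ also hosts at least one of the two other vertices on $v$'s line segment — and then apply a concentration argument followed by a union bound over the $k$ players.

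First I would fix a player $P_i$ and condition on the set $U_i$ of line-segment indices of the unmatched vertices it hosts (equivalently, condition on which unmatched vertices land at $P_i$). For an index $j \in U_i$, the index is "bad" (i.e., $j \in U_i \setminus U_i'$) exactly when at least one of the two matched vertices of $L_j$ is also assigned to $P_i$. Under random vertex partitioning, each of those two vertices independently lands at $P_i$ with probability $1/k$, so $\Pr[j \text{ bad} \mid j \in U_i] \le 2/k$. Hence $\EE[\,|U_i| - |U_i'| \mid U_i\,] \le 2|U_i|/k$. Since a random partition is balanced w.h.p., $|U_i| = O(n/k)$ w.h.p., giving an expected bad count of $O(n/k^2)$, comfortably below the threshold $12n/k^2$.

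Next I would upgrade this expectation bound to a high-probability bound. The quantity $|U_i| - |U_i'|$ is a sum of indicator variables (one per $j \in U_i$) that, conditioned on $U_i$, are independent — the placements of the "partner" vertices of distinct line segments are independent — so a Chernoff bound applies. With expectation $\mu = O(n/k^2)$ and the assumption $k = \Omega(\log^4 n)$ (so $n/k^2$ is polynomially related to $n$ and in particular $\omega(\log n)$ whenever $n/k^2$ is not tiny; in the regime $n/k^2 = O(\log n)$ the threshold $12n/k^2$ already dominates a $\Theta(\log n)$ Chernoff deviation term), the probability that $|U_i| - |U_i'|$ exceeds $12n/k^2$ is at most $n^{-\Omega(1)}$. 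Finally, a union bound over the $k \le \sqrt{n}/\log n$ players shows $\Good$ fails with probability at most $k \cdot n^{-\Omega(1)} = n^{-\Omega(1)}$, which is exactly the claim.

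The main obstacle I anticipate is handling the conditioning cleanly: the set $U_i$ of unmatched vertices at $P_i$ is itself determined by the random partition and by the (possibly randomized) matching algorithm, so I need to argue that, conditioned on any fixing of which vertices are unmatched and of the assignment of those unmatched vertices, the assignments of the remaining (matched) vertices are still sufficiently random — independent and uniform over the players — for the Chernoff estimate to go through. This requires care because the algorithm's choice of matching may correlate with the partition; the cleanest route is to note that we may first reveal the partition of all vertices, then the event "$v$ is the unmatched vertex of $L_j$" depends only on that partition (and private randomness), and for the partner-vertex indicators it suffices to condition on the partition restricted to unmatched vertices and use that the partners of distinct bad-candidate segments are placed independently. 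I would also need to double-check the edge cases where $n/k^2$ is $O(\log n)$, where one argues directly that the slack in the threshold $12n/k^2$ versus the expectation $\le 2n/k^2$ (or the additive $\Theta(\log n)$ Chernoff term) still yields the bound, possibly invoking $k = \Omega(\log^4 n)$ to keep everything polynomial in $n$.
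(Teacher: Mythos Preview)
Your overall architecture — bound the number of ``bad'' indices per player, apply Chernoff, then union-bound over the $k$ players — is exactly the paper's, but your conditioning step introduces a real difficulty that the paper sidesteps entirely, and your proposed resolution of it does not quite work.

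The issue is the one you flag yourself: $U_i$ depends on the matching, and the matching is computed \emph{after} the partition, so the algorithm's choice of which vertex in $L_j$ to leave unmatched may encode information about where the other two vertices went. For a concrete bad case, imagine the algorithm leaves $x_{j,1}$ unmatched precisely when $x_{j,1}$ and $x_{j,2}$ land at the same player; then conditioning on ``$j\in U_i$ with $x_{j,1}$ unmatched'' forces $x_{j,2}$ to be at $P_i$ with probability~$1$, not $1/k$. Your suggested fix (``reveal the full partition first'') then leaves nothing random to apply Chernoff to, and ``condition only on the partition of unmatched vertices'' is not well-defined because which vertices are unmatched is itself a function of the full partition.

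The paper's fix is simple and worth internalizing: do not condition on $U_i$ at all. Instead, for each segment $L_j$ let $X_j^{(i)}$ indicate that $P_i$ hosts \emph{at least two} of its three vertices, and set $X^{(i)}=\sum_j X_j^{(i)}$. This is a function of the random partition only, the indicators over different segments are independent, and trivially $|U_i\setminus U_i'|\le X^{(i)}$ (any index in $U_i\setminus U_i'$ is a segment where $P_i$ hosts the unmatched vertex \emph{and} at least one matched vertex). One has $\Pr[X_j^{(i)}=1]\le \binom{3}{2}/k^2 + 1/k^3 = O(1/k^2)$, hence $\EE[X^{(i)}]=O(n/k^2)$, and a standard Chernoff bound plus the hypothesis $k\le \sqrt{n}/\log n$ (which forces $n/k^2\ge \log^2 n$, so your edge-case worry about $n/k^2=O(\log n)$ never arises) gives $\Pr[X^{(i)}\ge 12n/k^2]\le n^{-\Omega(1)}$. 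The union bound finishes it.
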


\begin{proof}
Consider some player $P_i$.
For every $j\in [q]$, define the indicator random variable $X_j^{(i)}$ to be $1$ if and only if $P_i$ hosts at least two vertices from line segment $L_j$, and let $X^{(i)}=\sum_{j=1}^{q}X_j^{(i)}$. 
Due the random vertex partitioning process, it follows that $\Pr\lt[ X_j^{(i)} \rt] = \binom{3}{2}\frac{1}{k^2} + \frac{1}{k^3} \le \frac{6}{k^2}$, , since $k = \Omega\lt( \log^3n \rt)$. 
Consequently, we have 
$\EE\lt[ X^{(i)} \rt] \le \frac{6n}{k^2}$.
As the line segments are disjoint, we know that $X_1^{(i)},\dots,X_q^{(i)}$ are independent.
Applying a standard Chernoff bound shows that $\Pr\lt[ X^{(i)}\ge \frac{12n}{k^2} \rt] \le 2^{-12n/k^2} \le \frac{1}{n^2}$, where we have used  that $k \le \sqrt{n}/\log n$, by the premise of theorem.
Taking a union bound over the $k$ players shows that 
\begin{align}
\bigwedge_{i=1}^{k} \lt( X^{(i)}\le \frac{12n}{k^{2}} \rt)  \label{eq:all_xs}
\end{align}
happens with high probability.
The lemma follows, because \eqref{eq:all_xs} implies that, for every player $P_i$,
\begin{align}
	|U_i'| \ge |U_i| - X^{(i)} \ge |U_i| - \frac{12n}{k^2}.\notag
\end{align}
\end{proof}

Clearly, there must exist a player $P_1$ that hosts a set of at least $q/k = n/3k$ unmatched vertices after computing the initial matching, i.e., $|U_1| \ge n/3k$, and this invariant remains true even when conditioning on event $\Good$. 
For the remainder of the proof of Theorem~\ref{thm:mm_lb}, we focus on the amount of information learned by $P_1$ during the course of the update procedure.
Define $\Good_{U_1'}$ to be the event that $U_1'$ is ``large'', formally,
\begin{align}
	|U_1'| \ge \frac{n}{3k} - \frac{12n}{k^2} \ge \frac{n}{6k}.
	\label{eq:good1}
\end{align}
Note that the second inequality holds for sufficiently large $n$, since $k=\omega(1)$ and $k=o(n)$ by assumption.
Clearly, $\Good_{U_1'}$ is implied by event $\Good$, and hence we immediately obtain the following from Lemma~\ref{lem:good}:
\begin{lemma} \label{lem:good1}
 Event $\Good_{U_1'}$ (see Ineq.~\eqref{eq:good1}) occurs with probability at least $1 - n^{-\Omega(1)}.$
\end{lemma}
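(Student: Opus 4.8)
The plan is to obtain this as an immediate corollary of Lemma~\ref{lem:good}, by observing that $\Good_{U_1'}$ is just a weakening of the global event $\Good$ restricted to the single player $P_1$. First I would pin down the existence of $P_1$: the initial maximal matching leaves exactly one unmatched vertex in each of the $q = n/3$ line segments, so there are $n/3$ unmatched vertices in total, and these are distributed among the $k$ players. By averaging (pigeonhole), some player --- call it $P_1$ --- hosts at least $q/k = n/(3k)$ of them, i.e., $|U_1| \ge n/(3k)$. The key point is that this lower bound depends only on \emph{how} the unmatched vertices are spread across players, not on which vertices the algorithm chose to leave unmatched, so it continues to hold after conditioning on $\Good$.

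Next I would invoke Lemma~\ref{lem:good}: with probability at least $1 - n^{-\Omega(1)}$, event $\Good$ holds, and by definition~\eqref{eq:good} this gives $|U_i'| \ge |U_i| - 12n/k^2$ simultaneously for every player, in particular for $P_1$. Combining with $|U_1| \ge n/(3k)$ yields $|U_1'| \ge n/(3k) - 12n/k^2$ on the event $\Good$.

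Finally I would check the second inequality in~\eqref{eq:good1}, namely $n/(3k) - 12n/k^2 \ge n/(6k)$; after dividing by $n/(6k)$ this is equivalent to $2 - 72/k \ge 1$, i.e. $k \ge 72$, which holds for all sufficiently large $n$ under the standing assumption $k = \omega(1)$ (indeed $k = \Omega(\log^4 n)$). Hence $\Good \subseteq \Good_{U_1'}$, and the probability bound of Lemma~\ref{lem:good} transfers verbatim. There is essentially no real obstacle here; the only subtlety worth stating carefully is that the pigeonhole bound $|U_1| \ge n/(3k)$ should be established unconditionally, so that it remains valid on the sub-event $\Good$ rather than only in expectation.
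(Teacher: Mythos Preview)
Your proposal is correct and matches the paper's approach exactly: the paper likewise fixes $P_1$ by pigeonhole so that $|U_1|\ge n/(3k)$ (noting this remains true under conditioning on $\Good$), observes that $\Good$ implies $\Good_{U_1'}$ via \eqref{eq:good}, and invokes Lemma~\ref{lem:good}. The only difference is that you spell out the arithmetic for the second inequality in~\eqref{eq:good1}, which the paper leaves implicit.
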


\noindent\textbf{Adversarial Strategy.} The adversary samples a set of indices $I$, by independently including each $i\in [q]$ with probability $\frac{\ell}{n}$, i.e., $|I|=\frac{\ell\,q}{n}=\ell/3$ in expectation.
If the resulting set $I$ has a size greater than $\ell$, the adversary simply restarts the sampling process until $|I|\le \ell$. 
Then, for each $i\in I$, the adversary removes one of the two edges of segment $L_i$, chosen independently and uniformly at random.

Let $A \subseteq I$ be the set of \emph{affected indices}, which contains every index $i\in I$, for which the adversary deletes the matched edge $e$ of $L_i$, i.e., $e$ is \emph{not} incident to the unmatched vertex in $L_i$.
We define $A_1 = A \cap U_1'$, i.e., $A_1$ is the subset of indices $i$, such that:
\begin{compactenum}
\item the only vertex of $L_i$ hosted by player $P_1$ is the unmatched vertex $v \in L_i$, and 
\item the adversary has deleted the matched edge of $L_i$, which is not incident to $v$.
\end{compactenum}
Let $\mathsf{Bal}_{A_1}$ be the event that $|A_1|$ is ``balanced'' with respect to $U_1'$, which is true if 
\begin{align}
|A_1| \in \lt[ \frac{\ell}{4n}|U_1'|, \frac{4\ell}{n}|U_1'|\rt]. \label{eq:a1_bal}
\end{align}

\newcommand{\goodu}{For a given $U_1'$ and conditioned on $\Good_{U_1'}$, event $\mathsf{Bal}_{A_1}$ occurs, i.e., Eq.~\ref{eq:a1_bal} holds, with probability at least $1 - n^{-\Omega(1)}$.}
\begin{lemma} \label{lem:a1_bal}
\goodu
\end{lemma}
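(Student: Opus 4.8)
The plan is a routine concentration argument, and the only point requiring a little care is the rejection (restart) step built into the adversary's sampling. Fix any realization of $U_1'$ with $|U_1'| \ge \tfrac{n}{6k}$, which is exactly what conditioning on $\Good_{U_1'}$ guarantees via Inequality~\eqref{eq:good1}. Since the oblivious adversary commits to its entire update sequence in advance, the random choices it makes — the sampled set $I$, and, for each $i\in I$, the coin deciding which of the two edges of $L_i$ to delete — are independent of the random vertex partitioning and of the algorithm's coins, hence independent of $U_1'$. So from here on all probabilities are over the adversary's internal randomness only, and $U_1'$ is treated as a fixed set; it suffices to show $\Pr[\mathsf{Bal}_{A_1}] \ge 1 - n^{-\Omega(1)}$ for every such $U_1'$.

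First I would analyze the idealized sampling \emph{without} the restart step: each $i\in[q]$ lands in $I$ independently with probability $\ell/n$, and, conditioned on $i\in I$, the adversary deletes the matched edge of $L_i$ — equivalently, the edge not incident to the free endpoint, since on a length-$2$ path a maximal matching uses precisely the edge avoiding the unmatched vertex — with probability exactly $\tfrac12$, independently over $i$. Thus, for each $i\in U_1'$, the variable $Y_i$ which is $1$ if $i\in A_1$ and $0$ otherwise is Bernoulli with mean $\tfrac{\ell}{2n}$, and the $Y_i$ for $i\in U_1'$ are mutually independent. Writing $\mu := \EE\bigl[\sum_{i\in U_1'}Y_i\bigr] = \tfrac{\ell}{2n}\,|U_1'|$, the target interval in~\eqref{eq:a1_bal} is exactly $[\mu/2,\,8\mu]$, so standard multiplicative Chernoff bounds give $\Pr[|A_1|<\mu/2] + \Pr[|A_1|>8\mu] = e^{-\Omega(\mu)}$. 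For this to be $n^{-\Omega(1)}$ I need $\mu = \Omega(\log n)$; since $|U_1'|\ge \tfrac{n}{6k}$ we have $\mu \ge \tfrac{\ell}{12k}$, so it is enough that $\ell = \Omega(k\log n)$. This is without loss of generality: it is consistent with the premise $\ell\le n/2k$ because $k\le\sqrt n/\log n$, and whenever $\ell$ falls below this threshold the lower bound of Theorem~\ref{thm:mm_lb} is below one round and there is nothing to prove, so I would dispatch that case separately at the very start of the proof of the theorem. In particular $\ell = \Omega(\log n)$ as well.

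Finally I would remove the idealization. Let $\mathcal{R}$ be the event $|I|\le\ell$, so that the adversary's actual distribution over $(I,A)$ is the idealized one conditioned on $\mathcal{R}$. Now $|I|$ is a sum of $q=n/3$ independent Bernoulli$(\ell/n)$ variables with mean $\ell/3$, so a Chernoff bound yields $\Pr[\neg\mathcal{R}] = \Pr[|I|\ge\ell] \le e^{-\Omega(\ell)} = n^{-\Omega(1)}$, using $\ell=\Omega(\log n)$. Hence
\[
\Pr[\neg\mathsf{Bal}_{A_1}\mid\mathcal{R}] \;=\; \frac{\Pr[\neg\mathsf{Bal}_{A_1}\wedge\mathcal{R}]}{\Pr[\mathcal{R}]} \;\le\; \frac{\Pr[\neg\mathsf{Bal}_{A_1}]}{1-\Pr[\neg\mathcal{R}]} \;=\; n^{-\Omega(1)},
\]
where the numerator is bounded by the idealized estimate of the previous paragraph. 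This proves the lemma. The one genuinely non-mechanical step is this last conditioning, and it is handled simply by noting that $\mathcal{R}$ is itself a high-probability event; everything else is bookkeeping with Chernoff bounds and tracking the parameter $\mu$. The parameter regime (needing $\ell$ large enough relative to $k\log n$) is the thing most worth stating carefully, but it is harmless in the range where the theorem is non-vacuous.
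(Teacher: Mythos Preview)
Your proposal is correct and follows essentially the same route as the paper: fix $U_1'$, observe that the adversary's randomness is independent of the partition so the indicators $Y_i=\mathbf{1}_{i\in A_1}$ are i.i.d.\ Bernoulli$(\ell/2n)$ over $i\in U_1'$, compute the mean $\mu=\tfrac{\ell}{2n}|U_1'|\ge \tfrac{\ell}{12k}$, and apply multiplicative Chernoff to both tails, invoking $\ell\ge k\log n$ to make the failure probability $n^{-\Omega(1)}$. Your treatment is in fact slightly more careful than the paper's in two respects: you explicitly justify the assumption $\ell=\Omega(k\log n)$ (the paper simply writes ``since $\ell\ge k\log n$'' without comment), and you handle the restart step via the conditioning on $\mathcal{R}=\{|I|\le\ell\}$, which the paper's computation in~\eqref{eq:prob_a} silently ignores.
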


\begin{proof}
Suppose that $U_1'\!=\! u_1'$ and {$|u_1'|\ge n/(6k)$}, i.e., $\Good_{U_1'}$ holds.
Since the randomness used by the adversary for sampling $I$ and choosing the deleted edges (which determines $A$) is independent of the random vertex partitioning, it follows that 
\begin{align}
	\Pr\lt[ i \in A \ \md|\ i \in u_1',U_1'\!=\! u_1', \Good_{u_1} \rt] 
  = \Pr\lt[ i \in A \rt] 
  = \frac{1}{2}\Pr\lt[ i \in I \rt] 
  = \frac{\ell}{2n}, \label{eq:prob_a}
\end{align}
Consider the indicator random variable $Y_i = \mathbf{1}_{i \in A \cap u_1' }$ and note that $|A_1| = \sum_{i=1}^{q}Y_i$.
We have
\begin{align}
	\EE\lt[ |A_1|\ \md|\ U_1' \!=\! u_1', \Good_{U_1'} \rt] 
	&= \sum_{i=1}^{q} \Pr\lt[ Y_i \ \md|\ U_1' \!=\! u_1',\Good_{U_1'} \rt] \notag\\ 
	&= 
	\begin{multlined}[t]
  \sum_{i=1}^{q} \Pr\lt[ i \in A \ \md|\ i\in u_1',U_1' \!=\! u_1',\Good_{U_1'} \rt] \\
  		\cdot \Pr\lt[ i\in u_1' \ \md|\ U_1' \!=\! u_1',\Good_{U_1'} \rt]. 
  \end{multlined}
      \notag\\ 
	\intertext{Since $\Pr\lt[ i\in u_1' \ \md|\ U_1' \!=\! u_1',\Good_{U_1'} \rt]=0$ if $i\notin u_1'$, the right-hand side simplifies to}
	&= \sum_{i\in u_1'} \Pr\lt[ i \in A \ \md|\ i\in u_1',U_1' \!=\! u_1',\Good_{U_1'} \rt]  \notag\\ 
	\ann{by Eq.~\eqref{eq:prob_a}}
  &= \frac{\ell}{2n}\cdot|u_1'|. \label{eq:a_exp}
\end{align}
Note that, conditioned on events $U_1' \!=\! u_1'$ and $\Good_{U_1}$, variables $Y_1,\dots,Y_{q}$ remain independent.
By applying a standard Chernoff bound (see Theorem~4.4 in \cite{Mitzenmacher2005ProbabilityAC}), it follows that 
\begin{align}
	\Pr\lt[ |A_1|\ge \frac{\ell}{n}|u_1'|\ \md|\ U_1' \!=\! u_1',\Good_{U_1'} \rt] 
	&\le
	\exp \lt( -\frac{\ell}{2n}\frac{|u_1'|}{3} \rt)\notag\\ 
	\ann{since $\Good_{U_1}$ implies $|u_1'|\ge\frac{n}{6k}$}
	&\le
	\exp \lt( -\frac{\ell}{36k} \rt)\notag\\ 
	\ann{since $\ell\ge k\log n$}
	&\le
	n^{-\Omega(1)}.\notag
\end{align}
Similarly, we can bound the lower tail by Theorem~4.5 in \cite{Mitzenmacher2005ProbabilityAC}, which yields
\begin{align}
	\Pr\lt[ |A_1|\le \frac{\ell}{4n}|u_1'|\ \md|\ U_1' \!=\! u_1',\Good_{U_1'} \rt] 
	&\le
	\exp \lt( -\frac{\ell}{8n}\frac{|u_1'|}{2} \rt)\notag\\ 
	\ann{since $\Good_{U_1'}$ implies $|u_1'|\ge\frac{n}{6k}$}
	&\le
	\exp \lt( -\frac{\ell}{16k} \rt)\notag\\ 
	\ann{since $\ell\ge k\log n$}
	&\le
	n^{-\Omega(1)}.\notag
\end{align}
Therefore, $\mathsf{Bal}_{A_1}$ does not occur with probability at most $2n^{-\Omega(1)}\le n^{-\Omega(1)}$.
\end{proof}

\noindent\textbf{Notation.} We slightly abuse notation and write ``$P_i$'' to refer to the random variable that represents the local state of player $P_i$ after the adversary removes the edges but before any update-computation of the algorithm has taken place; we assume that the local state also includes the public random bits. 
We frequently compute the conditional entropy of a random variable $X$ with respect to a random variable $Y$ and some event $Z\!=\! z$. To improve readability, we use the notation $\HH\lt[ X \ \md|\ Y, z \rt]$ to mean $\HH\lt[ X\ \md|\ Y, Z\!=\! z \rt]$, i.e., the conditional entropy of $X$ with respect to random variable $Y$ conditioned on event $Z\!=\! z$.\footnote{
We provide some basic definitions from information theory in Appendix~\ref{sec:tools}.}

\begin{lemma} \label{lem:a1}
$\displaystyle
\HH\lt[ A_1 \ \md|\ {|A_1| \!=\! a,P_1\!=\! p_1, U_1' \!=\!u_1',\mathsf{Bal}_{A_1}},\Good_{U_1'} \rt]
=
\Omega\lt( \frac{\ell}{k}\cdot\log_2\lt( \frac{n}{\ell} \rt)  \rt)
$
\end{lemma}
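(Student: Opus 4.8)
The plan is the following. After fixing the state $p_1$ of $P_1$, the partition data $u_1'$, and the size $a$ of $A_1$, I want to argue that $A_1$ is still (essentially) uniformly distributed over all $a$-element subsets of a large residual ground set, so that the conditional entropy is just the logarithm of a large binomial coefficient. The starting observation is that $p_1$ tells $P_1$ almost nothing about $A_1$: for an index $i\in u_1'$, player $P_1$ hosts only the unique unmatched vertex $v$ of the segment $L_i$, and when $i\in A_1$ the adversary deletes the \emph{matched} edge of $L_i$, which is not incident to $v$; since $p_1$ is recorded after the edge removals but before any update-round communication, the sole update-related fact $p_1$ can encode about such an $i$ is whether $v$ became isolated, i.e.\ whether $i\in I\setminus A$ — it cannot distinguish $i\in A_1$ from $i\notin I$. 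Let $T=T(p_1)\subseteq u_1'$ collect the indices $P_1$ does \emph{not} observe as isolated and put $W:=|T|$; then $A_1\subseteq T$.

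Next I would show residual uniformity. The randomness generating $I$, and for each $i\in I$ the choice of which of the two edges of $L_i$ is deleted (hence $A$), is independent of the vertex partition and of the players' initialization states and coins, and it is exchangeable in the $q=n/3$ segment indices — a property preserved by the ``resample until $|I|\le\ell$'' rule, since $|I|$ is a symmetric function of the per-index indicators. Hence, conditioned on $p_1$ (which fixes $T$) and on $|A_1|=a$ (conditioning additionally on $\mathsf{Bal}_{A_1}$ and $\Good_{U_1'}$ only restricts $a$ and $|u_1'|$ to their stated ranges), the set $A_1$ is uniform over the $\binom{W}{a}$ subsets of $T$ of size $a$, so the conditional entropy in the lemma equals $\log_2\binom{W}{a}$. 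To bound $W$, write $W=|u_1'|-|(I\setminus A)\cap u_1'|$; each $i\in u_1'$ lies in $I\setminus A$ with probability at most $\tfrac{\ell}{2n}\le\tfrac1{4k}$ (using $\ell\le n/2k$), and since $\Good_{U_1'}$ gives $|u_1'|\ge n/6k=\omega(\log n)$ (because $k\le\sqrt n/\log n$), a Chernoff bound shows $W\ge|u_1'|/2$ except on a set of admissible $p_1$ of conditional probability $n^{-\Omega(1)}$; I restrict to those (or fold ``$W\ge|u_1'|/2$'' in as one more high-probability balanced event).

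Finally, the estimate. Using $\binom{W}{a}\ge(W/a)^{a}$ together with $a\le\tfrac{4\ell}{n}|u_1'|$ from $\mathsf{Bal}_{A_1}$ — whence $W/a\ge\tfrac{|u_1'|/2}{(4\ell/n)|u_1'|}=\tfrac{n}{8\ell}$ — and $a\ge\tfrac{\ell}{4n}|u_1'|\ge\tfrac{\ell}{24k}$ from $\mathsf{Bal}_{A_1}$ and $\Good_{U_1'}$, I get
\[
\log_2\binom{W}{a}\ \ge\ a\log_2\frac{n}{8\ell}\ \ge\ \frac{\ell}{24k}\log_2\frac{n}{8\ell}\ =\ \Omega\!\lt(\frac{\ell}{k}\log_2\frac{n}{\ell}\rt),
\]
where the last equality uses that $\ell\le n/(2k)$ with $k=\Omega(\log^4 n)$ forces $n/\ell\to\infty$, so $\log_2\tfrac{n}{8\ell}=\Omega(\log_2\tfrac n\ell)$.

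The hard part will be the first two steps: making rigorous the claim that $p_1$ ``knows'' nothing about $A_1$ beyond the identity of $T$, and that the residual law of $A_1$ is \emph{exactly} uniform on the $a$-subsets of $T$. This requires cleanly separating the update-independent component of $p_1$ (initialization state and shared/private coins) from the observed neighbourhood changes, and checking that the ``resample until $|I|\le\ell$'' step, though it correlates the indices, leaves them exchangeable so the uniform conditional law survives. A secondary nuisance is that $W$ is a deterministic function of $p_1$, so ``$W$ is large'' must be quantified over all but a $n^{-\Omega(1)}$-fraction of the admissible states rather than over all of them — which is why a further balanced event (or a restriction at the point of use) is convenient.
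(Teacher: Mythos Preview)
Your approach is essentially the paper's: argue that, under the stated conditioning, $A_1$ is uniform over the $a$-subsets of a large ground set, then lower-bound $\log_2\binom{\cdot}{a}$ using the ranges on $a$ and $|u_1'|$ supplied by $\mathsf{Bal}_{A_1}$ and $\Good_{U_1'}$. The difference is that the paper takes the ground set to be all of $u_1'$ and asserts $p(a_1)=1/\binom{|u_1'|}{a}$, effectively treating the conditioning on $P_1=p_1$ as irrelevant to $A_1$. You are more careful here, and rightly so: for $i\in u_1'$, player $P_1$ hosts only the unmatched endpoint of $L_i$, so after the deletions $p_1$ does reveal whether $i\in I\setminus A$ (the hosted vertex becomes isolated) while it cannot distinguish $i\in A$ from $i\notin I$; hence the correct residual ground set is your $T=u_1'\setminus((I\setminus A)\cap u_1')$, and the exchangeability argument you sketch gives uniformity over $\binom{T}{a}$. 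Your subsequent bound $|T|\ge|u_1'|/2$ with high probability then recovers the same asymptotic. In short, your route is the paper's route with a gap patched.

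One caveat you already flag: the lemma as stated fixes a specific $p_1$, whereas your bound on $W=|T|$ holds only for all but an $n^{-\Omega(1)}$-fraction of states $p_1$. Strictly speaking this means you prove the inequality for typical $p_1$ rather than every $p_1$; but since the lemma is only ever used by averaging over $p_1$ (see the chain of inequalities culminating in \eqref{eq:mm_lb1}), folding ``$W\ge|u_1'|/2$'' in as one more high-probability event, exactly as you propose, suffices for the downstream argument.
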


\begin{proof}

Define $p(a_1) = \Pr\lt[ A_1 \!=\! a_1 \ \md|\ {a,p_1,u_1',\mathsf{Bal}_{A_1}},\Good_{U_1'} \rt]$. 
To obtain a bound on $p(a_1)$ that holds for any given $a_1$, we need to estimate the number of possible choices for $A_1$ under the given conditioning. 
Since $A_1\subseteq U_1'$ and $|A_1|\!=\! a$, there are $\binom{|u_1'|}{a}$ possible ways for choosing $A_1$. 
Recall that the adversary chooses the indices in $A$ (and hence also $A_1$) independently from the vertex partitioning, which tells us that $A_1$ has uniform probability over the $\binom{|u_1'|}{a}$ possible choices. 
We have 
\begin{align}
\log_2 \frac{1}{p(a_1)} 
	&\ge
\log_2\binom{|u_1'|}{a} \notag\\ 
	\ann{since $\binom{n}{k}\ge\lt(\frac{n}{k}\rt)^{k}$}
	&\ge a\cdot\log_2\lt( \frac{|u_1'|}{a} \rt)\notag\\ 
  \ann{since $\mathsf{Bal}_{A_1}$ implies $a\in[(\ell/4n)|u_1'|,(\ell/n)|u_1|']$}
	&\ge \frac{\ell}{4n}|u_1'|\cdot\log_2\lt( \frac{n}{\ell} \rt) \notag\\ 
	\ann{since $\Good_{U_1'}$ implies $|u_1'|\ge6n/k$}
	&= \Omega\lt( \frac{\ell}{k}\cdot\log_2\lt( \frac{n}{\ell} \rt)  \rt) \label{eq:H_a1}
\end{align}
It follows that
\begin{align}
\HH\lt[ A_1 \ \md|\ 
        {a,p_1,u_1',\mathsf{Bal}_{A_1},\Good_{U_1'}}
    \rt]
	&= \sum_{a_1}
  		p(a_1)
			\cdot
      \log_2 \frac{1}{p(a_1)}\notag\\ 
	\ann{by \eqref{eq:H_a1}}
	&\ge \Omega\lt( \frac{\ell}{k}\log\lt( \frac{n}{\ell} \rt) \rt)  
  		\sum_{a_1} p(a_1) %
	= \Omega\lt( \frac{\ell}{k}\log\lt( \frac{n}{\ell} \rt) \rt)  
  		\notag
\end{align}
\end{proof}
\enlargethispage{2\baselineskip}

\begin{figure}[t]
  \centering
\begin{subfigure}[t]{0.45\textwidth}  
  \centering
	\includegraphics[scale=0.70]{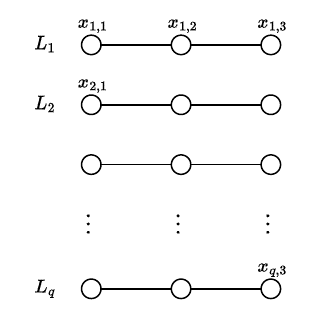}
  \caption{
		The input graph consists of $q=n/3$ line segments, each of length $2$. 
  }
  \label{fig:lb1}
\end{subfigure}
\hfill
\begin{subfigure}[t]{0.45\textwidth}
  \centering 
	{\includegraphics[scale=0.70]{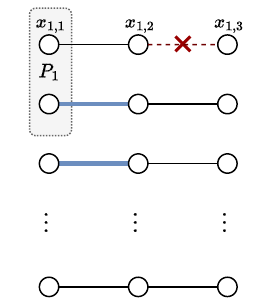}}
  \caption{
		The thick blue edges show the maximal matching after removing the (matched) edge $\set{x_{1,2},x_{1,3}}$.
		Player $P_1$ hosts $x_{1,1}$ and $x_{2,1}$.
  }
  \label{fig:lb2}
\end{subfigure}
\caption{The lower bound graph construction. 
    After the adversary deletes the matched edge $\set{x_{1,2}, x_{1,3}}$ in the graph shown in Figure~\ref{fig:lb2}, player $P_1$, who is unaware of this update, must learn about it from another player to know that it needs to output $\set{x_{1,1},x_{1,2}}$ as part of the matching.
}
\label{fig:lb}
\end{figure}

\vspace{-3em}
We now combine Lemmas~\ref{lem:good1}, \ref{lem:a1_bal}, and \ref{lem:a1} to bound the initial uncertainty about the set $A_1$ from the point of view of player $P_1$.
For an event $\mathcal{E}$, we use $\mathbf{1}_{\mathcal{E}}$ to denote the indicator random variable that is $1$ if and only if $\mathcal{E}$ occurs.
Since conditioning on a random variable cannot increase the entropy, we have
\begin{align}
	\HH\lt[ A_1 \ \md|\ P_1  \rt] 
	&\ge
    \HH\lt[ A_1 \ \md|\ 
        P_1, U_1', \mathbf{1}_{\Good_{U_1'}} 
    \rt] 	\notag\\ 
	&\ge
    \Pr\lt[ \Good_{U_1'} \rt] 
    \cdot
    \HH\lt[ A_1 \ \md|\ 
        P_1, U_1', \Good_{U_1'} 
    \rt] 	\notag\\ 
  \ann{by Lem.~\ref{lem:good1}}
	&\ge
    \tfrac{1}{2}
    \cdot
    \HH\lt[ A_1 \ \md|\ 
        P_1, U_1', \Good_{U_1'} 
    \rt] 	\notag\\ 
	&=
    \tfrac{1}{2}
    \cdot
    \sum_{u_1} 
    \Pr\lt[ U_1 \!=\! u_1'\ \md|\ \Good_{U_1'} \rt] 
		\cdot
    \HH\lt[ A_1 \ \md|\ 
        P_1, {U_1'\!=\! u_1',\Good_{U_1'}}
    \rt] 	\notag\\ 
	&\ge
    \tfrac{1}{2}
    \sum_{u_1'}%
    \Pr\lt[ u_1'\ \md|\ \Good_{U_1'} \rt] 
		\cdot
    \HH\lt[ A_1 \ \md|\ 
        P_1, \mathbf{1}_{\mathsf{Bal}_{A_1}},{u_1'}, \Good_{U_1'}
    \rt] 	\notag\\ 
	&\ge
		\begin{multlined}[t]
    \tfrac{1}{2}
    \sum_{u_1'}%
		\Big(
    \Pr\lt[ u_1'\ \md|\ \Good_{U_1'} \rt] 
		\cdot
		\Pr\lt[ \mathsf{Bal}_{A_1}\ \md|\ u_1', \Good_{U_1'} \rt] \\
		\phantom{---------}
		\cdot
    \HH\lt[ A_1 \ \md|\ 
        P_1, {\mathsf{Bal}_{A_1},u_1', \Good_{U_1'}}
    \rt] 	
		\Big)
    \end{multlined}
    \notag\\ 
	\ann{by Lem.~\ref{lem:a1_bal}}
	&\ge
		\begin{multlined}[t]
    \tfrac{1}{2}
    \cdot
    \tfrac{1}{2}
    \sum_{u_1'}%
    \Pr\lt[ u_1'\ \md|\ \Good_{U_1'} \rt] 
		\cdot
    \HH\lt[ A_1 \ \md|\ 
        P_1, {\mathsf{Bal}_{A_1},u_1', \Good_{U_1'}}
    \rt] 
    \end{multlined}
    \notag\\ 
	&\ge
		\begin{multlined}[t]
    \tfrac{1}{4}
    \sum_{u_1'}%
    \Pr\lt[ u_1'\ \md|\ \Good_{U_1'} \rt] 
		\cdot
    \HH\lt[ A_1 \ \md|\ 
        P_1, |A_1|, {\mathsf{Bal}_{A_1},u_1',\Good_{U_1'}}
    \rt] 
    \end{multlined}
    \notag\\ 
	&=
		\begin{multlined}[t]
		\tfrac{1}{4}
    \sum_{u_1'}%
    \Pr\lt[ u_1' \ \md|\ \Good_{U_1'} \rt] 
		\cdot
		\sum_{a,p_1} 
		\Big(
		\Pr\lt[ |A_1|=a,P_1\!=\! p_1\ \md|\ \mathsf{Bal}_{A_1},u_1',\Good_{U_1'} \rt] 
		\\
		\cdot
    \HH\lt[ A_1 \ \md|\ 
        {a,p_1,u_1',\mathsf{Bal}_{A_1},\Good_{U_1'}}
    \rt] 	
		\Big)
    \end{multlined} \notag\\ 
	\ann{by Lem.~\ref{lem:a1}}
	&=
		\begin{multlined}[t]
 	  \Omega\lt( \frac{\ell}{k}\cdot\log_2\lt( \frac{n}{\ell} \rt)  \rt)
    \sum_{u_1'}%
    \Pr\lt[ u_1'\ \md|\ \Good_{U_1'} \rt] 
		\cdot
		\sum_{a,p_1} 
		\Pr\lt[ a, p_1\ \md|\ \mathsf{Bal}_{A_1},u_1',\Good_{U_1'} \rt] 
    \end{multlined} \notag\\ 
   &= 
 	  \Omega\lt( \frac{\ell}{k}\cdot\log_2\lt( \frac{n}{\ell} \rt)  \rt), \label{eq:mm_lb1}
\end{align}
where the final step follows because both of the sums over the probabilities evaluate to $1$.

Let random variable $\mathsf{Out}_{1}$ be the edges that are output by $P$ as part of the maximal matching after the update is complete, and let $\mathbf{1}_{\text{Corr}}$ be the indicator random variable that the algorithm succeeds, which equals $1$ with high probability according to the premise of the theorem. 
\begin{lemma} \label{lem:out1}
$\displaystyle
  \HH \lt[ A_1\ \md|\ \mathsf{Out}_{1}, P_1 \rt] 
	= O(1).
$
\end{lemma}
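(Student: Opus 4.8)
The plan is to show that, once the update has been carried out correctly, the set $A_1$ is a deterministic function of the pair $(\mathsf{Out}_1, P_1)$, so that the only residual uncertainty comes from the low-probability event that the algorithm fails. Write $\mathsf{Corr}$ for the event $\set{\mathbf{1}_{\text{Corr}} \!=\! 1}$ that the matching returned after the update is maximal in $G'$. I would first establish that $\HH\lt[ A_1 \ \md|\ \mathsf{Out}_1, P_1, \mathsf{Corr} \rt] = 0$. From its local state $p_1$ — which contains $P_1$'s input (the IDs of its hosted vertices, their neighbourhoods, and the hosts of those neighbours) and, by the output requirement, the initial matching edges incident to $V(P_1)$ — player $P_1$ can reconstruct the index set $U_1'$: these are exactly the indices $j$ for which $V(P_1)$ holds the unique unmatched vertex $v_j$ of $L_j$ but neither of the two matched vertices of $L_j$. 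Since $v_i$ is an endpoint of the length-$2$ path $L_i$ whose only possible mate is the middle vertex $x_{i,2}$, every deletion of $\set{v_i, x_{i,2}}$ is recorded in $p_1$; whenever $p_1$ witnesses such a deletion we have $i \in I \setminus A$, hence $i \notin A_1$, and $\set{v_i,x_{i,2}}$ no longer exists. For the remaining indices $i \in U_1'$ the edge $\set{v_i,x_{i,2}}$ survives in $G'$, and I claim $i \in A_1$ iff $\set{v_i, x_{i,2}} \in \mathsf{Out}_1$: if $i \in A_1$ then the matched edge of $L_i$ was deleted, so $\set{v_i, x_{i,2}}$ is the only surviving edge of $L_i$ and maximality forces it into $M'$, hence into $P_1$'s output; conversely, if $i \notin A_1$ among the remaining indices then $i \notin I$, so $L_i$ was untouched by the batch, the algorithm leaves $L_i$'s matched edge $\set{x_{i,2}, x_{i,3}}$ in place and $v_i$ unmatched, whence $\set{v_i, x_{i,2}} \notin \mathsf{Out}_1$. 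Thus $A_1$ is read off from $(\mathsf{Out}_1, p_1)$, which proves the claim.

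The failure event is absorbed via the chain rule for entropy: since adjoining $\mathbf{1}_{\text{Corr}}$ cannot decrease the joint entropy,
\[
\HH\lt[ A_1 \ \md|\ \mathsf{Out}_1, P_1 \rt]
\;\le\; \HH\lt[ \mathbf{1}_{\text{Corr}} \ \md|\ \mathsf{Out}_1, P_1 \rt] + \HH\lt[ A_1 \ \md|\ \mathsf{Out}_1, P_1, \mathbf{1}_{\text{Corr}} \rt]
\;\le\; 1 + \HH\lt[ A_1 \ \md|\ \mathsf{Out}_1, P_1, \mathbf{1}_{\text{Corr}} \rt].
\]
Expanding the last term over the two values of $\mathbf{1}_{\text{Corr}}$ gives $\Pr\lt[\mathsf{Corr}\rt]\cdot\HH\lt[ A_1 \ \md|\ \mathsf{Out}_1, P_1, \mathsf{Corr} \rt] + \Pr\lt[\overline{\mathsf{Corr}}\rt]\cdot\HH\lt[ A_1 \ \md|\ \mathsf{Out}_1, P_1, \overline{\mathsf{Corr}} \rt]$; the first summand vanishes by the previous paragraph, and the conditional entropy in the second is at most $\log_2 2^{q} = q = n/3$ since $A_1 \subseteq [q]$, while $\Pr\lt[\overline{\mathsf{Corr}}\rt] \le n^{-c}$ by the premise of the theorem. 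Hence the second summand is $O(n^{1-c}) = O(1)$, and altogether $\HH\lt[ A_1 \ \md|\ \mathsf{Out}_1, P_1 \rt] = O(1)$, as desired. (This is exactly the quantity needed to turn the $\Omega\lt(\frac{\ell}{k}\log_2(n/\ell)\rt)$ bound of \eqref{eq:mm_lb1} into a lower bound on the information that must reach $P_1$.)

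The step I expect to be the main obstacle is the converse direction used in the first paragraph: that for an unaffected segment $L_i$ (with $i \in U_1' \setminus I$) a correct algorithm does not gratuitously rematch $L_i$, which would flip $v_i$ to matched and make $\mathsf{Out}_1$ look as if $i \in A_1$. I would justify this from the fact that the restriction of $G'$ to $\bigcup_{i \notin I} L_i$ equals that of $G$ and no player hosting a vertex of these segments observes any change, so the joint distribution of states and messages restricted to them coincides with the empty-batch execution, in which the matching is already maximal and the update procedure leaves it untouched. If one insisted on permitting such rematching, one would argue instead via Fano's inequality, but then the delicate point would become controlling the expected number of segments on which the natural decoder errs. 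The remaining ingredients — reconstructing $U_1'$ and decoding $A_1$ from the surviving incident edges appearing in $\mathsf{Out}_1$ — are routine in the stated input model.
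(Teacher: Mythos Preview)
Your decomposition is exactly the paper's: introduce $\mathbf{1}_{\text{Corr}}$ via the chain rule, bound the failure term by $\Pr[\overline{\mathsf{Corr}}]\cdot O(n)=O(1)$, and argue that under correctness $A_1$ is recoverable from $(\mathsf{Out}_1,P_1)$. The paper in fact argues only the forward direction --- that $j\in A_1$ forces $\{v_j,x_{j,2}\}\in\mathsf{Out}_1$ --- and then simply asserts that ``$P_1$ must have learned the entire set $A_1$,'' concluding $\HH[A_1\mid \mathsf{Out}_1,P_1,\mathbf{1}_{\text{Corr}}=1]=0$ without discussing the converse you single out.

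Your attempt to close that converse, however, does not go through as written. The claim that ``no player hosting a vertex of these segments observes any change'' is false under random vertex partitioning: a single player may well host vertices from both touched and untouched segments, so its local state --- and hence the messages it sends --- can depend on the batch even on coordinates $i\notin I$, and your indistinguishability with the empty-batch execution breaks down. An arbitrary correct algorithm could therefore in principle rematch some untouched $L_i$, placing $\{v_i,x_{i,2}\}$ in $\mathsf{Out}_1$ with $i\notin A_1$. So you have correctly located the delicate point that the paper glosses over, but the specific justification you propose does not settle it; the Fano-style route you mention is the right instinct if one insists on handling rematching.
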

\begin{proof}
By the chain rule of conditional entropy, we have 
\begin{align}
  \HH \lt[ A_1\ \md|\ \mathsf{Out}_{1}, P_1 \rt] 
  &\le 
  \HH \lt[ A_1, \mathbf{1}_{\text{Corr}} \md|\ \mathsf{Out}_{1}, P_1 \rt] \notag\\
  &=
  \begin{multlined}[t]
  \HH \lt[ A_1\ \md|\  \mathbf{1}_{\text{Corr}} , \mathsf{Out}_{1}, P_1 \rt]  
  +
  \HH \lt[ \mathbf{1}_{\text{Corr}} \md|\ \mathsf{Out}_{1}, P_1 \rt] 
  \end{multlined}
     \notag\\
  \ann{since $\HH \lt[ \mathbf{1}_{\text{Corr}} \md|\ \mathsf{Out}_{1}, P_1 \rt]\le 1$}
  &\le
  \HH \lt[ A_1\ \md|\  \mathbf{1}_{\text{Corr}} , \mathsf{Out}_{1}, P_1 \rt] 
  +
  1\notag
  \\
  &=
  \begin{multlined}[t]
		\Pr\lt[ \mathbf{1}_{\text{Corr}} \!=\! 1 \rt]
    \HH \lt[ A_1\ \md|\  \mathsf{Out}_{1}, P_1, \mathbf{1}_{\text{Corr}} \!=\! 1\rt] \\
    +
    \Pr\lt[ \mathbf{1}_{\text{Corr}} \!=\! 0 \rt] 
    \HH \lt[ A_1\ \md|\  \mathsf{Out}_{1}, P_1, \mathbf{1}_{\text{Corr}} \!=\! 0\rt] 
    + 1. 
  \end{multlined} \notag\\ 
  &\le
  \begin{multlined}[t]
    \HH \lt[ A_1\ \md|\  \mathsf{Out}_{1}, P_1, \mathbf{1}_{\text{Corr}} \!=\! 1\rt] \\
    +
    \Pr\lt[ \mathbf{1}_{\text{Corr}} \!=\! 0 \rt] 
    \HH \lt[ A_1\ \md|\  \mathbf{1}_{\text{Corr}} \!=\! 0\rt] 
    + 1. 
  \end{multlined} \notag\\ 
	\ann{since $\Pr\lt[  \mathbf{1}_{\text{Corr}} \!=\! 0\rt] \le 1/n$}
  &\le
  \begin{multlined}[t]
    \HH \lt[ A_1\ \md|\  \mathsf{Out}_{1}, P_1, \mathbf{1}_{\text{Corr}} \!=\! 1\rt] 
    + \frac{1}{n}
    \HH \lt[ A_1\ \md|\  \mathbf{1}_{\text{Corr}} \!=\! 0\rt] 
    + 1. 
  \end{multlined} \notag\\ 
  &\le  
    \HH \lt[ A_1\ \md|\  \mathsf{Out}_{1}, P_1, \mathbf{1}_{\text{Corr}} \!=\! 1\rt] 
    + O(1), 
\label{eq:mm_lb2}
\end{align}
where the final step follows since $\HH \lt[ A_1\ \md|\ \mathbf{1}_{\text{Corr}} \!=\! 0 \rt] = O(n)$.

Conditioned on the event $\mathbf{1}_{\text{Corr}} \!=\! 1$, player $P_1$ must output the edge incident to every previously unmatched vertex $v\in L_j$, for all $j \in A_1$; see Figure~\ref{fig:lb2} on page~\pageref{fig:lb2} for an example.
This means that $P_1$ must have learned the entire set of indices in $A_1$ by the time the algorithm completes its update,
which implies that
\begin{align}
  \HH \lt[ A_1\ \md|\  \mathsf{Out}_{1}, P_1, \mathbf{1}_{\text{Corr}} \!=\! 1\rt]
  = 0, \label{eq:mm_lb3}
\end{align}
and completes the proof of Lemma~\ref{lem:out1}.
\end{proof}

Let $\Pi$ be the transcript of messages received by player $P_1$ during the update.
Note that $\mathsf{Out}_{1}$ is a function of $\Pi$ and the local state of $P_1$, which includes the public randomness.
By the data-processing inequality (see Lemma~\ref{lem:dataprocessing}), it follows that
\begin{align}
  \II \lt[\,\Pi : A_1\ \md|\ P_1 \rt] 
  &\ge 
  \II \lt[\, \mathsf{Out}_{1} : A_1\ \md|\ P_1 \rt]\notag \\
  &=
    \HH \lt[ A_1\ \md|\ P_1 \rt]
    - 
    \HH \lt[ A_1\ \md|\ \mathsf{Out}_{1}, P_1 \rt] \notag\\ 
  \ann{by Ineq.~\eqref{eq:mm_lb1} and Lem.~\ref{lem:out1}}
  &\ge
		\Omega\lt( \frac{\ell}{k}\log \lt( \frac{n}{\ell} \rt) \rt)
    - 
    O(1)\notag\\ 
  &= 
  \Omega\lt( \frac{\ell}{k} \log k \rt), \label{eq:mut_inf8}
\end{align}
where the final step follows since $\ell \le n/2k$, according to the premise of the theorem.
In each round, player $P_1$ can receive at most $c \beta\log n$ bits over every one of its $k-1$ communication links, for some constant $c\ge 1$. 
Thus there are $2^{c \beta\log n}+1\le 2^{c \beta\log n+1}$ possible messages (including the empty message) that $P_1$ may receive from a distinct player $P_j$ in a single round. 
Let $\mathcal{M}_r$ be the concatenation of the messages that $P_1$ receives in a given round $r$ from the $k-1$ other players. 
By the above derivation, there are at most $2^{c\beta(k-1)\log n +1}$ possible values for $\mathcal{M}_r$.
Let $T$ denote the worst case update time of player $P_1$, and note that $\Pi = \mathcal{M}_1\dots \mathcal{M}_T$. 
It follows that there are at most $2^{c \beta(k-1)T\log n +1}$ possible values for $\Pi$.
The entropy $\HH\lt[ \Pi \rt]$ is maximized when $\Pi$ is uniformly distributed, i.e.,  $\HH\lt[ \Pi \rt] \le \log_2 2^{c \beta(k-1)T\log n +1}$, and \eqref{eq:mut_inf8} implies that
\begin{align*}
{c\beta(k-1)T\log n +1}
 \ge \HH\lt[ \Pi \rt] 
      \ge \II \lt[\, \Pi : A_1\ \md|\ P_1 \rt]
      = \Omega\lt( \frac{\ell}{k} \log k \rt).
\end{align*}
This shows that $T= \Omega\lt( \frac{\ell\log k}{\beta\,k^2\log n} \rt)$, as required.

\subsection{Extending the Proof to the Adaptive Adversary} \label{sec:omitted_lb}

We now consider the {adaptive adversary}, who not only determines the initial (balanced) vertex partitioning, but is also aware of the current state of the nodes, which, in particular includes the computed maximal matching.
As before, we focus on the graph $G$ that consists of a collection of $q$ disjoint line segments $L_1,\dots,L_{q}$, where each $L_i = (x_{i,1},x_{i,2},x_{i,3})$ forms a path of length $2$ on vertices $x_{i,1}$, $x_{i,2}$, and $x_{i,3}$; see Figure~\ref{fig:lb1} on page~\pageref{fig:lb1}. 
To simplify the presentation, we assume that $\frac{q}{n/k} = \frac{k}{3}$ is an integer. 
Conceptually, we view the graph as an $(q\times 3)$-size checkerboard $B$ where $x_{i,j}$ is located at coordinate $(i,j)$. 
It is possible to \emph{tile} $B$, i.e., cover all spaces without any overlaps by using $k$ dominos $D_1,\dots, D_k$, each of size $(n/k)\times 1$, see, e.g., \cite{rosen}.
The set of coordinates of $B$ covered by $D_i$ represents the vertices assigned to player $P_i$, which results in a balanced vertex partition, where each player obtains $n/k$ vertices.
Figure~\ref{fig:lb3} gives an example of such a tiling. 

\begin{figure}[t] %
  \centering
  \includegraphics[scale=0.85]{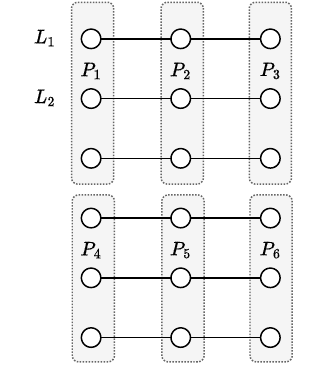}
\caption{The initial vertex partition (``tiling'') of the input graph chosen by the adaptive adversary. Each shaded area corresponds to the vertices hosted by one player.}
\label{fig:lb3}
\end{figure}

This ensures the following property:

\begin{fact} \label{fact1}
Every player hosts at most $1$ vertex per line segment. 
\end{fact}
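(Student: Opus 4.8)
The plan is to argue directly from the geometry of the domino tiling that defines the partition. Each domino $D_i$ is a straight strip of dimensions $(n/k)\times 1$, so in any legal placement it is either a single horizontal run or a single vertical run of cells of $B$. Since the board $B$ has only $3$ columns while $n/k = \omega(1)$ and in particular $n/k > 3$ for all sufficiently large $n$ (using $k=\Omega(\log^4 n)$ and $k = o(n)$), a horizontal placement is impossible: a horizontal run of length $n/k$ would require more than $3$ columns. Hence every $D_i$ is a vertical strip contained in a single column $j\in\{1,2,3\}$, covering cells $(a_1,j),\dots,(a_{n/k},j)$ for $n/k$ pairwise-distinct row indices $a_1,\dots,a_{n/k}\in[q]$.

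Next I would translate this back into a statement about hosted vertices. By construction the vertices hosted by $P_i$ are exactly the vertices sitting at the cells covered by $D_i$, and the vertex at cell $(a,j)$ is $x_{a,j}$, i.e.\ a vertex of line segment $L_a$. Since the cells of $D_i$ occupy pairwise-distinct rows, $D_i$ contains at most one cell of the form $(a,\cdot)$ for each fixed $a\in[q]$, and therefore $P_i$ hosts at most one vertex of $L_a$. Letting $a$ range over all of $[q]$ yields the claim, and the argument is insensitive to which valid domino tiling the adaptive adversary picks.

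I do not expect a genuine obstacle here: the only points requiring care are (i) ruling out horizontal domino placements, which follows from the standing assumption $n/k > 3$, and (ii) noting that $q/(n/k) = k/3$ is assumed to be an integer, so that each of the $3$ columns decomposes cleanly into full dominoes and the cited tiling indeed exists. Both are already granted in the setup, so the proof is just a short unwinding of definitions.
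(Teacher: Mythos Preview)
Your proposal is correct and matches the paper's (implicit) reasoning: the paper does not give a formal proof of Fact~\ref{fact1} but simply asserts it as an immediate consequence of the tiling construction, with Figure~\ref{fig:lb3} illustrating the all-vertical placement. Your argument spells out precisely why any valid placement of an $(n/k)\times 1$ domino must be vertical (since $n/k>3$ under the standing assumptions $k\le\sqrt{n}/\log n$), which is the one point the paper leaves to the reader; the remainder is, as you say, an unwinding of definitions.
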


Observe that any maximal matching must leave exactly one vertex per line segment unmatched.
Similarly to the oblivious case, there must exist a player $P$ who hosts a set $S_P$ of at least $n/3k$ unmatched vertices.  
A crucial difference, however,  is that the adaptive adversary knows $P$ and $S_P$ when selecting the edge updates.
We define $I_P$ to be the set of indices of the line segments that have a vertex in $S_P$, and note that Fact~\ref{fact1} implies that $|I_P|\ge n/3k$.

Then, the adversary samples a set $A_P \subseteq I_P$ of size $\ell$ uniformly at random, and deletes the matched edge $e_i\in L_i$ from the graph, for each line segment $L_i$ with an index $i\in A_P$.
As player $P$ hosts no vertex incident to $e_i$, it is unaware of these deletions unless it communicates with the other players.
Observe that there are ${|I_P| \choose \ell}\ge {n/3k \choose \ell}$ choices for $A_P$, all of which are equally likely. 
It follows that
\begin{align*}
  \HH\lt[ A_P \mid P  \rt] 
  &\ge 
  \log_2 {n/3k \choose \ell} \notag\\ 
  &\ge 
  \log_2 \lt( \frac{n/3k}{\ell} \rt)^{\ell}\notag\\ 
  &= \ell\cdot\log_2\frac{n}{3k\,\ell}\notag\\ 
	\ann{since $\ell\le n/2k$}
  &= \Omega\lt( \ell \rt).
\end{align*}
Similarly to the analysis for the oblivious adversary we obtain that, conditioned on the output of $P$, the entropy of $A_P$ becomes close to $0$. 
Thus it follows that player $P$ needs to learn at least $\Omega(\ell)$ bits of information while executing the update algorithm, which takes $\Omega\lt( \frac{\ell}{\beta\,k\log n} \rt)$ rounds.

\section{An Algorithm for Batch-Dynamic Maximal Matching}\label{sec:dyn-mm-ran}

In this section, we describe an algorithm that satisfies the bounds claimed in Theorem~\ref{thm:mm_random}, and prove its correctness.

We first describe some information dissemination tools that our algorithm makes frequent use of. 
In particular, the next lemma provides a technique for efficiently spreading a set of messages to all players.

\begin{lemma}[$\spreading$] \label{lem:spreading}
Suppose that there are $N \ge \beta(k-1)$ tokens, each of size $\Theta(\log n)$ bits, located at arbitrary players. There exists a deterministic algorithm $\spreading$ such that each player can receive all $N$ tokens in $O\lt(\frac{N}{\beta\,k}\rt)$ rounds.
\end{lemma}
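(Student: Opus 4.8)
The plan is to route the $N$ tokens through the clique so that, amortized over all $k-1$ links incident to any player, the full bandwidth $\Theta(\beta k \log n)$ bits per round is used. Since each token is $\Theta(\log n)$ bits, a single link carries $\Theta(\beta)$ tokens per round, and the aggregate receiving capacity of one player is $\Theta(\beta k)$ tokens per round; hence the information-theoretic lower bound of $\Omega(N/\beta k)$ rounds just to \emph{receive} $N$ tokens is what we aim to match. First I would handle the trivial balancing issue: a priori the $N$ tokens may be concentrated at a few players, so the first phase redistributes them evenly. Concretely, every player with more than $\lceil N/k \rceil$ tokens ships the surplus out; because the total is $N$ and there are $k$ players, after $O(\lceil N/(\beta k)\rceil)$ rounds each player holds $O(\lceil N/k\rceil)$ tokens. (This step needs a little care — a player that is itself overloaded cannot absorb others' surplus — but since the average is $N/k$ and link capacity per round is $\Theta(\beta)$ tokens, a standard two-pass "send surplus up to a designated low-load player" scheme, or sorting tokens by a global rank and assigning token of rank $j$ to player $\lceil j k / N\rceil$, resolves it in the stated time; the hypothesis $N \ge \beta(k-1)$ guarantees $\lceil N/(\beta k)\rceil = \Theta(N/(\beta k))$ so the ceilings are harmless.)

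Second, with each player holding $\Theta(N/k)$ tokens, I would run the dissemination itself. Think of the $N$ tokens, after Phase 1, as partitioned into $k$ blocks $B_1, \dots, B_k$ with $B_i$ residing at player $P_i$ and $|B_i| = \Theta(N/k)$. The goal is for every player to learn every block. The natural schedule is a pipelined all-to-all broadcast on the clique: in round $t$, player $P_i$ sends to player $P_j$ the next $\Theta(\beta)$ tokens of $B_i$ that $P_j$ has not yet received. Since $P_i$ must push $|B_i| = \Theta(N/k)$ tokens down each of its $k-1$ links, and each link moves $\Theta(\beta)$ tokens per round, every link finishes in $\Theta(N/(\beta k))$ rounds; crucially there is no queueing bottleneck at the receiver because in each round $P_j$ receives $\Theta(\beta)$ tokens on each of $k-1$ links, i.e. $\Theta(\beta k)$ tokens total, which it simply stores (no local-memory restriction in this model). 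So the whole process finishes in $O(N/(\beta k)) = O(N/(\beta k))$ rounds, deterministically, and the algorithm is oblivious to where tokens start once Phase 1 has equalized the loads.

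The main obstacle — really the only subtle point — is Phase 1: making the redistribution deterministic and correct when the adversary places all $N$ tokens at, say, a single player. I expect to resolve it by a global-ranking argument: assume tokens carry $\Theta(\log n)$-bit IDs (or can be tagged with their source player ID plus a local index), fix the canonical order, and declare that the token of global rank $j$ "belongs" at player $P_{\lceil jk/N \rceil}$. A player holding a block of consecutively-ranked tokens then knows exactly which contiguous sub-ranges to forward to which players; since it forwards at most $N$ tokens total across $k-1$ links at $\Theta(\beta)$ tokens per link per round, and each \emph{destination} player receives at most $\Theta(N/k)$ tokens total at rate $\Theta(\beta k)$ per round, Phase 1 also completes in $O(N/(\beta k))$ rounds. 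Adding the two phases gives the claimed $O(N/(\beta k))$ bound. I would close by remarking that the bound is tight in general, matching the trivial receiving lower bound $\Omega(N/(\beta k))$, and that the hypothesis $N \ge \beta(k-1)$ is exactly what is needed for this to also dominate the $\Omega(1)$-round cost of any single communication step.
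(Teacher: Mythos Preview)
Your proposal is correct and follows essentially the same two-phase approach as the paper: first redistribute tokens so that each player holds $\Theta(N/k)$ of them, then have each player broadcast its block over all $k-1$ links in $O(N/(\beta k))$ rounds. The only cosmetic difference is in Phase~1, where the paper first offloads full batches of $\beta(k-1)$ tokens round-robin across links (no global coordination needed) and then uses a rank-based assignment only for the sub-batch leftovers, whereas you use the global-rank assignment from the start; both implementations yield the same $O(N/(\beta k))$ bound.
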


\begin{proof}
    We describe the $\spreading$ algorithm and argue the claimed time complexity bound.
Suppose player $P_i$ holds a set of $x_i$ tokens initially, and recall that $\sum_{i=1}^{k}x_i = N$. 
\begin{compactenum}
\item \textbf{Redistribution:} $P_i$ splits this set into batches of size exactly $\beta(k-1)$ (with the possible exception of the last batch). Then, in parallel for each $j \in [k-1]$, player $P_i$ sends the $j$-th token of the first batch directly to the player, to which it is connected via its $(j \bmod (k-1))$-th link. %
Afterwards, it proceeds with the second batch and so forth. 
All players perform this in parallel until all of their batches have been processed, which takes $O(N/\beta\,k)$ rounds. 

At this point, each player $P_i$ has $x_i' \le \beta(k-1)-1$ tokens left that it has not yet redistributed. 
Next, the players exchange the counts $x_1',\dots,x_k'$ which takes $O(1)$ rounds. %
Every player $P_a$ fixes some arbitrary order  $m^a_1,\dots,m^a_{x_a'}$ of its tokens, and, for every $j \in [x_a']$, locally computes the \emph{rank of token $m^a_j$}, denoted by  $r(m^a_j)$, as 
\[
r(m^a_j) = \sum_{i=1}^{a-1} x_i' + j.
\]
It is straightforward to verify that this results in a unique rank for all tokens of all players.
Then, for every $r \in [x_a']$, player $P_a$ sends the token with rank $r$ to player $P_{(r \bmod k) + 1}$; if $(r\bmod k) +1=a$, it simply keeps the token in its local memory.
Since $x_a' \le \beta(k-1)-1$ and the tokens are almost equally distributed over all incident communication links, these remaining tokens can be sent in parallel in just a single round.
This ensures that every player $P_i$ holds a set $T_i$ of at most $O(N/k)$ tokens locally.
\item \textbf{Dissemination:} 
$P_i$ sequentially broadcasts every token in its set $T_i$ to all players.
After $O(N/\beta\,k)$ rounds, every player is guaranteed to have received every token.
\end{compactenum}

\end{proof}

\subsection{Initialization} \label{sec:mm_random_init}

The initialization phase of our randomized algorithm simply computes a maximal matching on the input graph. 
As a maximal matching in a graph $G$ is equivalent to a maximal independent set (MIS) in its line graph $\mathcal{L}_G$, a common approach is to simulate an MIS algorithm on $\mathcal{L}_G$.
However, in contrast to the congested clique model, the overhead of sending $O(m)$ messages for simulating $\mathcal{L}_G$ is too expensive in our setting, as the total available bandwidth in a single round is only $O(\beta\,k^2\log n)$ bits. 
Instead, we employ the maximal matching algorithm of Israeli and Ittai~\cite{israeli1986fast}.
As we also leverage this algorithm for handling the updates under an oblivious adversary, we describe its implementation in the $k$-clique message passing model in more detail.

Every player keeps track of a Boolean variable $\textsf{matched}_u$, initialized to $0$, for each of its hosted vertices $u$, which indicates if an edge incident to $u$ is included in the matching.
We now describe a single iteration of the algorithm of \cite{israeli1986fast}:
\begin{compactenum}
\item \textbf{Edge Sampling Step}: Every player $P$ uniformly at random chooses an incident edge $e=\set{u,v}$, for each of its vertices $u$, and sends a $\msg{marked,\text{$e$}}$ message to player $P(v)$ via $\spreading$.
Conceptually, we consider the marked edge $e' = \set{u,v'}$ to be a directed edge $e' = (u,v')$ (pointing from $u$ to $v'$)  and use $\vec{G}$ to denote the directed graph consisting of the sampled edges of all players.
After this step, every player $P$ knows the subgraph of $\vec{G}$ induced by the marked edges that are incident to its own vertices.

\item \textbf{Reduce In-degrees Step}: Player $P$ chooses a random incoming edge $e'=(v', u) \in E(\vec{G})$ for each of its vertices $u$ (if any), and sends a $\msg{selected,\text{$e'$}}$ message to player $P(v')$ via $\spreading$.
Let $\bar{G}$ be the \emph{undirected} graph that only consists of all edges $e$, for which a $\msg{marked,\text{$e$}}$ as well as a $\msg{selected,\text{$e$}}$ message was sent. 

\item \textbf{Match-up Step:}
Next, each player $P$ samples, for each of its vertices $u$ in $\bar{G}$, an incident edge $e''=\set{ u,v''} \in E(\bar{G})$ uniformly at random, and sends $\msg{request,\text{$e''$}}$ to $P(v'')$.
If $P$ also receives a $\msg{request,\text{$e''$}}$ message from $P(v'')$, then it sets $\textsf{matched}_{u}\leftarrow1$ (and $P(v'')$ will set $\textsf{matched}_{v''}\leftarrow1$ in turn). %

\item \textbf{Pruning Step}: 
Consider an edge $e=\set{u,v}$ that was matched in the previous step, and suppose that the ID of $u$ is smaller than the ID of $v$.
We say that player $P(u)$ is \emph{responsible} for $e$.
Finally, we need to compute the residual graph that does not include any edges to already matched nodes. 
To this end, each player $P$ creates a message $\msg{matched,\text{$e$}}$, for every edge $e$ for which it is responsible, and this message is sent to every other player via $\spreading$,.
After these exchanges, the players have learned about all edges that have at least one matched endpoint, which they discard from the graph. 
The players locally check whether there are any edges remaining and, if so, proceed to the next iteration. 
The maximal matching is simply the union over the matchings obtained in the individual phases. 
\end{compactenum}
The next lemma quantifies the performance of this algorithm in the $k$-clique message passing model:

\newcommand{\lemIttai}{
Consider an $n'$-node graph $G'$ in which the vertices are partitioned among $k$ players, either randomly or in a balanced manner. %
Then, there exists a maximal matching algorithm that terminates in 
$O\lt( \lt\lceil \frac{n'}{\beta\,k} \rt\rceil \log n' \rt)$ rounds with high probability (in $n'$).
}
\begin{lemma} \label{lem:ittai}
\lemIttai
\end{lemma}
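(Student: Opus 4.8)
The plan is to analyze the Israeli--Ittai algorithm \cite{israeli1986fast} as implemented above, showing both that each iteration can be executed in $O\lt(\lceil n'/\beta k\rceil\rt)$ rounds and that $O(\log n')$ iterations suffice with high probability. For the round complexity per iteration, I would observe that every step (Edge Sampling, Reduce In-degrees, Match-up, Pruning) only requires each player to send, for each of its hosted vertices, $O(1)$ messages of $O(\log n')$ bits, plus possibly a broadcast of the $\msg{matched}$ messages. Since the vertex partition is balanced (or random, hence balanced w.h.p.), each player hosts $O\lt( (n'/k)\log n'\rt)$ vertices, but more usefully the \emph{total} number of tokens generated across all players in any step is $O(n')$; because $n' \ge \beta(k-1)$ can be assumed after handling the trivial small-$n'$ case separately, Lemma~\ref{lem:spreading} ($\spreading$) delivers all these tokens to all players in $O\lt( n'/\beta k\rt)$ rounds. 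Multiplying by the $O(\log n')$ iteration count gives the claimed bound $O\lt( \lceil n'/\beta k\rceil \log n'\rt)$; the ceiling absorbs the case $n' < \beta k$.

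The core of the argument is the iteration count: I would invoke the classical analysis of \cite{israeli1986fast}, which shows that in each iteration a constant fraction of the \emph{edges} of the current residual graph are removed in expectation. More precisely, the combination of the Reduce In-degrees and Match-up steps ensures that, for each vertex $u$, with constant probability $u$ (or one of its neighbors) becomes matched, and the Pruning step then deletes all edges incident to matched vertices; a standard potential/counting argument shows the expected number of surviving edges shrinks by a constant factor. Since the graph starts with $m' \le \binom{n'}{2} = O(n'^2)$ edges, after $O(\log n')$ iterations the residual graph is empty with high probability in $n'$ (boosting the constant-factor-decrease-in-expectation to a high-probability statement via the fact that the edge count is integer-valued and a Markov/Chernoff-style argument over $\Theta(\log n')$ iterations, or by directly citing the w.h.p.\ bound from the literature). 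Correctness --- that the union of the matchings found across iterations is a maximal matching --- follows because the Match-up step only matches vertex-disjoint edges within an iteration, across iterations matched vertices and their incident edges are pruned, and the algorithm only stops when no edge remains, so no augmenting single edge exists.

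The main obstacle is making the round-complexity bookkeeping tight while respecting the subtlety that a balanced partition allows up to $O\lt( (n'/k)\log n'\rt)$ vertices at one player: even though the total token count per step is $O(n')$, I must confirm that $\spreading$'s redistribution phase (which relies only on the aggregate count $N$, not on how skewed the initial distribution is) indeed handles this, which it does by design. A second delicate point is the interaction between iterations and the high-probability guarantee: the ``constant fraction of edges removed'' statement is in expectation, so I would either cite the known w.h.p.\ progress bound for this algorithm or argue that over $c\log n'$ iterations the probability that fewer than a constant fraction are removed in each of a constant fraction of them is $n'^{-\Omega(1)}$, and that $O(\log n')$ geometric-decrease steps drive $O(n'^2)$ edges to zero. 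Finally I would note the degenerate regime $n' = O(\beta k)$ separately: there a single invocation of $\spreading$ lets every player learn the whole graph in $O(1)$ rounds and compute the matching locally, consistent with the $\lceil \cdot\rceil$ in the statement.
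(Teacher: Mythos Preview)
Your proposal is correct and follows essentially the same route as the paper: bound each iteration at $O(\lceil n'/\beta k\rceil)$ rounds by observing that every step generates $O(n')$ tokens in total and invoking $\spreading$, then cite the constant-fraction-of-edges-removed analysis of \cite{israeli1986fast} and boost to high probability via Markov's inequality on the expected residual edge count after $c\log n'$ iterations. Your additional remarks on the degenerate regime $n' = O(\beta k)$ and on $\spreading$'s insensitivity to skew in the initial token distribution are more explicit than the paper's treatment but do not change the argument.
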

\begin{proof}
For the case of random vertex partitioning, observe that each player $P$ hosts $O(n'/k)$ vertices in expectation and, by a standard Chernoff bound, $P$ has $O(\frac{n'}{k}  \log n')$ vertices with high probability, which means that (w.h.p.) we have a balanced vertex partitioning. 
We condition on this event in the remainder of the proof.

The correctness of the computed matching follows from \cite{israeli1986fast}.
Therefore, we focus on proving the bound on the time complexity.
In the \textbf{edge sampling}, \textbf{reduce in-degrees}, and \textbf{match-up} steps, each player $P$ sends at most one message for each vertex in $V(P)$, and may need to receive one message for each of the incident edges of the vertices in $V(P)$. 
It follows that in total there are at most $n'$ $O(\log n)$-bit size messages that need to be sent in each of these steps, which takes $O(n'/\beta\,k)$ rounds via $\spreading$ (see Lemma~\ref{lem:spreading}).

Finally, we consider the \textbf{pruning step}:
Let $S$ be the set of edges matched in the current iteration, and note that $|S|\le n'/2$.
Since the players send these edges by invoking $\spreading$ (see Lemma~\ref{lem:spreading}), every player learns the entire set $S$ within $O\lt( \lt\lceil \frac{|S|}{\beta\,k}  \rt\rceil \rt) = O\lt(\lt\lceil  \frac{n'}{\beta\,k} \rt\rceil \rt)$ rounds. %

In \cite{israeli1986fast}, they show that 
$\Theta(\log (|E(G'|))) = \Theta(\log n')$ iterations are sufficient in expectation for computing a maximal matching.
In more detail, they show that each iteration removes a $\gamma$-fraction of the edges in expectation, for some constant $\gamma > 0$. 
Thus, by considering $c\cdot\log |E(G')|$ iterations, for a sufficiently large constant $c$, it follows that the expected number of edges in the residual graph is at most $O \lt( \frac{m'}{\gamma^{c\log |E(G')|}} \rt) = O \lt( \frac{1}{|E(G')|^{c}} \rt)$.
By Markov's inequality, we obtain a maximal matching in $O(\log |E(G')|) = O(\log n')$ iterations with high probability.
\end{proof}

Our algorithm ensures that the following invariant holds after the initialization phase, as well as every time the algorithm has finished computing a new matching that was triggered by a batch of edge updates:   
 
\begin{invariant} \label{inv:random}
Every player knows, for each of its hosted vertices $v$, the matched edge incident to $v$ (if any), and, for each neighbor $u$ of $v$, it knows whether $u$ is matched. 
\end{invariant}

By instantiating Lemma~\ref{lem:ittai} with the initial input assignment, i.e., $n'=n$, we immediately obtain the following:
\begin{lemma} \label{lem:mm_init}
The initialization time is $O\lt( \lceil\frac{n}{\beta\,k}\rceil\log n \rt)$ rounds with high probability and Invariant~\ref{inv:random} holds.
\end{lemma}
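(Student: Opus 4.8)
The plan is to derive Lemma~\ref{lem:mm_init} as an essentially immediate consequence of Lemma~\ref{lem:ittai}, together with a short inspection of what each player knows once the Israeli--Ittai procedure of Section~\ref{sec:mm_random_init} terminates. First I would instantiate Lemma~\ref{lem:ittai} with $n' = n$ and with the initial vertex partitioning (random in the oblivious case, balanced in the adaptive case), which directly gives that a maximal matching on the input graph is computed in $O\lt( \lceil \tfrac{n}{\beta k}\rceil \log n \rt)$ rounds with high probability; this is exactly the claimed initialization time.

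It then remains to verify that Invariant~\ref{inv:random} holds. Here I would argue the stronger statement that, at the end of the initialization phase, \emph{every} player knows the entire matching $M$. Recall that in the pruning step of each iteration $t$, every player responsible for an edge $e$ matched during iteration~$t$ sends $\msg{matched,e}$ to all players via $\spreading$. Since the sets of edges matched in distinct iterations are pairwise disjoint and their union is $M$, by the end of the final iteration each player has received $\msg{matched,e}$ for every $e \in M$. Knowing all of $M$, a player can read off, for each of its hosted vertices $v$, the (at most one) matched edge incident to $v$, and, for each neighbor $u$ of $v$, it can check whether some edge of $M$ is incident to $u$, which is precisely whether $u$ is matched. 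Hence Invariant~\ref{inv:random} is satisfied. The only point that needs care is that these pruning broadcasts do not increase the round complexity beyond the stated bound, but this is already subsumed by Lemma~\ref{lem:ittai}: the pruning steps are part of the algorithm analyzed there, and in total only $\sum_t |S_t| = |M| \le n/2$ tokens are disseminated across the $O(\log n)$ iterations.

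I do not expect a genuine obstacle: the lemma is essentially bookkeeping on top of Lemma~\ref{lem:ittai}, and the mild subtlety is merely confirming that the pruning-step broadcasts already present in that algorithm are enough to establish the invariant (rather than needing an extra communication round).
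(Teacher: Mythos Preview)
Your proposal is correct and matches the paper's approach: the paper simply states that Lemma~\ref{lem:mm_init} is obtained by instantiating Lemma~\ref{lem:ittai} with $n'=n$, leaving the verification of Invariant~\ref{inv:random} implicit. Your additional argument that the pruning-step broadcasts make every player learn all of $M$ (and hence the invariant holds) is exactly the intended justification and adds no new ideas beyond what is already built into the algorithm description.
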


Recall that a batch of updates may consist of $\ell_1$ edge deletions and $\ell_2$ edge insertions, where $\ell_1 + \ell_2 \le \ell$.
Conceptually, our algorithm treats this as two separate update steps with batches of size $\ell_1$ and $\ell_2$, respectively.
This allows us to assume that each batch contains only one type of edge-update in the following two subsections.

\subsection{Edge Insertions} \label{sec:mm_random_edge_insertions}

\newcommand{\lemedgeadd}{
If the (oblivious or adaptive) adversary inserts a batch of $\ell$ edges, then the update time is $O(\lceil \ell/\beta k\rceil)$ rounds and Invariant~\ref{inv:random} holds.
}
\begin{lemma} \label{lem:edge_additions}
\lemedgeadd
\end{lemma}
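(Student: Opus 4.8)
The plan is to handle an insertion batch by first making sure every player learns the full set of $\ell$ inserted edges, and then observing that only a bounded number of vertices can actually change matching status. Concretely, each player that hosts an endpoint of a newly inserted edge creates an $O(\log n)$-bit token describing that edge; by Lemma~\ref{lem:spreading} (padding with dummy tokens if $\ell < \beta(k-1)$) every player learns all $\ell$ insertions in $O(\lceil \ell/\beta k\rceil)$ rounds. At this point every player knows the modified graph restricted to the newly inserted edges, and, by Invariant~\ref{inv:random} (which held before the batch), it also knows for each such edge whether each endpoint was previously matched.

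The key structural observation is that inserting edges can never \emph{unmatch} a vertex: the old matching $M$ is still a valid matching in $G'$, so the only thing that can go wrong is that $M$ is no longer \emph{maximal}, i.e.\ there is now an edge both of whose endpoints are unmatched. Such an edge must be one of the $\ell$ inserted edges whose two endpoints were both unmatched in $M$. So the set $F$ of ``free'' inserted edges (both endpoints unmatched before the batch) has size at most $\ell$, and $G'$ differs from a maximal-matching situation only on the subgraph induced by $F$. To fix maximality it suffices to compute a maximal matching of the graph $H$ whose edges are exactly $F$ and add it to $M$: any vertex left unmatched in $H$'s maximal matching has all its incident $F$-edges covered, and all its other incident edges in $G'$ go to vertices that were already matched in $M$ (hence still matched), so no augmenting edge remains. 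Since every player now knows all of $F$, every player can compute this maximal matching of $H$ \emph{locally} with no further communication, and in particular the players agree on the result.

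It remains to re-establish Invariant~\ref{inv:random}. After the local computation, a player $P$ knows the matched edge (if any) incident to each $v\in V(P)$. It still needs, for each neighbor $u$ of each hosted $v$, to know whether $u$ is matched; the only $u$'s whose status changed are the endpoints of the newly matched $F$-edges, and since every player knows all of $F$ and the computed maximal matching of $H$, every player already knows exactly which of those vertices became matched. So each player can update its local records for its hosted vertices' neighborhoods without any additional rounds, and Invariant~\ref{inv:random} holds. The total cost is the $O(\lceil \ell/\beta k\rceil)$ rounds of the single $\spreading$ invocation.

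The main obstacle is essentially bookkeeping rather than a deep difficulty: one has to be careful that $\spreading$ as stated requires $N\ge\beta(k-1)$ tokens, so for small batches we must pad up to that threshold (which only costs $O(1)$ rounds and keeps the bound at $O(\lceil\ell/\beta k\rceil)$), and one has to argue cleanly that the locally computed maximal matching of $H$, combined with $M$, is genuinely maximal in all of $G'$ — this is the ``no augmenting edge survives'' argument sketched above, and it hinges on the fact that insertions monotonically preserve the matched status of every previously matched vertex. No randomness beyond what $\spreading$ uses (none) is needed, so the bound is deterministic and holds against both adversary types, matching the statement.
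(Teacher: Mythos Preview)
Your approach is correct and essentially matches the paper's: both collect the inserted edges, identify those with two previously-unmatched endpoints, compute a maximal matching on that subgraph locally, and propagate the result---the paper simply centralizes the computation at a single player $P_1$ rather than using $\spreading$ to everyone, which is a cosmetic difference. One small imprecision worth fixing: Invariant~\ref{inv:random} alone does not let an arbitrary player determine whether a \emph{remote} endpoint $u$ of a received edge was matched (the invariant only covers hosted vertices and their neighbors), so the $O(\log n)$-bit token should explicitly carry the one matching-status bit that the hosting player $P(u)$ does know; with that tweak your argument goes through verbatim.
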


\begin{proof}
If an edge is inserted between vertices $u$ and $v$, and one of them is already matched, there is nothing to do. 
Thus, assume that all edges are inserted between unmatched nodes.
Every player that hosts a vertex incident to some of the inserted edges sends the IDs of the endpoints of these edges to player $P_1$.
Since this amounts to $O(\ell\log n)$ bits in total, $P_1$ can receive all of these messages in $O(\ell/\beta k)$ rounds. 
Then, $P_1$ locally computes a maximal matching $M'$ on the subgraph induced by the edges that were inserted between unmatched nodes and sends $M'$ to every other player. Since $|M'|=O(\ell)$, this again takes $O(\ell/\beta k)$ rounds. 
From this information, each player learns about all newly matched edges that are incident to its own vertices, which ensures that Invariant~\ref{inv:random} is satisfied.
\end{proof}

\subsection{Edge Deletions}\label{subsec:edge-deletion}
 
We now focus on the significantly more difficult case of handling deletions of $\ell$ edges.
Since our algorithm can simply ignore deletions of non-matched edges, we assume that all removed edges were part of the current matching.  

Conceptually, we split these updates into $\lt\lceil\ell/\Gamma\rt\rceil$ smaller \emph{mini-batches} and process one mini-batch at a time.
All mini-batches have a size of exactly $\Gamma$, with the possible exception of the last one, which contains at most $\Gamma$ updates.
For the oblivious adversary, we choose a mini-batch size of $\Gamma=\beta\,k$, whereas, for the adaptive adversary, we set $\Gamma=\lfloor \beta\,\sqrt{k} \rfloor$. 

For a given subset of vertices $S \subseteq V(G)$, let $G[S]$ denote the subgraph induced by $S$.
Suppose that we are given some (not necessarily maximal) matching $M$ in $G$.
We say that a vertex is \emph{free} or \emph{unmatched}, if it does not have an incident edge in $M$.
We define the \emph{free degree $\free_{G,M}(u)$} of a node $u$ to be the number of neighbors incident to $u$ in $G$ that are free according to the matching $M$.
For a given matching $M'$ in $G[S]$, we sometimes abuse notation and simply write $\free_{S,M'}(u)$ instead of $\free_{G[S],M'}(u)$; we omit the subscripts when they are clear from the context.

The next lemma captures important properties of the resulting graph obtained right after the adversary has deleted a set of edges, denoted by $D$. 
We define $V_f$ to be the set of unmatched nodes incident to edges in $D$, and let $V' \subseteq V(G) \setminus V_f$ be the set of unmatched vertices that have neighbors in $V_f$. 
Figure~\ref{fig:random_fig} gives an example of these sets. 

\begin{figure}[t]
  \centering
\begin{subfigure}[t]{0.40\textwidth}
  \centering
  \includegraphics[]{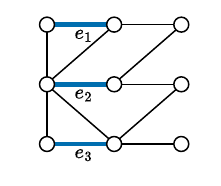}
\caption{A maximal matching $\set{e_1,e_2,e_3}$ in graph $G$.}
\label{fig:random_fig1}
\end{subfigure}
\hspace{5mm}
\begin{subfigure}[t]{0.5\textwidth} 
  \centering
  \includegraphics[]{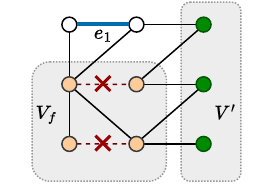}
  \caption{The adversary has deleted edges $e_2$ and $e_3$ from $G$. The light orange-shaded nodes form $V_f$, whereas $V'$ consists of the dark green-shaded nodes.}
  \label{fig:random_fig2}
\end{subfigure} %
\caption{}
\label{fig:random_fig}
\end{figure}

\newcommand{\lemmmvertices}{Suppose that Invariant~\ref{inv:random} holds in the current graph $G$ with maximal matching $M$. Assume that the adversary deletes a set $D$ of at most $\Gamma$ edges, and let $G'$ denote the resulting graph. 
Then:
\begin{compactenum}
\item[(a)] $|V_f| \le 2\Gamma$;
\item[(b)]for all $u \in V'$ we have $\free_{G',M}(u) \le 2\Gamma$;
\item[(c)]$V'$ is an independent set in $G$ (i.e., $G[V']$ contains no edges). 
\end{compactenum}}

\begin{lemma}\label{lem:mm_vertices}
    \lemmmvertices
\end{lemma}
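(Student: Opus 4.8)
The plan is to verify each of the three claims directly from the structure of the deletion operation and the fact that $M$ was maximal before the deletions.

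\textbf{Part (a).} Each deleted edge $e = \{x,y\} \in D$ contributes at most its two endpoints to $V_f$, and $V_f$ only collects those endpoints that are \emph{unmatched} after the deletion. Since $|D| \le \Gamma$, we immediately get $|V_f| \le 2|D| \le 2\Gamma$. (In fact, because we have already reduced to the case where every deleted edge was a matched edge, each such edge contributes exactly its two endpoints, both of which become unmatched, so $|V_f| = 2|D|$; but the inequality is all we need.)

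\textbf{Part (c).} Here is where maximality of $M$ in $G$ enters. Take any two vertices $u, w \in V'$. By definition of $V'$, both $u$ and $w$ are unmatched in $G$ under $M$ (the deletions do not create new matched vertices, and $V' \subseteq V(G)\setminus V_f$ consists of vertices that were already unmatched). If the edge $\{u,w\}$ were present in $G$, then $M \cup \{\{u,w\}\}$ would be a strictly larger matching, contradicting maximality of $M$. Hence no edge of $G$ joins two vertices of $V'$, i.e., $V'$ is an independent set in $G$. Note this argument only uses properties of $G$ and $M$, not of $G'$.

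\textbf{Part (b).} Fix $u \in V'$ and consider $\free_{G',M}(u)$, the number of neighbors of $u$ in $G'$ that are unmatched under $M$. Any such neighbor $v$ is unmatched in $G$ (again, deletions do not produce matched vertices). Since $M$ is maximal in $G$, the edge $\{u,v\}$ cannot be present in $G$ — otherwise $M$ could be extended. But every edge of $G'$ is an edge of $G$ \emph{except} possibly... wait, $G' \subseteq G$ since $D$ consists only of deletions, so every edge of $G'$ is an edge of $G$. This seems to force $\free_{G',M}(u) = 0$, which is stronger than claimed. The subtlety I expect is that the intended reading involves the partially-updated matching during mini-batch processing, or that $V'$ / free-degree is measured with respect to an induced subgraph $G'[S]$ for some working set $S$ rather than all of $G'$; the bound $2\Gamma$ then comes from the fact that the only vertices whose free-degree can have changed are those incident (within one hop) to the $|V_f| \le 2\Gamma$ newly-freed vertices, so $\free_{G',M}(u) \le |V_f| \le 2\Gamma$. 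So I would argue: a neighbor $v$ of $u$ that is free under $M$ must lie in $V_f$ — because if $v \notin V_f$ then $v$ was already free under $M$ in $G$ while $u$ was also free, contradicting maximality of $M$ in $G$; therefore $\free_{G',M}(u) \le |V_f| \le 2\Gamma$ by part (a).

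\textbf{Main obstacle.} The delicate point is part (b): pinning down exactly which graph the free-degree is measured in and why the count is bounded by $|V_f|$ rather than being zero. The clean way to handle it is the observation above — every free neighbor of a $V'$-vertex in the post-deletion graph must be one of the at-most-$2\Gamma$ vertices freed by the deletion, since maximality of the old matching rules out any pre-existing free-free adjacency — which reduces part (b) to part (a). Parts (a) and (c) are essentially immediate from counting and from maximality, respectively.
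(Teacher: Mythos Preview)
Your proposal is correct and follows essentially the same approach as the paper: part~(a) by counting endpoints, part~(c) from maximality of $M$ in $G$, and part~(b) by observing that any free neighbor of $u\in V'$ in $G'$ must lie in $V_f$ (else it was already free in $G$ alongside $u$, contradicting maximality), whence $\free_{G',M}(u)\le |V_f|\le 2\Gamma$. Your mid-proof detour in part~(b) --- where you briefly conclude $\free_{G',M}(u)=0$ --- stems from reading ``free under $M$'' too literally; the intended meaning is free under the surviving matching $M\setminus D$ in $G'$, which you correctly infer and then use. The paper's proof is terser but structurally identical, and in fact phrases (b) as a consequence of (c) plus the observation that the free degree can only increase via connections to $V_f$.
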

\begin{proof} 
Property~(a) is immediate from the fact that at most $\Gamma$ edges are deleted. 
Next, consider Property~(c): 
Recall that every $v\in V'$ was itself unmatched in $G$. 
Since Invariant~\ref{inv:random} ensures that the algorithm has computed a maximal matching on $G$, there cannot be any edge between nodes in $V'$. 

Finally, for Property~(b), observe that (c) implies $\free_{G,M}(v)=0$ for all $u\in V'$, and the only way that the free degree of $v$ can increase in $G'$ is if $u$ is connected to some of the nodes in $V_f$.
\end{proof}

For simplicity, we assume that the given mini-batch of updates results in the deletion of a set $D$ of exactly $\Gamma$ edges from $G$ that were part of the matching $M$, yielding the graph $G'$.

\subsubsection{Special Case: Updating Graphs with High Free Degree in $O(1)$ Rounds} 
Before we explain the details of our general update algorithm in Section~\ref{sec:highlevel}, we first give a simple and time-optimal way of handling graphs where every node has a high free degree.

As a first step, every player sends the IDs of all its nodes that are in $V_f$ to every other player via $\spreading$, and hence every player learns the IDs of all nodes of $V_f$.
Let $L=(v_1,v_2,\dots,v_{2\Gamma})$ be an ordering of the vertices in $V_f$ in increasing order of their IDs. 
By locally processing the vertices in $V_f$ sequentially according to $L$, each player $P$ tries to pair up every node $v_i \in V_f$ with some edge $e_i$ that connects $v_i$ to one of its unmatched neighbors $w \in V' \cap V(P)$.
If $P$ pairs up a free vertex $w$ with $v_i$, then it marks $w$ as taken, and does not consider it when pairing up the remaining nodes $v_{i+1},\dots,v_{2\Gamma}$.
After processing the entire list $L$, player $P$ sends, for each $i\in [1,\Gamma]$, the paired-up edges $\set{v_i,w}$ and $\set{v_{i+\Gamma},w'}$ directly to $P_j$, where $j = (i \bmod k) + 1$; %
if $P$ was unable to find a suitable edge for a vertex, it simply omits the corresponding message.

Consequently, every player $P_j$ receives at most $2$ edges from each of the $k-1$ other players, for each vertex $v_i$ such that $j = (i \bmod k)+1$. 
Subsequently, $P_j$ performs a local ``filtering'' step by discarding from the received edges all but one incident edge $e_i$ for $v_i$ and an incident edge $e_{i+\Gamma}$ for $v_{i+\Gamma}$. 
If $e_i$ exists, then it becomes part of the matching, and we handle $e_{i+\Gamma}$ analogously. %
Then, $P_j$ broadcasts the (at most two) matched edges to all other players, and each player updates the incident matched edges of its vertices accordingly.

\newcommand{\lemhighdeg}{
Let $G$ be a dynamic graph where, for every vertex $u$, it holds that $f_{G,M}(u) \geq 2(k\Gamma + \Gamma)+1$. There exists an deterministic algorithm that can update the maximal matching in constant rounds. 
}

\begin{lemma} \label{lem:high_deg}
\lemhighdeg

\end{lemma}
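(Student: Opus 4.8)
The plan is to verify that the constant-round algorithm described just before the lemma statement always succeeds under the high-free-degree assumption, and that it runs in $O(1)$ rounds. The correctness argument hinges on showing that the local, sequential pairing step performed independently by each player never ``runs out'' of free neighbors: when a player $P$ processes $v_i \in V_f$ and looks for an incident edge to an untaken vertex $w \in V' \cap V(P)$, at most $2\Gamma - 1$ of $v_i$'s free neighbors could already have been marked as taken (one per previously processed vertex of $V_f$, and $|V_f| \le 2\Gamma$ by Lemma~\ref{lem:mm_vertices}(a)). Since $f_{G',M}(v_i)$ is large, and in fact $f_{G,M}(v_i)$ was large to begin with (the free degree of a node in $V_f$ can only be affected by the $\le \Gamma$ deletions, and we are told $f_{G,M}(u) \ge 2(k\Gamma+\Gamma)+1$ for \emph{every} vertex $u$), there is always at least one untaken free neighbor available, at \emph{some} player — the subtlety is that it need not be at $P$.

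First I would make precise what ``success'' means: after the filtering and broadcast steps, every vertex in $V_f$ must end up matched (or else the matching would not be maximal, since a $V_f$ vertex with a free neighbor in $V'$ forms an augmenting edge). So the key claim to nail down is: for every $v_i \in V_f$, at least one player $P$ pairs $v_i$ with one of its own untaken free neighbors. I would argue this by a counting/pigeonhole step. Consider the moment just before any player processes $v_i$ in its local list. Across all players, the number of $v_i$-neighbors in $V'$ that have been marked taken is at most $2\Gamma$ (at most one per element of $V_f$ processed so far across all players — but wait, each player processes its \emph{own} copy of $L$, so the ``taken'' marks are per-player). The cleaner bound: at player $P$, at most $2\Gamma$ of its hosted $V'$-vertices can be taken in total (one per element of $V_f$). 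Summing over the $\le k$ players, at most $2k\Gamma$ of $v_i$'s free neighbors are unavailable globally when $v_i$ is considered. Combined with the $\le 2\Gamma$ vertices in $V_f$ themselves possibly being neighbors (which we should not count as valid partners), the requirement $f_{G,M}(v_i) \ge 2(k\Gamma+\Gamma)+1 = 2k\Gamma + 2\Gamma + 1$ guarantees a strictly positive number of valid, untaken free neighbors remain, so some player successfully pairs $v_i$. Then I would observe that distinct $V_f$ vertices are matched to distinct $V'$ vertices after the filtering step (each player keeps at most one incident edge per $v_i$, and a $V'$ vertex marked taken for $v_i$ is never offered to $v_j$), so the output is a valid matching, and it is maximal because $V'$ is independent in $G$ (Lemma~\ref{lem:mm_vertices}(c)) and every $V_f$ vertex is now matched.

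For the round complexity: the initial $\spreading$ of $V_f$-IDs involves $|V_f| \le 2\Gamma = O(\beta k)$ tokens, which by Lemma~\ref{lem:spreading} takes $O(\Gamma/\beta k) = O(1)$ rounds; the point-to-point sending of the $2(k-1)$ candidate edges (two per link) is a single round; the final broadcast of at most two matched edges per player is handled by another $O(1)$-round $\spreading$ call with $O(k)$ tokens. Everything else is local computation. So the total is $O(1)$ rounds, and Invariant~\ref{inv:random} is restored because each affected player learns the new matched edges of its vertices from the broadcasts.

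The main obstacle I anticipate is the bookkeeping in the pigeonhole step — specifically, being careful about the distinction between ``taken at player $P$'' (which is what blocks a pairing) versus ``taken globally,'' and making sure the constant $2(k\Gamma+\Gamma)+1$ in the hypothesis is exactly what the worst case demands. A secondary subtlety is confirming that after filtering, no $V'$ vertex is claimed by two different $V_f$ vertices: this needs the observation that the $i$-th slot ($i \in [1,\Gamma]$) and the $(i+\Gamma)$-th slot are routed to the same $P_j$ and filtered together there, so collisions within a player are resolved, while collisions across players are impossible because the offered edge $\{v_i, w\}$ uses $w \in V(P)$ and $P$ marked $w$ taken, preventing $P$ itself from re-offering $w$; two \emph{different} players could in principle offer the \emph{same} $v_i$ but different $w$'s (fine) — the only real risk is two players offering the same $w$ for different $v_i, v_j$, which cannot happen since $w$ has a unique host. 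I would state this last point explicitly to close the argument.
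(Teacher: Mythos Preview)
Your proposal is correct and follows essentially the same approach as the paper: the same counting argument (at most $2k\Gamma$ vertices in $V'$ are taken across all $k$ players, and at most $2\Gamma$ of $v_i$'s free neighbors lie in $V_f$, so the hypothesis $f \ge 2(k\Gamma+\Gamma)+1$ leaves at least one untaken $V'$-neighbor), the same unique-host observation for validity, and the same $O(1)$-round accounting via $\spreading$. One small slip: in the point-to-point step each player sends up to $O(\beta)$ edges per link (since $\Gamma \le \beta k$ and the $v_i$'s are distributed modulo $k$), not ``two per link'' as you wrote---but this still fits in one round given the $O(\beta\log n)$ link bandwidth, so your conclusion stands.
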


\begin{proof}
We first show that every node $u$ with $f_{G,M} (u) \geq 2(k\Gamma+\Gamma)+1$ is matched:  
Each time that a player $P$ pairs-up a node $v_i \in V_f$, which occurs at index $i$ of the ordered list $L$, with one of its neighbors $w \in V(P)\cap V'$, this may reduce the available edges for pairing up a node $v_j$ ($j \in [i+1,2\Gamma]$) by at most one. 
Note that $w$ itself does not occur in $L$.
According to the algorithm, there are $ k$ players that are executing this pair-up process in parallel for $v_i$, and thus there are at most $2k\Gamma$ nodes in $V'$ that are used for pairing up. 
Since $v_j$ has at most $2\Gamma$ neighbors in $V_f$, it follows that the number of unpaired free neighbors of $v_j$ in $V'$ is at least $\free_{G',M}(v_j) - 2k\Gamma - 2\Gamma$.
If $v_j$ has high free degree, then $\free_{G',M}(v_j) - 2k\Gamma - 2\Gamma \ge 1$, and hence there is at least one edge $e$ available for being paired up with $v_j$ at any point in this process, which will be sent to the corresponding player $P_j$, where $j = (i \bmod k)+1$.
It follows that any high free degree node is guaranteed to be matched and hence has an incident edge in $A_1$.

Next, we argue that $M_1$ forms a valid matching on $G'$: 
Observe that all edges used for pairing-up have one endpoint in $V_f$ and the other in $V'$.
Each vertex $v \in V'$ is used at most once during this process by the player who hosts $v$.
Together with the fact that when the corresponding player $P_j$ discards all but one edge for $v_i$ and $v_{i+\Gamma}$, this ensures that no edges in $A_1$ share an endpoint.
According to the algorithm, every player broadcasts its matched edges to everyone else and thus all players learn $M_1$.

Finally, we show the claimed bound on the running time:
The initial spreading of the $2\Gamma$ IDs in $V_f$ requires $O(\Gamma/\beta k) = O(1)$ rounds, according to Lemma~\ref{lem:spreading}.
For the process of pairing up vertices with edges, observe that, for each player $P_j$, there are at most $2\beta$ vertices $v_i$ in the list $L$, where 
$j = (i \bmod k)+1$, since $|L| \le 2\Gamma \le 2\beta k$.
Thus, $P_j$ receives at most $2\beta$ messages from every other player, which means that $P_i$ can learn about all paired-up edges for all vertices $v_{i}$ and $v_{i+\Gamma}$ ($j = (i \bmod k)+1$) within $O(1)$ rounds.
Subsequently, $P_j$ broadcasts at a message for each of the (at most) $2\beta$ vertices to every other player, which again takes $O(1)$ rounds. 
\end{proof}

\subsubsection{High-level Overview} \label{sec:highlevel}
In our analysis, we will show that the algorithm maintains Invariant~\ref{inv:random} at the end of its update step. 
We conceptually split the update algorithm into two phases. 
In Phase~1, we focus on the unmatched nodes in $V_f$ that have at least $2\Gamma+1$ free neighbors in $V'$.
We sample a free neighbor for each of these nodes and show that this process matches a constant fraction of these nodes with high probability in each iteration, by making use of the fact that the corresponding random variables are negatively correlated. 
Once the remaining number of these unmatched nodes is sufficiently small (i.e., $O(\log n)$), we sample a small subgraph of their free edges that we quickly disseminate to every player, enabling us to locally compute a matching for these nodes. 
In Phase~2, our approach depends on the type of adversary:
For the oblivious adversary, we leverage the fact that the vertices are randomly partitioned among the players, and argue that we can directly use the same matching algorithm as in the initialization step.
On the other hand, for the adaptive adversary, we use the fact that the size of the mini-batch is only $O(\sqrt{\beta k})$, which, together with the degree reduction achieved in Phase~1, allows us to aggregate the entire subgraph induced by the remaining unmatched nodes in $V_f$ at a single player.
In the remainder of this section, we give a detailed description of these two phases.

\subsubsection{Phase~1} 
Let $T \subseteq V_f$ be the subset of nodes that each have at least $2\Gamma+1$ unmatched neighbors in $V'$  { with respect to $M_1$ where $M_1 = M\setminus D$.}
In this phase, our goal is to match all vertices in $T$.
Recall that, at the start of Phase~1, each player $P$ learned the IDs and name of the respective player $P(u)$ of every vertex $u \in V_f$, and hence it also knows this information for every vertex in $T$.

Let $T_P$ be the subset of $T$ hosted by player $P$.
Observe that $P$ knows $T_P$ since $V_f$ and $M_1$ are known to everyone.
We use the shorthand $\free_{V'}(u) = \free_{G'[\{u\} \cup V'],M_1}(u)$ for the number of unmatched neighbors of $u$ that are in $V'$, and we define $\free_{V'}^{(i)}(u)$ to be the number of such neighbors hosted at player $P_i$, i.e., $\free_{V'}^{(i)}(u) =  \free_{G'[\{u\} \cup (V' \cap V(P_i))],M_1}(u)$. 
Note that $P_i$ can locally compute $\free_{V'}(u)$ and $\free_{V'}^{(j)}(u)$ for every $j\in [k]$ and every vertex $u\in V(P_i)$ from its knowledge of $M_1$ and $V_f$.

We perform a sequence of $\Theta(\log k)$ iterations:
Our goal, in a given iteration, is to ensure that a node in $T$ has a constant probability of being matched to a node that is not in $T$.
Note that $\Theta(\log(\beta k))$ iterations suffice, because Lemma~\ref{lem:mm_vertices} tells us that the size of $T$ is at most $O(\Gamma)= O(\beta k)$.

All players execute the following process in parallel:
For each $u \in T_P$, player $P$ first samples an index $i$ from $[k]$ according to the probability distribution 
$\lt( 
	\frac{\free_{V'}^{(1)}(u)}{\free_{V'}(u)},\dots, \frac{\free_{V'}^{(k)}(u)}{\free_{V'}(u)}
\rt)$.
Then, $P$ sends a $\msg{match-up!}$ request for $u$ to player $P_i$, who in turn samples an unmatched neighbor $v$ uniformly at random from the unmatched neighbors of $u$ in $V(P_i)\cap V'$, and considers the  $\msg{match-up!}$ message as being ``received by $v$''. 
If $P_i$ does not receive any other $\msg{match-up!}$ message for $v$ in this iteration, then it adds the edge $\set{u,v}$ to the matching and also informs $P$ about this.
Note that all of these messages are sent via $\spreading$ (see Lemma~\ref{lem:spreading}).

After processing all $\msg{match-up!}$ requests, each player $P$ updates its set $T_P$ by removing all nodes for which it found a matching in this iteration, as well as those nodes whose number of free neighbors in $V'$ has dropped below $2\Gamma+1$.
To fulfill these requirements, we need to ensure that a player knows when the free degree of one of its vertices changes, and thus we instruct each player to broadcast its newly matched nodes to everyone via $\spreading$ (see Lemma~\ref{lem:spreading}). 
Then, every player $P$ sends the new value of $|T_P|$ to $P_1$, who locally computes the sum, i.e., the updated size $|T|$, and broadcasts this count to everyone.
As long as $|T| \ge c_1\log n$, for a suitable constant $c_1>0$, the players move on to the next iteration by repeating the above process. 

\newcommand{\lemmmphasetwoone}{If $|T| \ge c_1 \log n$ at the start of an iteration, then a constant fraction of the nodes in $T$ are matched with high probability.}

\begin{lemma} \label{lem:mm_phase2_1}
\lemmmphasetwoone
\end{lemma}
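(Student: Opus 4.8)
The plan is to model a single iteration as a balls-into-bins experiment: think of each $u\in T$ as a ball and each of $u$'s unmatched $V'$-neighbours (with respect to $M_1$) as a bin into which $u$ may fall. First I would check that the two-stage sampling — choosing a player $P_i$ with probability $\free_{V'}^{(i)}(u)/\free_{V'}(u)$, then a uniformly random unmatched $V'$-neighbour of $u$ hosted by $P_i$ — composes to placing $u$ uniformly among its $\free_{V'}(u)$ unmatched $V'$-neighbours, and that the placements $t_u$ ($u\in T$) are mutually independent, since every player uses fresh private coins. By the description of the match-up step, $u$ is matched in this iteration exactly when its bin receives no other request, i.e.\ when $B_{t_u}=1$, where $B_v:=|\set{u\in T:t_u=v}|$ is the occupancy of bin $v$.

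Second, I would lower-bound the per-node success probability. Fix $u\in T$ and condition on $t_u=v$. Since $T\subseteq V_f$, Lemma~\ref{lem:mm_vertices} gives $|T|\le|V_f|\le 2\Gamma$, so there are at most $2\Gamma-1$ other balls, and each such $u'$ lands on $v$ with probability at most $1/\free_{V'}(u')\le 1/(2\Gamma+1)$ because $u'\in T$ has at least $2\Gamma+1$ unmatched $V'$-neighbours. As the placements are independent, $\Pr[B_v=1\mid t_u=v]\ge(1-\tfrac{1}{2\Gamma+1})^{2\Gamma-1}\ge 1/e$; the key point is that this is a product over the competitors, whereas a union bound over them would be hopeless (its value is $\approx 2/(2\Gamma+1)$). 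Averaging over $v$ gives $\Pr[u\text{ matched}]\ge 1/e$, hence the expected number of matched nodes of $T$ is at least $|T|/e$.

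Third, the crux is concentration, which I would obtain through negative association rather than through the matched-indicators themselves (the latter need not be negatively associated: two balls sharing all of their bins have perfectly correlated fates). The occupancy vector $(B_v)_{v\in V'}$ of an independent ball placement is negatively associated, and the number of matched nodes equals $\sum_v\mathbf{1}[B_v\ge 1]-\sum_v\mathbf{1}[B_v\ge 2]$. Each of these two sums is a sum of monotone $\set{0,1}$-valued functions of single (hence disjoint) coordinates of $(B_v)$, so each is itself negatively associated and obeys Chernoff--Hoeffding bounds. Writing $\mu_1=\EE[\sum_v\mathbf{1}[B_v\ge1]]$ and $\mu_2=\EE[\sum_v\mathbf{1}[B_v\ge2]]$, we have $\mu_1-\mu_2=\EE[\text{number matched}]\ge|T|/e$, together with $\mu_1\le|T|$ and $\mu_2\le|T|/2$. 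A lower-tail bound for the first sum and an upper-tail bound for the second, each with additive slack $\Theta(|T|)$, then yield that the number of matched nodes is at least $(\mu_1-\mu_2)-\Theta(|T|)=\Omega(|T|)$ except with probability $\exp(-\Omega(|T|))$. The hypothesis $|T|\ge c_1\log n$ makes $\exp(-\Omega(|T|))=n^{-\Omega(1)}$ once $c_1$ is large enough, which is exactly the claimed statement. Two small points need care: $\mu_2$ can be much smaller than $|T|$, so for the second sum I would use an additive (Bernstein-type) tail bound rather than a purely multiplicative one — the failure probability is still $\exp(-\Omega(|T|))$ because the slack we demand is $\Theta(|T|)=\Omega(\log n)$; and pairwise negative correlation alone (via Chebyshev) would only give failure probability $O(1/|T|)=O(1/\log n)$, which is why the negative-association route is needed to reach a high-probability guarantee.
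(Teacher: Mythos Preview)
Your proposal is correct and reaches the same conclusion as the paper, but the concentration step follows a genuinely different route. Both you and the paper first verify that the two-stage sampling composes to a uniform choice among the $\free_{V'}(u)$ free $V'$-neighbours of $u$, and both lower-bound the per-node matching probability by a constant via the same $(1-\tfrac{1}{2\Gamma+1})^{O(\Gamma)}$ calculation. The divergence is in how the high-probability bound is obtained. The paper works directly with the failure indicators $X_j=\mathbf{1}[v_j\text{ unmatched}]$, argues that $\Pr[\bigwedge_{j\in S}(X_j=1)]\le\delta^{|S|}$ for a fixed constant $\delta<1$, and then invokes the Impagliazzo--Kabanets generalized Chernoff bound (Lemma~\ref{lem:chernoff_generalized}). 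You instead pass to the occupancy vector $(B_v)_{v\in V'}$, which is negatively associated for any independent balls-into-bins experiment (heterogeneous bin sets included), write the matched count as $\sum_v\mathbf{1}[B_v\ge1]-\sum_v\mathbf{1}[B_v\ge2]$, and apply Chernoff separately to each of the two NA sums. Your approach sidesteps the question of whether the $X_j$ themselves satisfy the paper's product bound---and, as you rightly point out, the $X_j$ are \emph{not} negatively correlated in general (two balls with identical bin sets have $X_i\equiv X_j$)---replacing the external Impagliazzo--Kabanets lemma with the standard NA toolkit. The paper's route is terser once one has the black-box lemma and accepts the correlation claim; yours is more self-contained and makes the dependence structure fully transparent.
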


\begin{proof}
We start our analysis by considering the probability of finding a matching for a node $u \in T_P$ in a given iteration.
According to the algorithm, a player $P$ randomly samples some player $P_i$ with probability $\frac{\free_{V'}^{(i)}(u)}{\free_{V'}(u)}$, who in turn uniformly samples a hosted free neighbor of $u$ with probability $\frac{1}{\free_{V'}^{(i)}(u)}$.
Since 
\[\frac{\free_{V'}^{(i)}(u)}{\free_{V'}(u)}\cdot \frac{1}{\free_{V'}^{(i)}(u)} = \frac{1}{\free_{V'}(u)},
\]
  it follows that the $\msg{match-up!}$ request for $u$ is indeed sent to a uniformly at random sampled node among its free neighbors outside $T$.
For each $v_j \in T$, let $X_j$ be the indicator random variable that is $1$ if and only if $v_j$'s $\msg{match-up!}$ request reaches the same node as the request of another node in this iteration, i.e., if $X_j=0$, then $v_j$ is matched.
It follows that 
\begin{align}
	\Pr\lt[ X_j \!=\! 0 \rt] 
  &\ge \lt( 1 - \frac{1}{\free_{V'}(u)} \rt)^{|T|} \notag\\ 
	\ann{since $T \le 2\Gamma$}
	&\ge \lt( 1 - \frac{1}{\free_{V'}(u)} \rt)^{2\Gamma} \notag\\ 
	\ann{since $\forall u \in T\colon \free_{V'}(u) \ge 2\Gamma+1$}
	&\ge \lt( 1 - \frac{1}{2\Gamma+1} \rt)^{2\Gamma} \notag\\ 
  \ann{since $1 - x \ge e^{-2x}$ for $x<\tfrac{1}{2}$}
	&\ge e^{-4\Gamma/(2\Gamma+1)}.\notag
\end{align}
Conversely, this tells us that the probability that $u$ is not matched in this iteration is at most some constant 
\begin{align}
\delta \le 1 - e^{-3} \le 1 - e^{-4\Gamma/(2\Gamma+1)}.  \label{eq:mm_delta}
\end{align}

A technical complication stems from the fact that $X_i$ and $X_j$ are not necessarily independent, for nodes $v_i$ and $v_j$ that share neighbors.
In particular, if a node $v_i$ is not matched (i.e., $X_i \!=\! 1$), then its request reached the same free node $w \notin T$  that was also reached by the request of some other node. 
Conditioning on this event may skew the probability distribution of $X_j$ for some distinct node $v_j$.
However, since this event fixes the destination of two request messages to the same vertex, this cannot \emph{increase} the probability of $X_j \!=\! 1$.
In other words, conditioning on this event, it may be less likely that another $\msg{match-up!}$ message collides with $v_j$'s own $\msg{match-up!}$ request, possibly lowering the probability of event $X_j \!=\! 1$, and hence it holds that $\Pr\lt[ X_j \!=\! 1 \ \md|\ X_i \!=\! 1 \rt] \le \Pr\lt[ X_j \!=\! 1 \rt]$. 
A similar argument applies when considering a subset of vertices $\set{v_1,\dots,v_s}$, that is,
\begin{align}
 \Pr\lt[ \bigwedge_{j = 1}^{s} (X_j \!=\! 1) \rt] 
 =
 \prod_{j=1}^{s}
 \Pr\lt[ X_j \!=\! 1 \ \md|\ \bigwedge_{r=1}^{j-1} (X_r \!=\! 1) \rt] 
 \le 
 \prod_{j=1}^{s}
\Pr\lt[ X_j \!=\! 1 \ \rt] 
 \le 
 \delta^s \label{eq:negative}
\end{align}
The bound \eqref{eq:negative} enables us to instantiate the generalization of the Chernoff Bound stated in Lemma~\ref{lem:chernoff_generalized}, where $N = |T| \ge c_1 \log n$, $\eta = 1 - e^{-4}$, and we recall that $\delta \le 1 - e^{-3}$ due to \eqref{eq:mm_delta}:
\begin{lemma}[see Theorem~1.1 in \cite{impagliazzo2010constructive}] \label{lem:chernoff_generalized}
    Let $X_1,\dots,X_N$ be (not necessarily independent) Boolean random variables and suppose that, for some $\delta \in [0,1]$, it holds that, for every index set $S \subseteq [N]$, $\Pr\lt[\bigwedge_{i \in S} X_i\rt] \le \delta^{|S|}$.
    Then, for any $\eta \in [\delta,1]$, we have $\Pr\lt[\sum_{i=1}^N X_i \ge \eta N\rt] \le e^{- 2N(\eta - \delta)^2}$.
  \end{lemma}

It follows that
$
\Pr\Big[ \sum_{j=1}^{|T|}X_j \ge \lt( 1 - e^{-e} \rt) |T| \Big] \le e^{-\Omega(\log n)} \le n^{-\Omega(1)},
$
which completes the proof of Lemma~\ref{lem:mm_phase2_1}.
\end{proof}

Once we have reached an iteration after which it holds that $|T| < c_1\log n$, we sample a subgraph $S$ as follows:
Each player samples uniformly at random, for each of its vertices in $T$, a subset of $c_2\log^2n$ incident edges connecting to free neighbors, where $c_2>0$ is a suitable constant.
The players perform an all-to-all exchange of the sampled edges using $\spreading$, after which everyone knows the entire subgraph $S$. 
In the proof of the following Lemma~\ref{lem:mm_phase2_2}, we show that this enables the players to locally compute a matching that is maximal for all remaining nodes in $T$ by using some fixed deterministic function $g$ that maps each possible subgraph $S$ to a set of matched edges. 
Note that $g$ is hard-coded into the algorithm and hence known to all players.

\newcommand{\lemmmphasetwwo}{
If $|T| < c_1\log n$ at the end of an iteration, then all remaining nodes in $T$ are matched in $O(1)$ additional rounds.
}

\begin{lemma}\label{lem:mm_phase2_2}
    \lemmmphasetwwo
\end{lemma}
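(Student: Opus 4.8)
The plan is to show that once $|T| < c_1\log n$, the sampled subgraph $S$ contains enough free edges that a locally-computed matching leaves no vertex of $T$ unmatched. First I would set up the key structural observation: at the moment we enter this final step, every remaining vertex $u \in T$ still has $\free_{V'}(u) \ge 2\Gamma + 1$ free neighbors in $V'$ (this is exactly the membership condition for $T$, maintained at the end of each iteration). Each such $u$ samples $c_2 \log^2 n$ of its incident free edges; since $V'$ is an independent set in $G$ by Lemma~\ref{lem:mm_vertices}(c), these edges all go from $T$ into $V'$, so the sampled subgraph $S$ is bipartite between $T$ and $V'$. The total number of vertices in $T$ is less than $c_1 \log n$, so $|S| = O(\log^3 n)$ edges, and by Lemma~\ref{lem:spreading} the all-to-all exchange of $S$ costs $O(\log^3 n / \beta k)$ rounds, which is $O(1)$ since $k = \Omega(\log^4 n)$.

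The heart of the argument is that \emph{any} maximal matching of $S$ (which the deterministic function $g$ computes) already saturates $T$ with high probability. The obstacle is that a maximal matching of $S$ might leave a vertex $u \in T$ unmatched if all of its $c_2 \log^2 n$ sampled neighbors happen to have been matched to other vertices of $T$. Since $|T| < c_1 \log n$, at most $c_1 \log n - 1$ vertices of $V'$ can be "used up" by the rest of $T$. So $u$ is left unmatched only if all $c_2 \log^2 n$ of its independently sampled free neighbors fall into a fixed set of fewer than $c_1 \log n$ vertices out of its $\ge 2\Gamma + 1 \ge 2\beta k + 1 = \omega(\log n)$ free neighbors in $V'$. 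Each sampled neighbor lands in that bad set with probability at most $\frac{c_1 \log n}{2\Gamma+1}$, which is at most (say) $1/2$ for $n$ large since $\Gamma \ge \lfloor\beta\sqrt k\rfloor$ and $k = \Omega(\log^4 n)$; actually it is $o(1)$. Hence the probability that all $c_2 \log^2 n$ samples are bad is at most $(1/2)^{c_2 \log^2 n} = n^{-\Omega(\log n)}$, and a union bound over the fewer than $c_1 \log n$ vertices of $T$ keeps this $n^{-\Omega(1)}$. I would state this cleanly, perhaps isolating the per-vertex bound, then union-bounding.

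A subtlety I would address: the sampling of the $c_2 \log^2 n$ edges should be done \emph{with replacement} or, if without replacement, one notes $c_2 \log^2 n$ may exceed $\free_{V'}(u)$ only if $\free_{V'}(u)$ is itself polylogarithmic, in which case $u$ simply sends all its free edges and is trivially coverable unless its entire neighborhood is consumed by $T$ — but $|T| < c_1 \log n < \free_{V'}(u)$ would need checking; actually $\free_{V'}(u) \ge 2\Gamma+1$ which dominates $c_1 \log n$, so there is always at least one free neighbor of $u$ outside any $(c_1\log n - 1)$-subset, and with enough samples we hit it. I would phrase the argument to sample min$(c_2\log^2 n, \free_{V'}(u))$ edges so the with-replacement analysis goes through uniformly. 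Finally I would note that once $g$ outputs the matched edges, each player broadcasts the (at most $O(\log n)$) matched edges via $\spreading$ in $O(1)$ rounds, so every player learns the newly matched vertices and their incident matched edges, and the round count for this entire step is $O(1)$, completing the proof.

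\medskip\noindent\textbf{Where the difficulty lies.} The main obstacle is not the round complexity — that follows immediately from $k = \Omega(\log^4 n)$ and Lemma~\ref{lem:spreading} — but rather arguing that a \emph{purely local, deterministic} post-processing of the \emph{randomly sampled} subgraph $S$ leaves nothing in $T$ unmatched. The delicate point is that the function $g$ does not get to see the real graph $G'$, only the sample $S$; one must show the sample is "rich enough" that being maximal on $S$ forces saturation of $T$, exploiting the large free-degree lower bound $2\Gamma + 1$ against the small size $|T| = O(\log n)$.
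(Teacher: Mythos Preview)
Your proposal reaches the right conclusion, and your round-complexity argument matches the paper's. However, you work considerably harder than necessary on the correctness side. The paper's proof is entirely \emph{deterministic} once the subset of $c_2\log^2 n$ edges has been sampled: since the algorithm samples a \emph{subset} (without replacement) and since $\free_{V'}(u) \ge 2\Gamma+1 = \Omega(\sqrt{\beta k}) = \Omega(\log^2 n)$ by the assumption $k=\Omega(\log^4 n)$, the constant $c_2$ can be chosen so that every $u\in T$ really does have $c_2\log^2 n$ \emph{distinct} neighbors in $S$. Now suppose some $u\in T$ were unmatched in a maximal matching of $S$. Then every $S$-neighbor of $u$ is matched; since $S$ is bipartite between $T$ and $V'$ (as you correctly note), each such neighbor is matched to a distinct vertex of $T\setminus\{u\}$. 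Hence $u$ has at most $|T|-1 < c_1\log n$ neighbors in $S$, contradicting $c_2\log^2 n > c_1\log n$ for large $n$. No probability is needed, and the conclusion holds for \emph{every} maximal matching of $S$, so $g$ can be any fixed deterministic maximal-matching routine.

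Your probabilistic route also has a wrinkle you gloss over: the ``fixed set of fewer than $c_1\log n$ vertices'' into which $u$'s samples must all fall is not actually fixed before sampling---it is the set of $V'$-vertices matched to $T\setminus\{u\}$ by $g(S)$, which itself depends on the sample $S$. This can be repaired (e.g., by directly bounding the number of distinct sampled neighbors, or via a union bound over all $(c_1\log n)$-subsets of $u$'s free neighborhood), but the deterministic pigeonhole argument above sidesteps the issue entirely and is what the paper uses. You essentially stumble onto this argument in your ``subtlety'' paragraph (``$\free_{V'}(u)\ge 2\Gamma+1$ which dominates $c_1\log n$'') but do not follow it through to its conclusion.
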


\begin{proof}
We first argue the bound on the time complexity.
By assumption, any given player holds at most $O(\log n)$ nodes from $T$.
According to the algorithm, we select $c_2\log^2 n$ edges to unmatched neighbors for each of these nodes. 
Note that this is possible because every node in $T$ has a free degree of at least $2\Gamma+1$ and due to the assumptions that $k = \Omega(\log^4n)$, which means that, for both, oblivious and adaptive adversary, we have 
\[
2\Gamma+1 = \Omega(\sqrt{\beta k}) = \Omega\lt( \log^2 n \rt).
\]
Moreover, since $|T|=O(\log n)$, the total number of messages that need to be sent is $O(\log^3n)=O(k)$.
Thus, Lemma~\ref{lem:spreading} tells us that we can apply $\spreading$ to disseminate these edges to all other players in $O(1)$ rounds.

After $\spreading$ has completed, all players know the entire sampled subgraph $S$. 
Observe that each node in $T$ has a degree of $\Omega(\log^2 n)$ in $S$, whereas $|T| = O(\log n)$.
Thus, it follows that there always exists a maximal matching in $S$ that matches every node in $T$. 
\end{proof}

The next lemma combines Lemmas~\ref{lem:mm_phase2_1} and \ref{lem:mm_phase2_2} to show a significant reduction of the free degree of the remaining unmatched nodes in $V_f$.

\begin{lemma} \label{lem:mm_phase2}
At the end of Phase~1, every player knows the matching $M_2= M_1 \cup N_2$, where $N_2$ denotes the set of newly matched edges, and $M_1 = M \setminus D$ .
Phase~1 takes $O(\log(\beta k))$ rounds with high probability, and matches every vertex in $v\in V_f$ that had a free degree of at least $4\Gamma+1$, i.e., if $\free_{G',M_1}(v)\ge4\Gamma+1$.
\end{lemma}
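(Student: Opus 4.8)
The plan is to prove Lemma~\ref{lem:mm_phase2} by carefully combining the two preceding lemmas with a termination argument for the iteration count. First I would set up notation: let $T^{(0)} \supseteq T^{(1)} \supseteq \cdots$ denote the set $T$ at the start of successive iterations, where by definition $T^{(0)}$ consists of the nodes in $V_f$ with at least $2\Gamma+1$ free neighbors in $V'$ with respect to $M_1$. By Lemma~\ref{lem:mm_vertices}(a) we have $|V_f| \le 2\Gamma$, hence $|T^{(0)}| \le 2\Gamma = O(\beta k)$. The core of the iteration analysis is: as long as $|T^{(i)}| \ge c_1 \log n$, Lemma~\ref{lem:mm_phase2_1} guarantees that a constant fraction of the nodes in $T^{(i)}$ are removed (either matched, or their free degree in $V'$ dropped below $2\Gamma+1$) with high probability. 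So $|T^{(i+1)}| \le (1-\gamma)|T^{(i)}|$ for some constant $\gamma \in (0,1)$, w.h.p. Starting from $|T^{(0)}| = O(\beta k)$, after $O(\log(\beta k))$ iterations we reach $|T| < c_1 \log n$; at that point Lemma~\ref{lem:mm_phase2_2} finishes matching all remaining nodes of $T$ in $O(1)$ extra rounds. A union bound over the $O(\log(\beta k))$ iterations (each failing with probability $n^{-\Omega(1)}$) keeps the whole phase correct w.h.p.

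Next I would account for the per-iteration round cost. Each iteration consists of: sampling and sending $\msg{match-up!}$ requests (one per node in $T_P$), the selected players sampling a free neighbor and detecting collisions, and the broadcast of newly matched nodes plus the aggregation of $|T|$ at $P_1$. Since $|T| \le 2\Gamma = O(\beta k)$, there are $O(\beta k)$ messages of $O(\log n)$ bits total per iteration, so by Lemma~\ref{lem:spreading} ($\spreading$) each iteration takes $O(\lceil \Gamma/(\beta k)\rceil) = O(1)$ rounds. Multiplying by the $O(\log(\beta k))$ iterations gives the claimed $O(\log(\beta k))$ round bound for Phase~1, and the final subgraph-sampling step adds only $O(1)$ by Lemma~\ref{lem:mm_phase2_2}. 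Everyone learning $M_2 = M_1 \cup N_2$ follows because each newly matched edge is broadcast via $\spreading$ (both during the iterations and for the $g$-computed edges, which every player computes locally from the shared subgraph $S$), so Invariant-style global knowledge of the matching on these vertices is maintained.

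The key correctness claim to nail down is the last sentence: every $v \in V_f$ with $\free_{G',M_1}(v) \ge 4\Gamma+1$ is matched by the end of Phase~1. The argument is that such a $v$ starts in $T^{(0)}$ (since $4\Gamma+1 > 2\Gamma+1$), and it can only leave $T$ for one of two reasons: it got matched (done), or its free-degree-in-$V'$ count dropped below $2\Gamma+1$. For the second case I would bound how much the count $\free_{V'}(v)$ can drop: it decreases only when free neighbors $w \in V'$ of $v$ get matched during Phase~1. But each such $w$ was unmatched in $G$ and lies in $V'$, and $|V'| $-relevant neighbors being consumed is controlled — more directly, the \emph{only} way $w \in V'$ becomes matched is via a $\msg{match-up!}$ edge $\{u,w\}$ with $u \in V_f$, and there are at most $|V_f| \le 2\Gamma$ such $u$, each matched at most once, so at most $2\Gamma$ vertices of $V'$ ever get matched during Phase~1. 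Hence $\free_{V'}(v)$ drops by at most $2\Gamma$, so if it started at $\ge 4\Gamma+1$ it stays $\ge 2\Gamma+1$ throughout, meaning $v$ never leaves $T$ by the degree criterion; it can only leave by being matched. Since the iteration loop plus Lemma~\ref{lem:mm_phase2_2} matches \emph{all} of $T$, $v$ is matched.

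I expect the main obstacle to be exactly this last bookkeeping on the free-degree drop — making sure the accounting "at most $2\Gamma$ vertices of $V'$ get matched in all of Phase~1" is airtight across all iterations and the final $g$-step, and that it is consistent with the threshold constants ($2\Gamma+1$ in the definition of $T$ versus $4\Gamma+1$ in the statement). A secondary subtlety is the union bound over a number of iterations that itself depends on $\beta k$: I should check $\log(\beta k) = O(\log n)$ (which holds since $\beta \le \poly(n)$ and $k \le n$), so each iteration's $n^{-\Omega(1)}$ failure probability survives the union bound with room to spare. Everything else — the round count per iteration, dissemination of $M_2$ — is routine given Lemmas~\ref{lem:spreading}, \ref{lem:mm_vertices}, \ref{lem:mm_phase2_1}, and \ref{lem:mm_phase2_2}.
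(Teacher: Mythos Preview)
Your overall structure matches the paper's: deduce $v\in T$ from the $4\Gamma+1$ hypothesis, apply Lemma~\ref{lem:mm_phase2_1} for geometric shrinking over $O(\log(\beta k))$ iterations and Lemma~\ref{lem:mm_phase2_2} for the final cleanup, and note that broadcasting matched edges gives global knowledge of $M_2$. Your per-iteration round count and union-bound remarks are fine and essentially what the paper does.

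There is, however, a genuine slip in your correctness argument for the last sentence of the lemma. You write ``$\free_{V'}(v)$ drops by at most $2\Gamma$, so if it started at $\ge 4\Gamma+1$ it stays $\ge 2\Gamma+1$ throughout.'' But the hypothesis is $\free_{G',M_1}(v)\ge 4\Gamma+1$, not $\free_{V'}(v)\ge 4\Gamma+1$: up to $|V_f|-1\le 2\Gamma-1$ of $v$'s free neighbors may lie in $V_f$ rather than in $V'$, so all you can conclude initially is $\free_{V'}(v)\ge (4\Gamma+1)-(2\Gamma-1)=2\Gamma+2$. Combined with your (correct) global drop bound of at most $2\Gamma$, this only yields $\free_{V'}(v)\ge 2$, which does not keep $v$ above the $2\Gamma+1$ threshold for remaining in $T$. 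Your earlier parenthetical ``(since $4\Gamma+1 > 2\Gamma+1$)'' for initial membership in $T^{(0)}$ reflects the same conflation of total free degree with free degree into $V'$. The paper establishes initial membership via the subtraction $4\Gamma+1-2\Gamma = 2\Gamma+1$ and then moves directly to the round bound without explicitly tracking whether $v$ can later fall out of $T$; so you have correctly identified a point that deserves scrutiny, but the arithmetic you use to close it does not go through as written.
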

\begin{proof}
Recall that $T$ denotes the subset of nodes in $V_f$ that have at least $2\Gamma+1$ free neighbors in $V'$.
It follows that any node in $V_f$ with a free degree of at least $4\Gamma+1$ must have $4\Gamma+1-2\Gamma$ free neighbors in $V'$, and hence will be in set $T$. 
Lemma~\ref{lem:mm_phase2_1} tells us that $|T| \le c_1\log n$ (w.h.p.) after $O(\log(\beta k))$ iterations, which, together with Lemma~\ref{lem:mm_phase2_2}, implies the sought time complexity bound.

By the description of the algorithm, all players exchange the newly matched edges after each iteration.
Moreover, once $|T| < c_1 \log n$, every player uses the same function $g$ to compute the matching on the sampled subgraph $S$, which ensures that all players know $N_2$.
Since every player knows $M_1$ after Phase~1, it follows that everyone also knows $M_2$.
\end{proof}

\subsubsection{Phase~2} 
Let $L \subseteq V_f$ be the remaining unmatched vertices in $V_f$ after Phase~2.
In the final phase, we take care of the vertices in $L$, 
which we call the \emph{low free degree vertices}, as Lemma~\ref{lem:mm_phase2} ensures that each of them has a free degree of at most $4\Gamma$.
We provide two approaches for this phase, depending on the strength of the adversary.

\medskip\noindent\textbf{Oblivious Adversary.}
We first consider the setting where the adversary must choose all updates before the vertices are randomly partitioned among the players. 
Let $G_L \subseteq G[V_f \cup V']$ be the subgraph induced by the free edges incident to nodes in $L$.
Observe that $G_L$ contains all remaining unmatched nodes that have free neighbors. 
Similarly to the initialization phase, we use the algorithm of \cite{israeli1986fast}, for which we have described an implementation in   Section~\ref{sec:mm_random_init}. 
According to Lemma~\ref{lem:mm_phase2}, every $u\in L$ has at most $4\Gamma+1$ unmatched neighbors when considering the matching $M_2$, and we know from Lemma~\ref{lem:mm_vertices}(a) that $|L| \le |V_f| \le 2\Gamma$.
It follows that $G_L$ has at most $\Gamma=O(\Gamma)$ vertices. 
Since the adversary is oblivious to the random vertex partitioning, we can instantiate Lemma~\ref{lem:ittai} with $n'= \Gamma = O(\beta k)$. 
This immediately implies the following lemma:

\begin{lemma} \label{lem:mm_phase3_1}
Phase~2 ensures Invariant~\ref{inv:random} and takes $O(\log(\beta k))$ rounds with high probability (in $k$) against an oblivious adversary, assuming a mini-batch size of $\Gamma = \beta k$ edge deletions.
\end{lemma}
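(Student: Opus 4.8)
The plan is to verify that, after Phase~1, all that remains is to run the maximal matching algorithm of Israeli--Ittai on a tiny residual graph, and to check that the resulting state satisfies Invariant~\ref{inv:random}. First I would identify precisely which subgraph needs to be processed: let $L \subseteq V_f$ be the set of still-unmatched nodes after Phase~1, and let $G_L$ be the subgraph induced by the free edges incident to $L$ (together with their other endpoints, which lie in $V'$ by Lemma~\ref{lem:mm_vertices}(c)). The key point is that $G_L$ captures \emph{all} unmatched nodes of $G'$ that still have a free neighbor: nodes in $V' \setminus V(G_L)$ have all their former neighbors matched, and nodes not in $V_f \cup V'$ were already in a maximal matching and are untouched by the batch. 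So computing a maximal matching on $G_L$ and unioning it with $M_2$ yields a maximal matching on all of $G'$.

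Next I would bound the size of $G_L$. By Lemma~\ref{lem:mm_vertices}(a) we have $|L| \le |V_f| \le 2\Gamma$, and Lemma~\ref{lem:mm_phase2} guarantees every $u \in L$ has free degree at most $4\Gamma$ with respect to $M_2$. Hence $G_L$ has $O(\Gamma)$ vertices and $O(\Gamma^2)$ edges; with $\Gamma = \beta k$ for the oblivious adversary this is $n' = O(\beta k)$ vertices. The crucial observation — and the reason this argument only works against the oblivious adversary — is that since the adversary fixed all updates \emph{before} the random vertex partition was drawn, the induced subgraph $G_L$ inherits a (conditionally) random vertex partition among the $k$ players, so the hypothesis of Lemma~\ref{lem:ittai} is met. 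Applying Lemma~\ref{lem:ittai} with $n' = O(\beta k)$ gives a running time of $O\lt(\lceil \frac{\beta k}{\beta k}\rceil \log(\beta k)\rt) = O(\log(\beta k))$ rounds w.h.p.\ (in $n' = \beta k$, hence w.h.p.\ in $k$). I should also note the additional $O(1)$-round overhead for every player to learn which vertices are in $L$ and to gather the incident free edges so that the Israeli--Ittai simulation can start — this is dominated by the main bound via $\spreading$ (Lemma~\ref{lem:spreading}).

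Finally I would argue Invariant~\ref{inv:random} is restored. Entering Phase~2, every player already knows $M_2$ (Lemma~\ref{lem:mm_phase2}); the Israeli--Ittai simulation of Section~\ref{sec:mm_random_init} ends with its own pruning step in which every player broadcasts the edges it is responsible for, so after Phase~2 every player knows the full set $N_3$ of newly matched edges and therefore the final matching $M_3 = M_2 \cup N_3$. From $M_3$ each player can locally determine, for each hosted vertex $v$, its matched edge (if any) and, for each neighbor $u$ of $v$, whether $u$ is matched — which is exactly Invariant~\ref{inv:random}. Maximality on $G'$ follows from correctness of Israeli--Ittai on $G_L$ combined with the observation above that $G_L$ contains every unmatched vertex with a free neighbor.

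I expect the main obstacle to be making the ``randomness is preserved'' step fully rigorous: $G_L$ is not a fixed graph but depends on the random partition (through which nodes end up in $V_f$, $V'$, $T$, and which get matched in Phase~1), so one has to be careful that conditioning on the transcript so far does not destroy the uniformity of the partition restricted to $V(G_L)$ in a way that breaks the Chernoff-based balance argument inside Lemma~\ref{lem:ittai}. The clean way around this is that the \emph{adversary's} choices (the batch $D$) are independent of the partition, $V_f$ is a function of $D$ only, and the partition of the $O(\Gamma)$ vertices of $V_f \cup V'$ remains uniform conditioned on that; the algorithm's internal coin flips in Phase~1 are independent of the partition, so conditioning on them is harmless. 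Once this is spelled out, Lemma~\ref{lem:ittai} applies verbatim.
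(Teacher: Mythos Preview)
Your outline is essentially the paper's own argument: define $G_L$ as the subgraph induced by the free edges incident to the still-unmatched set $L\subseteq V_f$, bound its size, and invoke Lemma~\ref{lem:ittai}; you are in fact more thorough than the paper, which spends only two sentences on this and does not separately argue correctness, Invariant~\ref{inv:random}, or the randomness-preservation subtlety you flag.

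There is, however, a concrete slip in your size bound (the paper's text contains the same slip). From $|L|\le 2\Gamma$ and $\free(u)\le 4\Gamma$ for each $u\in L$ you correctly get $|E(G_L)|=O(\Gamma^2)$, but the conclusion ``$G_L$ has $O(\Gamma)$ vertices'' does \emph{not} follow: the free neighbors of $L$ in $V'$ may comprise up to $8\Gamma^2$ \emph{distinct} vertices, so in general $|V(G_L)|=O(\Gamma^2)$. Feeding $n'=\Theta((\beta k)^2)$ into Lemma~\ref{lem:ittai} as a black box then yields $O(\beta k\log(\beta k))$ rounds per mini-batch, not $O(\log(\beta k))$. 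A repair must exploit the structure of $G_L$---one side $L$ of size $O(\Gamma)$, the other an independent set by Lemma~\ref{lem:mm_vertices}(c)---so that only the $O(\Gamma)$ vertices on the $L$-side initiate proposals in each iteration of the matching routine, keeping the per-iteration message count at $O(\beta k)$ and hence each iteration at $O(1)$ rounds. Your separate worry about whether the partition restricted to $V(G_L)$ is still ``random enough'' for Lemma~\ref{lem:ittai} is legitimate and not fully resolved by your sketch (both $V'$ and $L$ depend on the computed matching $M$ and on Phase~1, which in turn depend on the partition), but that issue is orthogonal to the counting gap above.
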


\medskip\noindent\textbf{Adaptive Adversary.}
When the adversary has full knowledge of the vertex partitioning among the players, it may happen that all nodes in $L$ are distributed among a small number of players, thus rendering the vast majority of available resources (i.e., players and bandwidth between them) useless.
To avoid this pitfall, we leverage the assumption that the mini-batch size is only $\lfloor \sqrt{\beta k} \rfloor$.
In the proof of Lemma~\ref{lem:mm_phase3_2}, we show that we can aggregate the entire subgraph $G_L$ at a single player $P_1$, who locally computes a matching, and subsequently informs all other players of the result.

\newcommand{\lemmmphasethreetwo}{Phase~2 ensures Invariant~\ref{inv:random} and takes $O(1)$ rounds against an adaptive adversary, assuming a mini-batch size of $\Gamma= \lfloor \sqrt{k} \rfloor$.}

\begin{lemma} \label{lem:mm_phase3_2}
\lemmmphasethreetwo
\end{lemma}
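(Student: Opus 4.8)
\textbf{Setup.}The plan is to aggregate the entire subgraph $G_L$ at a single player, say $P_1$, let $P_1$ compute a maximal matching on it locally, and then disseminate the result; the point of the small mini-batch size is that $G_L$ is tiny. First I would bound its size: by Lemma~\ref{lem:mm_vertices}(a) we have $|L| \le |V_f| \le 2\Gamma$, and Lemma~\ref{lem:mm_phase2} guarantees that every vertex of $L$ has free degree at most $4\Gamma$ with respect to $M_2$ (those with larger free degree were already matched in Phase~1). Hence, since every edge of $G_L$ is incident to a vertex of $L$, the graph $G_L$ has at most $|L|\cdot 4\Gamma \le 8\Gamma^2 = O(\beta k)$ edges, each describable with $O(\log n)$ bits. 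This is the crux of the argument: the analogous bound would be $\Theta((\beta k)^2)$ for the oblivious mini-batch size $\Gamma=\beta k$, which is exactly why we insist on $\Gamma = O(\sqrt{\beta k})$ against the adaptive adversary.

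Next I would carry out the aggregation. Every player knows $V_f$ (disseminated at the start of the update procedure) and knows $M_2$ by Lemma~\ref{lem:mm_phase2}, so each player can locally identify $L$ as the set of still-unmatched vertices of $V_f$. By Invariant~\ref{inv:random} (which holds for $M_2$ after Phase~1), the player hosting a vertex $v\in L$ knows all neighbors of $v$ together with their matched/unmatched status, hence knows all free edges incident to $v$. Each such player sends these edges to $P_1$; since there are $O(\beta k)$ such edges in total, this takes $O\lt( \lceil |E(G_L)|/(\beta k)\rceil \rt)=O(1)$ rounds. After this, $P_1$ knows all of $G_L$.

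It then remains to show correctness, i.e.\ that extending $M_2$ by any maximal matching $N_3$ of $G_L$ yields a maximal matching $M_3 := M_2\cup N_3$ of $G'$. The argument is: since $M$ was maximal on $G$, the set of $M$-unmatched vertices is independent in $G$, and therefore also in $G'\subseteq G$; deleting the matched edges of $D$ makes exactly the $2\Gamma$ endpoints forming $V_f$ newly unmatched, and Phase~1 only \emph{adds} edges to the matching, so at the end of Phase~1 the set of unmatched vertices is (a subset of the old unmatched set, still independent) together with $L$. Consequently every free edge of $G'$ with respect to $M_2$ has an endpoint in $L$, i.e.\ lies in $G_L$, so a maximal matching of $G_L$ leaves no free edge and $M_3$ is maximal on $G'$. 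Moreover $|N_3| \le |L| \le 2\Gamma = O(\sqrt{\beta k})$, since distinct matched edges touch distinct vertices of $L$. Finally, $P_1$ broadcasts $N_3$ to all players in $O(1)$ rounds (trivially, as $|N_3| = O(\sqrt{\beta k}) \le k$); every player already knew $M_2$ and thus now knows $M_3$, in particular the matched edge incident to each of its hosted vertices and, for each neighbor of a hosted vertex, whether that neighbor is matched — re-establishing Invariant~\ref{inv:random}.

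I expect the only real obstacle to be pinning down the two structural facts above: the edge-count bound $|E(G_L)| = O(\beta k)$ (which is where the $\sqrt{\beta k}$ mini-batch size is used in an essential way) and the independence argument showing that every free edge of $G'$ with respect to $M_2$ is captured by $G_L$. Everything else is routine communication accounting via Lemma~\ref{lem:spreading} and the per-round bandwidth bound of the model.
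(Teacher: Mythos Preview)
Your proposal is correct and follows essentially the same route as the paper: bound $|E(G_L)|\le 8\Gamma^2=O(\beta k)$ from $|L|\le 2\Gamma$ and the free-degree cap of $4\Gamma$ supplied by Lemma~\ref{lem:mm_phase2}, then ship all of $G_L$ to a single player via $\spreading$ in $O(1)$ rounds and let that player compute the matching locally. You are in fact more thorough than the paper, which dismisses correctness as ``immediate'' and does not spell out the independence argument showing every free edge of $G'$ with respect to $M_2$ lies in $G_L$, nor the re-establishment of Invariant~\ref{inv:random}; your treatment of both is sound.
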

\begin{proof}
The correctness of the computed matching is immediate from the fact that $P_1$ locally computes the solution.
Since each node in $L$ has a free degree of at most $4\Gamma\le4\sqrt{\beta k}$ and $|L|\le2\Gamma\le2\sqrt{\beta k}$, it follows that $G_L$ consists of $O(\beta k)$ edges. 
Thus, according to Lemma~\ref{lem:spreading}, player $P_1$ can receive the entire graph in just $O(1)$ rounds by executing $\spreading$. 
\end{proof}

\subsection{Completing the Proof of Theorem~\ref{thm:mm_random}}
The bound on the initialization time follows from Lemma~\ref{lem:mm_init}, which also tells us that Invariant~\ref{inv:random} holds after the initialization step.

For edge insertions, Lemma~\ref{lem:edge_additions} ensures that the update time for handling $\ell$ edge insertions is only $O\lt( \frac{\ell}{\beta k} \rt)$ (for both oblivious and adaptive adversaries). 
Next, we derive the bound on the update time for edge deletions:
Recall that we have $\ell$ updates that we split into mini-batches of size $\Gamma$.
Lemma \ref{lem:mm_phase2} shows that Phases~1 requires $O \lt( \log(\beta k) \rt)$ rounds with high probability.
Finally, Lemma~\ref{lem:mm_phase3_1} shows that under an oblivious adversary, Phase~2 takes $O(\log(\beta k))$ rounds with high probability (in $k$), for mini-batches of size $\beta k$, which yields a total update time of $O\lt( \frac{\ell}{\Gamma} \log k\rt)$.
On the other hand, Lemma~\ref{lem:mm_phase3_2} shows that, under an adaptive adversary, Phase~2 can be done in $O \lt( \log k \rt)$ rounds for mini-batch updates of size $\Theta(\sqrt{k})$, and hence the three phases take at most $O\lt( \frac{\ell}{\Gamma} \log(\beta k)\rt)$ rounds. 
The claimed bounds of Theorem~\ref{thm:mm_random} follow by plugging in $\Gamma=\beta k$ for the oblivious adversary and $\Gamma=\lfloor\sqrt{\beta k}\rfloor$ for the adaptive adversary, respectively.
 
To see that each player $P_i$ uses at most $O\lt(\max\set{n,|G[V(P_i)]|}\cdot\log n\rt)$ bits of local memory, where $|G[V(P_i)]|$ is the number of edges incident to vertices hosted by $P_i$, observe that each player sends and receives at most $1$ message per vertex in every step of the algorithm in Phases~1 and 2, which requires storing at most $O(n\log n)$ bits.
Moreover, each player needs to locally maintain at most $O(\log n)$ bits for each edge incident to its hosted vertices, using at most $O(|G[V(P_i)]|\log n)$ bits.

Finally, note that Lemma~\ref{lem:mm_phase3_1} and Lemma~\ref{lem:mm_phase3_2} guarantee that Invariant~\ref{inv:random} holds at the end of each update step, which ensures the correctness of the computed maximal matching.

\section{Batch-Dynamic Maximal Matching in the Congested Clique Model}\label{sec:dmm-cc}

It is not surprising that our algorithm also work in the congested clique due to the equivalence between the two models when $k=n$ and if each player gets exactly one vertex (see Section~\ref{sec:intro}). 
However, we now show that we can even get dynamic algorithms in the Congested Clique model that handle updates \emph{communication-efficiently}, which means that the number of messages sent grow proportionally to the number of edge-changes. 
To this end, we define the \emph{update message complexity} as the worst case number of messages sent following a batch of $\ell$ edge updates, assuming that the state of the nodes reflects a maximal matching prior to these updates. 
In Appendix~\ref{sec:omitted_cc}, we prove the following result:

\newcommand{\thmcc}{
  Consider the congested clique and suppose the adversary may add or delete up to $\ell$ edges per batch. There exists a randomized dynamic algorithm for maximal matching that, with high probability, has an initialization time of $O(\log \log \Delta)$ rounds. With high probability, the update message complexity is $O(\ell n)$, and the following hold:
\begin{compactenum}
\item The update time complexity is $O\lt (\lceil \frac{\ell}{n} \rceil \log n \rt )$ against an oblivious adversary.
\item The update time complexity is $O\lt (\lceil \frac{\ell}{\sqrt{n}} \rceil \log n \rt )$ against an adaptive adversary.
\end{compactenum}
}
\begin{theorem}\label{thm:cc}
\thmcc
\end{theorem}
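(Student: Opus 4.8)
The plan is to instantiate the algorithm of Theorem~\ref{thm:mm_random} in the special case $k=n$, $\beta=1$, where each player hosts exactly one vertex, and then account carefully for the extra structure that the congested clique provides (namely, that this is a genuine distributed network with $n$ nodes rather than $k \ll n$ players). The update-time bounds fall out immediately from plugging $k=n$, $\beta=1$ into Theorem~\ref{thm:mm_random}: for the oblivious adversary the update time becomes $O(\lceil \ell/n\rceil \log n)$, and for the adaptive adversary $O(\lceil \ell/\sqrt{n}\rceil \log n)$ (using the mini-batch sizes $\Gamma = n$ and $\Gamma = \lfloor\sqrt{n}\rfloor$ respectively, as in Section~\ref{sec:dyn-mm-ran}). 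The two genuinely new claims to establish are (i) the faster initialization time of $O(\log\log\Delta)$ rounds, and (ii) the update message complexity bound of $O(\ell n)$.

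First I would handle initialization. The $O(\lceil n/\beta k\rceil \log n) = O(\log n)$ bound inherited from Lemma~\ref{lem:mm_init} is wasteful in the congested clique, because in the full-bandwidth $n$-node setting one can compute a maximal matching much faster. I would invoke a known congested-clique maximal matching result — e.g., running a maximal-matching/MIS-type routine on the line graph is still too costly, so instead I would cite the $O(\log\log\Delta)$-round algorithm for maximal matching in the congested clique (this matches the stated bound and is exactly the kind of result available in the congested-clique literature, e.g.\ via the Behnezhad–Hajiaghayi–Knittel–Saleh-type or the Ghaffari-style sparsification approach). After this routine, every node knows its matched edge and, by one extra round of communication with all neighbors, whether each neighbor is matched — this re-establishes Invariant~\ref{inv:random}. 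The message complexity of initialization is not constrained by the theorem statement, so only the round complexity matters here.

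Next, the update message complexity. The key observation is that each mini-batch touches only $O(\Gamma)$ edges and hence only $O(\Gamma)$ vertices lie in $V_f$, while $V'$ (their free neighbors) is an independent set by Lemma~\ref{lem:mm_vertices}(c). In Phase~1, each of the $O(\log(\beta k)) = O(\log n)$ iterations has each node in $T$ (with $|T| = O(\Gamma)$) send $O(1)$ match-up messages, plus broadcasts of the $O(\Gamma)$ newly matched edges and the count $|T|$ — but a broadcast in the congested clique costs $n$ messages. So a naive count gives $O(\Gamma n \log n)$ per mini-batch, over $\ell/\Gamma$ mini-batches, i.e.\ $O(\ell n \log n)$ — a log factor too many. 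To shave the log, I would argue that the total number of \emph{distinct} nodes ever placed in $T$ across all iterations of a single mini-batch is still $O(\Gamma)$ (a node leaves $T$ permanently once matched or once its free degree drops), so the broadcasts of newly matched nodes sum to $O(\Gamma n)$ over the whole mini-batch rather than per iteration; the match-up request traffic is $O(\Gamma)$ messages per iteration times $O(\log n)$ iterations, which is $O(\Gamma \log n) = O(n)$ and thus negligible against $O(\Gamma n)$. Phase~2 (aggregating $G_L$, which has $O(\Gamma)$ edges, at $P_1$ and broadcasting the result) costs $O(\Gamma n)$ messages. Summing over $\ell/\Gamma$ mini-batches gives $O(\ell n)$. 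Edge insertions are handled by Lemma~\ref{lem:edge_additions}: collecting $O(\ell)$ endpoints at $P_1$ and broadcasting the $O(\ell)$-edge matching $M'$ costs $O(\ell n)$ messages as well.

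The main obstacle I anticipate is exactly the logarithmic slack in the message-complexity analysis: getting $O(\ell n)$ rather than $O(\ell n \log n)$ requires a careful amortized/global accounting of which broadcasts happen over the course of a mini-batch, using that a vertex is removed from $T$ at most once and that the sampled subgraph $S$ in Lemma~\ref{lem:mm_phase2_2} has only $O(\log^3 n) = O(n)$ edges total. A secondary subtlety is verifying that the $O(\log\log\Delta)$-round initialization subroutine indeed re-establishes Invariant~\ref{inv:random} with only an additive $O(1)$ rounds of neighbor synchronization, which is routine. I would also double-check that the $k = \Omega(\log^4 n)$ technical assumption from Section~\ref{sec:dyn-mm-ran} is satisfied here since $k = n$, so no degenerate cases arise.
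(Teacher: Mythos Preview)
Your proposal is correct and follows essentially the same route as the paper: instantiate Theorem~\ref{thm:mm_random} with $k=n$, $\beta=1$ for the update-time bounds, replace the initialization by an $O(\log\log\Delta)$-round congested-clique maximal matching (the paper does this by simulating the MPC algorithm of \cite{behnezhad2019exponentially} via Lenzen routing), and argue that the update message complexity is governed by broadcasting the $O(\ell)$ status changes at $O(n)$ messages each. Your message-complexity accounting is in fact more careful than the paper's own proof, which simply asserts that maintaining Invariant~\ref{inv:random} (informing neighbors of matched/unmatched status) costs $O(\ell n)$ messages and that this dominates all other traffic; your amortized observation that the total number of newly matched vertices across all Phase~1 iterations of a mini-batch is $O(\Gamma)$ is exactly what is needed to make that assertion rigorous and avoid the spurious $\log n$ factor you flagged.
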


We emphasize that the message complexity of the initialization algorithm in Theorem~\ref{thm:cc} can be as large as $O(n^2)$. 
In fact, obtaining a message complexity of $o(n^2)$ for maximal matching in the congested clique is itself an interesting open problem to the best of our knowledge.

\bibliography{ref}

\appendix

\section{Omitted Part from~Section~\ref{sec:dmm-cc} } \label{sec:omitted_cc}

For the initialization phase, we need to compute a maximal matching on the input graph. Our method is to transform the existing maximal matching algorithm of \cite{behnezhad2019exponentially} from the MPC model to the congested clique:

\begin{theorem}[\cite{behnezhad2019exponentially}]\label{thm:mpc-mm}
    Given an $n$-vertex graph $G$ with $m$ edges and maximum degree $\Delta$, there exists a randomized MPC algorithm for computing a maximal matching that takes $O(\log \log \Delta)$ rounds using $O(n)$ space per machine. The algorithm succeeds with probability $1-e^{-n^{\Omega(1)}}$ and requires an optimal total space $O(m)$.
\end{theorem}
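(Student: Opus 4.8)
This is a result of \cite{behnezhad2019exponentially}; I sketch the approach I would take to establish it. The plan is to avoid materializing the line graph $\mathcal{L}_G$ (which has $m$ vertices and is far too large for $O(n)$ space per machine) and instead work directly with the \emph{random greedy maximal matching} (RGMM) process: assign to every edge $e$ an independent rank $r_e$ uniform in $[0,1]$, process edges in increasing rank order, and add $e$ to the matching iff both endpoints are currently unmatched. This yields a maximal matching, and, crucially, its outcome restricted to the prefix of edges with $r_e \le p$ is a function only of that prefix, so machines examining different prefixes reach consistent conclusions and the process can be extended incrementally.

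The technical heart is a \emph{residual sparsity lemma}: if $G_{>p}$ denotes the subgraph induced by the vertices still unmatched after RGMM has processed all edges of rank $\le p$, then with probability $1 - e^{-n^{\Omega(1)}}$ the maximum degree of $G_{>p}$ is $O(\log(n)/p)$. I would prove this by a union bound over ``certificates'' of survival: for a vertex $v$ to remain unmatched with $d'$ unprocessed incident edges, each of those edges must have rank $>p$ (probability $(1-p)^{d'}$, independent across them) and every incident edge of rank $\le p$ must have been blocked by a previously matched endpoint; bounding the latter via a dependency/witness-tree argument in the spirit of the tight analyses of randomized greedy, and multiplying in the independent contribution, shows the survival probability decays like $e^{-\Omega(p d')}$ once $d' = \omega(\log n / p)$. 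Summing over $v$ and over the polynomially many relevant values of $d'$ gives the bound; the sharpened $e^{-n^{\Omega(1)}}$ guarantee comes from the fact that the survival probability is exponentially small in the residual degree and every ``dangerous'' degree arising in the algorithm is $n^{\Omega(1)}$.

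Given the sparsity lemma, the algorithm runs in $O(\log\log\Delta)$ phases with a doubly-exponential schedule $p_i \approx \Delta^{-1/2^i}$ (so $p_1 \approx 1/\sqrt{\Delta}$, $p_{i+1} = \sqrt{p_i}$, and $p_i = \Omega(1)$ after $O(\log\log\Delta)$ phases). At the start of phase $i$ the residual graph $R_{i-1} = G_{>p_{i-1}}$ has maximum degree $d_i = \tilde{O}(1/p_{i-1})$ by the lemma; phase $i$ extends the process to rank $p_i$, which amounts to computing RGMM on the ``window'' graph $H_i$ formed by the edges of $R_{i-1}$ with rank in $(p_{i-1},p_i]$. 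A Chernoff bound shows $H_i$ has maximum degree $\tilde{O}(d_i \cdot p_i) = \tilde{O}(\sqrt{d_i})$ with probability $1 - e^{-n^{\Omega(1)}}$, so degrees drop by a square root per phase, reaching $\polylog(n)$ after $O(\log\log\Delta)$ phases; a final clean-up phase then gathers the now-sparse residual graph and finishes in $O(1)$ rounds. Correctness follows because each phase processes a contiguous block of ranks, and RGMM over $[0,p_i]$ equals RGMM over $[0,p_{i-1}]$ followed by RGMM of the residual over $(p_{i-1},p_i]$, so the union of the per-phase matchings is exactly the (maximal) RGMM of $G$.

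The main obstacle, where I expect to spend most of the effort, is realizing each phase in $O(1)$ MPC rounds while respecting $O(n)$ space per machine and $O(m)$ total space. The window $H_i$ has $\tilde{O}(n\sqrt{d_i})$ edges, which does not fit on a single machine in the early phases, so one cannot simply gather and solve it; instead I would process $H_i$ by a recursive sub-sampling that re-invokes the residual-sparsity/RGMM machinery at the smaller degree scale $\tilde{O}(\sqrt{d_i})$, choosing the sampling rates so that each machine ever holds only $O(n)$ edges of the currently active subgraph and the recursion contributes only $O(1)$ rounds per outer phase (the recursion is ``flattened'' into the outer loop so the total round count stays $O(\log\log\Delta)$ rather than blowing up to $O(\log\Delta)$ or $O((\log\log\Delta)^2)$). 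Making this accounting precise -- the interplay of the degree-halving-in-the-exponent schedule, the sampling rates, the per-machine space bound, and high-probability concentration at every level -- is the delicate part; everything else (the greedy process itself, the base case, and the union bounds over phases and vertices) is comparatively routine.
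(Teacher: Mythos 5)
This statement is not proved in the paper at all: it is imported verbatim as a black-box citation of \cite{behnezhad2019exponentially}, so there is no in-paper proof to compare your attempt against. Your sketch is, at a high level, a faithful outline of how that cited work actually proceeds --- random-greedy maximal matching with uniform edge ranks, a residual-sparsity lemma bounding the degree of the unmatched subgraph after processing a rank prefix by roughly $O(\log n/p)$, and a doubly-exponential rank schedule giving $O(\log\log\Delta)$ phases --- and you correctly identify the genuinely delicate part (realizing each phase in $O(1)$ rounds under the $O(n)$ per-machine space constraint, which the cited paper handles via random vertex partitioning so that each machine's induced subgraph fits in memory while the local greedy outcomes remain consistent with the global process). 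As a substitute for citing the theorem your text is only a sketch, but as a reconstruction of the cited argument it is on target.
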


By Theorem~\ref{thm:mpc-mm}, we can obtain a maximal matching in the Congested Clique model.
\begin{lemma}\label{lem:cc-mm}
    Given an $n$-vertex graph $G$ with $m$ edges and max degree $\Delta$, there exists a randomized algorithm in the Congested Clique model for computing a maximal matching that takes $O(\log \log \Delta)$ rounds with probability $1-e^{-n^{\Omega(1)}}$.
\end{lemma}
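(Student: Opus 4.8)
The plan is to obtain the maximal matching by simulating the MPC algorithm of Theorem~\ref{thm:mpc-mm} in the congested clique. That algorithm runs in $O(\log\log\Delta)$ MPC rounds using $O(n)$ words of space per machine and optimal total space $O(m)$, so it uses $M = O(m/n)$ machines; since $m \le \binom{n}{2} < n^2$ we have $M = O(n)$, and I would let each of the $n$ congested clique nodes simulate $O(1)$ of these machines, so that each node keeps track of $O(n)$ words of simulated state.

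The simulation would proceed in three parts. First, an input-redistribution phase: initially node $v$ of the congested clique knows only the at most $\Delta \le n-1$ edges incident to $v$, whereas the MPC algorithm expects the $m$ edges spread (in whatever balanced layout it requires) over the $M$ machines with $O(n)$ edges apiece; since in this task every node is the source of $O(n)$ messages and the destination of $O(n)$ messages, the standard deterministic constant-round routing scheme of the congested clique (which realizes any message pattern with per-node load $O(n)$ in $O(1)$ rounds) delivers everything in $O(1)$ rounds, and $\Delta$ can be made common knowledge in $O(1)$ rounds as well. Second, a round-by-round simulation: in one MPC round each machine sends and receives at most $O(n)$ words, which matches up to constants the per-round bandwidth of a congested clique node, so the node simulating a machine performs that machine's local computation and the resulting inter-machine traffic is again routed in $O(1)$ congested clique rounds; hence one MPC round costs $O(1)$ congested clique rounds and the whole execution costs $O(\log\log\Delta)$ rounds. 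Third, an output phase: on termination the at most $n/2$ matched edges sit at the simulating nodes, and a single call to $\spreading$ (Lemma~\ref{lem:spreading}) lets every node learn all of them in $O(1)$ rounds, after which each node knows its own matched edge (if any) and which of its neighbors are matched.

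Combining the parts yields a round complexity of $O(\log\log\Delta)$, and since the simulation faithfully reproduces the MPC execution, the output is a maximal matching whenever the MPC algorithm succeeds, i.e.\ with probability $1 - e^{-n^{\Omega(1)}}$ by Theorem~\ref{thm:mpc-mm}. I expect the main obstacle to be the communication bookkeeping in the second part: one must verify that the MPC machine count $O(m/n)$ is indeed at most $n$ (using $m < n^2/2$, so that the hidden constant can be absorbed by having each node simulate $O(1)$ machines) and, more importantly, that because every MPC machine both sends and receives only $O(n)$ words per round, the induced message loads stay balanced enough for the constant-round congested-clique routing primitive to apply; the degenerate regime of tiny $\Delta$, where a maximal matching is immediate, would be dispatched separately.
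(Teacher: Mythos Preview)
Your proposal is correct and takes essentially the same approach as the paper: simulate the $O(n)$-space-per-machine MPC algorithm of Theorem~\ref{thm:mpc-mm} by assigning the $O(m/n)\le O(n)$ machines to congested clique nodes and using Lenzen's constant-round routing/sorting to realize both the initial edge distribution and each MPC communication round. The paper is terser about the per-round simulation and the output phase, but the underlying idea is identical.
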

\begin{proof}
    By Theorem~\ref{thm:mpc-mm}, there is a randomized algorithm that can find a maximal matching in MPC model with $O(n)$ space per machine within $O(\log \log \Delta)$ rounds with probability $1-e^{-n^{\Omega(1)}}$. Now, we simulate that algorithm in the Congested Clique model as follows: 
    We fix some ordering of the nodes, and then select the first $O(\frac{m}{n})$ nodes as \emph{worker nodes}, which will perform the same functionality as the $O(\frac{m}{n})$ machines in the MPC model. 
Recall that in the Congested Clique model, there is a sorting algorithm using constant rounds \cite{lenzen2013optimal}. All nodes execute the sorting algorithm to obtain a sorted ordering of the edges. Then, for a suitable constant $c_1>0$, the nodes send the $i$-th group of $c_1n$ edges in that order to the $i$-th worker node, for all $i$. 
This is sufficient to simulate the randomized algorithm stated in Theorem~\ref{thm:mpc-mm} on the worker nodes, which takes $O(\log \log \Delta)$ rounds.
\end{proof}

For handling updates, we directly use our techniques from Section~\ref{sec:dyn-mm-ran}.

\begin{reptheorem}{thm:cc}
     \thmcc  
\end{reptheorem}

\begin{proof}
    By Lemma~\ref{lem:cc-mm}, we can find a maximal matching in the Congested Clique model within $O(\log \log \Delta)$ rounds. The update time complexities for an oblivious adversary and an adaptive adversary can be almost directly obtained from Theorem~\ref{thm:mm_random}. 
   
 {  The bottleneck for the update message complexity is the requirement that each node needs to know the states of its neighbors. Since in the initialization phase, each node knew the states of its neighbors, after~$\ell$ updates, we only need to use $O(\ell n)$ messages to update these information and this operation dominates the time complexity of other operations during our procedures. So, the update time complexity is proved.}
\end{proof}
\section{Basic Definitions from Information Theory}
\label{sec:tools}

In this section, we briefly recall some basic definitions and results from the information theory that we use in the paper.
Assume that $X$, $Y$, and $Z$ are discrete random variables.
Throughout, we use uppercase letters to denote random variables and corresponding lowercase letters to represent their values.

\begin{definition} \label{def:entropy}
  The \emph{entropy of $X$} is defined as
  \begin{align}
    \HH[ X ] = \sum_x \Pr[ X \!=\! x] \log_2(1 /\Pr[ X \!=\! x]). \label{eq:entropy}
  \end{align}
  The \emph{conditional entropy of $X$ conditioned on $Y$} is given by
  \begin{align}\label{eq:conditional_entropy}
    \HH[ X \mid Y ] &= \EE_y[ \HH[ X \mid Y \!=\! y] ] \\
      &= \sum_{y}^{} \Pr[ Y \!=\! y] \HH[ X \mid Y \!=\! y].\notag
  \end{align}
\end{definition}
Note that 
\begin{align}
	\HH\lt[ X \rt] \le \log_2\lt( \text{supp}(X) \rt),
\label{eq:uniform}
\end{align}
  where $\text{supp}(X)$ denotes the support of $X$, and equality is attained in \eqref{eq:uniform} if and only if $X$ is uniformly distributed on $\text{supp}(X)$.

\begin{definition} \label{def:mutual}
  Let $X$, $Y$, and $Z$ be discrete random variables.
  The \emph{conditional mutual information of $X$ and $Y$} is defined as
  \begin{align}
    \II[ X : Y \mid Z ]
      &= \HH[ X \mid Z ] - \HH[ X \mid Y, Z ] \label{eq:mutual_prop2}.
  \end{align}
\end{definition}

\begin{lemma} \label{lem:mutual_entropy}
  $\II[ X : Y \mid Z ] \le \HH[ X \mid Z ] \le \HH[ X ]$.
\end{lemma}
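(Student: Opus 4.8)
The plan is to derive both inequalities from the elementary properties of (conditional) entropy recorded in Definitions~\ref{def:entropy} and~\ref{def:mutual}, handling them in turn. For the first inequality, $\II[X:Y\mid Z]\le\HH[X\mid Z]$, I would start from the defining identity~\eqref{eq:mutual_prop2}, which reads $\II[X:Y\mid Z]=\HH[X\mid Z]-\HH[X\mid Y,Z]$, so that it suffices to argue that conditional entropy is non-negative, i.e.\ $\HH[X\mid Y,Z]\ge0$. By~\eqref{eq:conditional_entropy} this quantity is a convex combination over the values $(y,z)$ of the entropies $\HH[X\mid Y\!=\!y,Z\!=\!z]$, each of which by~\eqref{eq:entropy} is a sum of terms $\Pr[X\!=\!x\mid\cdots]\log_2(1/\Pr[X\!=\!x\mid\cdots])\ge0$, adopting the usual convention $0\log_2(1/0)=0$. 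Subtracting a non-negative quantity from $\HH[X\mid Z]$ therefore yields the bound.

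For the second inequality, $\HH[X\mid Z]\le\HH[X]$ (``conditioning does not increase entropy''), I would expand the difference into the mutual information between $X$ and $Z$,
\[
\HH[X]-\HH[X\mid Z]=\sum_{x,z}\Pr[X\!=\!x,Z\!=\!z]\,\log_2\frac{\Pr[X\!=\!x,Z\!=\!z]}{\Pr[X\!=\!x]\,\Pr[Z\!=\!z]},
\]
where the sum ranges over pairs $(x,z)$ of positive joint probability, and then show that the right-hand side is non-negative. The key step is Jensen's inequality applied to the concave function $\log_2$: negating the sum and pulling the logarithm outside gives
\begin{align*}
-\bigl(\HH[X]-\HH[X\mid Z]\bigr)
 &=\sum_{x,z}\Pr[X\!=\!x,Z\!=\!z]\log_2\frac{\Pr[X\!=\!x]\,\Pr[Z\!=\!z]}{\Pr[X\!=\!x,Z\!=\!z]}\\
 &\le\log_2\!\left(\sum_{x,z}\Pr[X\!=\!x]\,\Pr[Z\!=\!z]\right)\le\log_2 1=0.
\end{align*}
Hence $\HH[X\mid Z]\le\HH[X]$, and chaining the two inequalities gives the stated chain $\II[X:Y\mid Z]\le\HH[X\mid Z]\le\HH[X]$.

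There is no genuine obstacle here, as the claim is a textbook fact; the only point requiring care is the bookkeeping around zero-probability outcomes. One restricts every sum to the support of the relevant variables and uses the conventions $0\log_2 0=0$ and $0\log_2(a/0)=0$, so that the appeal to~\eqref{eq:conditional_entropy}, the term-by-term non-negativity argument, and the Jensen step are all well defined. With these conventions in place the verification is routine.
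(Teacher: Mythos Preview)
Your derivation is correct and is the standard textbook argument. Note, however, that the paper does not actually prove this lemma: it appears in the appendix of basic information-theoretic definitions and is stated without proof as a well-known fact, so there is no ``paper's own proof'' to compare against. Your write-up would serve perfectly well as a self-contained justification should one be desired.
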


\begin{lemma}[Data Processing Inequality, see Theorem 2.8.1 in \cite{cover1999elements}] \label{lem:dataprocessing}
  If random variables $X$, $Y$, and $Z$ form the Markov chain $X \ra Y \ra Z$, i.e., the conditional distribution of $Z$ depends only on $Y$ and is conditionally independent of $X$, then
  \[
    \II[ X : Y ] \ge \II[ X : Z ].
  \]
\end{lemma}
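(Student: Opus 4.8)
The plan is to establish $\II[X:Y] \ge \II[X:Z]$ for a Markov chain $X \ra Y \ra Z$ by decomposing the mutual information between $X$ and the pair $(Y,Z)$ in two different ways via the chain rule, and then exploiting the Markov property together with the nonnegativity of conditional mutual information. This is the standard argument; none of the ingredients goes beyond the definitions recalled in Definitions~\ref{def:entropy} and~\ref{def:mutual}.

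The first step is to record the chain rule for mutual information in two forms. Using the definition $\II[A:B\mid C] = \HH[A\mid C] - \HH[A\mid B,C]$ (Definition~\ref{def:mutual}) and the chain rule for conditional entropy $\HH[A,B\mid C] = \HH[A\mid C] + \HH[B\mid A,C]$ — which itself is immediate from the definition of (conditional) entropy in Definition~\ref{def:entropy} — one obtains
\[
\II[X : (Y,Z)] = \II[X:Z] + \II[X:Y\mid Z] = \II[X:Y] + \II[X:Z\mid Y].
\]
Each equality is a short symbolic manipulation: expand $\HH[X] - \HH[X\mid Y,Z]$, adding and subtracting $\HH[X\mid Z]$ to get the first form, and adding and subtracting $\HH[X\mid Y]$ to get the second.

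The second step is to invoke the defining property of the Markov chain $X \ra Y \ra Z$: conditioned on $Y$, the variables $X$ and $Z$ are independent. This gives $\HH[Z\mid X,Y] = \HH[Z\mid Y]$, hence $\II[X:Z\mid Y] = \HH[Z\mid Y] - \HH[Z\mid X,Y] = 0$ (using the symmetric form of conditional mutual information, again immediate from the definitions). Substituting into the displayed identity yields $\II[X:Y] = \II[X:Z] + \II[X:Y\mid Z]$.

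The final step is to note $\II[X:Y\mid Z]\ge 0$: for each fixed value $z$ of $Z$, conditioning cannot increase entropy, so $\HH[X\mid Y, Z\!=\!z] \le \HH[X\mid Z\!=\!z]$; averaging over $z$ using Definition~\ref{def:entropy} gives $\HH[X\mid Y,Z] \le \HH[X\mid Z]$, i.e., $\II[X:Y\mid Z]\ge 0$. Combining the three steps yields $\II[X:Y] \ge \II[X:Z]$, as claimed. There is no genuine obstacle here; the only points that warrant care are using the Markov property in the correct direction — so that it is $\II[X:Z\mid Y]$ that vanishes rather than $\II[X:Y\mid Z]$ — and, for full rigor, noting that all entropies involved are finite (automatic here since all random variables are discrete), which legitimizes the algebraic rearrangements above.
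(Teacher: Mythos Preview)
Your proof is correct and is the standard argument (essentially the one in Cover and Thomas, Theorem~2.8.1). The paper itself does not give a proof of this lemma; it merely states it with a citation, so there is nothing to compare against beyond noting that your chain-rule decomposition plus the Markov property is exactly the textbook route.
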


\newpage

\end{document}